\newtheorem{thm}{Theorem}
\newtheorem{Lem}{Lemma}
\newtheorem{Def}{Definition}
\newtheorem{Eg}{Example}
\newtheorem{Rem}{Remark}
\DeclareMathOperator{\Var}{Var}
\DeclareMathOperator{\Cov}{Cov}
\DeclareMathOperator{\Corr}{Corr}
\title{Spatio-temporal Ornstein-Uhlenbeck processes: \\ theory, simulation and statistical inference}
\author{Michele Nguyen \footnote{Address correspondence to: M. Nguyen, Department of Mathematics, Imperial College London, 180 Queen's Gate, SW7 2AZ London, UK.} \\
	Imperial College London \\
	michele.nguyen09@imperial.ac.uk
	\and 
	Almut E. D. Veraart \\
	Imperial College London \\
	 a.veraart@imperial.ac.uk
	}
\providecommand{\keywords}[1]{\textbf{\textit{Keywords:}} #1}
\date{}
\begin{document}

\maketitle

\begin{abstract}
Spatio-temporal modelling is an increasingly popular topic in Statistics. Our paper contributes to this line of research by developing the theory, simulation and inference for a spatio-temporal Ornstein-Uhlenbeck process. We conduct detailed simulation studies and demonstrate the practical relevance of these processes in an empirical study of radiation anomaly data. Finally, we describe how predictions can be carried out in the Gaussian setting.
\end{abstract}

\keywords{Spatio-temporal modelling, Ornstein-Uhlenbeck processes, stochastic simulation, moments-based inference.}

\section{Introduction}
\label{sec:Intro}

Due to advances in storage and computational efficiencies, more data with spatial and temporal information are being collected and shared. Taking the view that a pure temporal or spatial analysis of such data is insufficient, many scientists have proposed statistical models to study the spatio-temporal interactions and dependencies (see for example, Section 6.7 in \cite{CW2011} for an overview, \cite{BSR2012}, \cite{SH2012} and \cite{DKS2013}). We contribute to this research area by extending the theory of a spatio-temporal L\'evy-driven Ornstein-Uhlenbeck ($\mathrm{OU}$) process, and pioneering its simulation and inference.
\\
This spatio-temporal $\mathrm{OU}$ process, which is referred to as the $\mathrm{OU}_{\wedge}$ process in \cite{BN2004}, can be written as:
\begin{equation}
Y_{t}(\mathbf{x}) = \int_{A_{t}(\mathbf{x})} \exp(-\lambda(t-s)) L(\mathrm{d}\xi, \mathrm{d}s),
\label{eqn:OUh}
\end{equation}
where $\{Y_{t}(\mathbf{x}): \mathbf{x}\in\mathcal{X}, t\in\mathcal{T}\}$ is a random field in space-time $S = \mathcal{X} \times \mathcal{T}$. Usually, we have $\mathcal{X} = \mathbb{R}^{d}$ for some $d \in \mathbb{N}$ and $\mathcal{T} = \mathbb{R}$. Similar to the classical $\mathrm{OU}$ process, $\lambda > 0$ acts as a rate parameter. However, to cope with the new spatial dimension, we no longer integrate a L\'evy process over $(-\infty, t]$; instead, we integrate a homogeneous L\'evy basis $L$ over a set $A_{t}(\mathbf{x}) \subset S$ (with spatial and temporal integrating variables $\xi$ and $s$). The process is well-defined if the integral exists in the sense of the $\mathcal{L}_{0}$ integration theory \cite[]{RR1989}. A summary of this is given in Section 2 of the supplementary material provided in \cite{NV2015}. 
\\
The set $A_{t}(x)$ can be interpreted as the region in space-time that influences the field value at $(\mathbf{x}, t)$. This is in line with the interpretation of an ambit set (see for example \cite{BBV2012}). For an $\mathrm{OU}_{\wedge}$ process, we require that $A_{t}(\mathbf{x}) = A_{0}(0) + (\mathbf{x}, t) \subset S$ for translation invariance. Furthermore:
\begin{equation}
A_{s}(\mathbf{x}) \subset A_{t}(\mathbf{x}) \text{, } \forall \text{ } s < t, \text{ and } A_{t}(\mathbf{x}) \cap (\mathcal{X} \times (t, \infty)) = \emptyset. \label{eqn:ambitset}
\end{equation}
This implies that $A_{t}(\mathbf{x})$ has a temporal component of $(-\infty, t]$ just like the classical case. These conditions on the integrating set also give the $\mathrm{OU}_{\wedge}$ process several characteristic properties of the $\mathrm{OU}$ process: stationarity, Markovianity and an exponentially decaying autocorrelation function (ACF). 
\\
In the literature, there are other definitions of spatio-temporal $\mathrm{OU}$ processes. When the driving L\'evy noise is restricted to be Gaussian and $\mathcal{X} = \mathbb{R}$, a spatio-temporal process can be formed from the product of a temporal $\mathrm{OU}$ process with a spatial one \cite[]{TLB2004}. This is equal to a temporal $\mathrm{OU}$ process when the spatial component is fixed and a spatial $\mathrm{OU}$ process when the temporal component is fixed. Although this model features exponentially decaying temporal and spatial autocorrelation, one limitation is that the spatio-temporal autocorrelation is separable. 
\\
Alternatively, if we discretise space, we can create a spatio-temporal $\mathrm{OU}$ process by considering a multivariate $\mathrm{OU}$ process where each vector component of the process corresponds to a spatial location. Such an approach can be found in \cite{BD2001}. Spatial information is contained in the covariance matrix of the driving Brownian motion. Although this also results in a separable spatio-temporal autocorrelation, the spatial autocorrelation is comparatively flexible to that obtained by the previous method. 
\\
A further step in this direction would be to discretise the time domain. In this case, the $\mathrm{OU}$ process can be represented as an autoregressive (AR) model. A three-stage iterative procedure for the space-time modelling of autoregressive moving averages (ARMA) models can be found in \cite{PD1980}. The spatio-temporal autocorrelation in this case is defined differently and involves spatial neighbours of different ``orders".  
\\
The $\mathrm{OU}_{\wedge}$ process in (\ref{eqn:OUh}) has four key advantages over these alternative models. Firstly, it accommodates non-Gaussian driving noise which may be more appropriate in practice. Secondly, being a model in continuous time and space, it allows us to work with data of varying temporal and spatial scales. Thirdly, it allows for non-separable autocorrelation structures defined in the usual way. Finally, its integration set helps us identify the influence region for a particular time and space location. This could help scientists better understand the spatio-temporal interactions in real-life phenomena. 
\\
A wide class of $\mathrm{OU}_{\wedge}$ processes for $\mathcal{X} = \mathbb{R}$, which we shall refer to as the $g$-class, is given by:
\begin{equation}
Y_{t}(x) = \int_{-\infty}^{t}\int_{x - g(|t-s|)}^{x + g(|t-s|)} \exp(-\lambda(t-s)) L(\mathrm{d}\xi, \mathrm{d}s),
\label{eqn:gclass}
\end{equation} 
where $g$ is a non-negative strictly increasing continuous function on $[0, \infty)$. Now, $A_{t}(x) = \{(\xi, s): s\leq t, x-g(|t-s|)\leq \xi \leq x + g(|t-s|)\}$. The case $g(|t-s|) = c|t-s|$ with $c > 0$ is particularly interesting as it is related to an inhomogeneous stochastic wave equation (see for example, \cite{Dalang2009} and Chapter 5 of \cite{Chow2007}). When $L$ is Gaussian, $Z_{t}(x) = \exp(\lambda t) Y_{t}(x)$ is a mild solution of:
\begin{align*}
\frac{\partial^{2}Z}{\partial t^{2}} = c^{2}\frac{\partial^{2}Z}{\partial x^{2}} + \exp(\lambda t) \dot{L}(x, t), \text{ with }\lim_{t\rightarrow - \infty} Z_{t}(x) = 0, \text{ and } \lim_{t\rightarrow - \infty} \frac{\partial Z}{\partial t} = 0.
\end{align*}
Here, $c$ can be interpreted as the wave speed while the non-linear term $\exp(\lambda t)$ can be seen as the source function which describes how the sources of the waves affect the medium through which they travel. This effect is randomly perturbed by the Gaussian noise $\dot{L}$. This linear choice of $g$ is also the canonical example of an $\mathrm{OU}_{\wedge}$ process. The triangular shape of the integrating set for fixed $x$ and t is thought to be behind the ``$\wedge$" in the process's name.

\paragraph{Outline}
We begin in Section \ref{sec:prelim} by providing the theoretical background of the $\mathrm{OU}_{\wedge}$ process and the definitions of concepts to be used later. In Section \ref{sec:prop}, we review the process's known attributes as mentioned in \cite{BN2004} and derive new properties including spatio-temporal stationarity, temporal and spatial ergodicity, and autocorrelation structures. Sections \ref{sec:simalg} and \ref{sec:siminfer} look at how we can utilise these for simulation and inference respectively. We develop two algorithms for the canonical $\mathrm{OU}_{\wedge}$ process based on discrete convolutions. While one algorithm simulates on rectangular grids, another simulates on diamond-shaped grids. These will be compared and used in simulation studies with two estimation methods. The first is a spatio-temporal extension of the moments-matching inference method in \cite{KAKE2013}, and the second a least-squares approach to involve more lags of the normalised variograms. In Section \ref{sec:app}, we fit the canonical process to radiation anomaly data from the International Research Institute for Climate and Society to illustrate how $\mathrm{OU}_{\wedge}$ processes can be used in practice. We also provide theoretical results for prediction in the Gaussian scenario. The paper concludes with a summary of our key findings and an outlook on future research in Section \ref{sec:outlook}.

\section{Preliminaries}
\label{sec:prelim}

We have seen in (\ref{eqn:OUh}) that an $\mathrm{OU}_{\wedge}$ process is defined in terms of a stochastic integral over a subset of $S$ with respect to a L\'evy basis. To understand this more concretely, we need to set out several notations. Let $\mathcal{S}$ be the Borel $\sigma$-algebra on our space-time $S$ and $\mathcal{B}_{b}(S) = \{E \in \mathcal{S}: \lambda_{d+1}(E) < \infty\}$ where $\lambda_{d+1}$ denotes the Lebesgue measure on $\mathcal{B}(\mathbb{R}^{d+1})$. Throughout this paper, we work in the probability space $(\Omega, \mathcal{F}, P)$. A L\'evy basis is defined as follows \cite[]{BBV2012, Sato2007}: 
\\
\begin{Def}[L\'evy basis] \hfill \\
$L$ is a L\'evy basis on $(S, \mathcal{S})$ if it is an independently scattered and infinitely divisible random measure. This means that it satisfies the following conditions:
\begin{enumerate}
\item $L$ is a set of $\mathbb{R}$-valued random variables $\{L(E): E \in \mathcal{B}_{b}(S)\}$. Let $\{E_{i}: i \in \mathbb{N}\}$ be a sequence of disjoint elements of $\mathcal{B}_{b}(S)$. Then:
\begin{itemize}
\item for $\bigcup_{j = 1}^{\infty} E_{j} \in \mathcal{B}_{b}(S)$, $L(\bigcup_{j = 1}^{\infty} E_{j}) = \sum_{j = 1}^{\infty} L(E_j)$  almost surely;
\item furthermore, $L(E_{i})$ and $L(E_{j})$ are independent for $i\neq j$. 
\end{itemize}
\item For any finite collection $B_{1}, ..., B_{m}$ of elements of $\mathcal{B}_{b}(S)$, the random vector $\mathbf{L} = (L(B_{1}), ..., L(B_{m}))$ is infinitely divisible. That is, for any $n \in \mathbb{N}$, there exists a law $\mu_{n}$ such that the law of $\mathbf{L}$ can be expressed as $\mu = \mu_{n}^{*n}$, the n-fold convolution of $\mu_{n}$ with itself.
\end{enumerate}
\end{Def}
\vspace{4mm}
\begin{Rem}
The realisation of the random measures defined here are not in general measures because they need not have finite variation. The reader is advised to note that there are other definitions of random measures in the literature.
\end{Rem}
\vspace{4mm}
\begin{Def}[Cumulant generating function and cumulants] \label{defn:CGF}\hfill \\
Let $C\{\theta \ddagger Z\} = \log\mathbb{E}\left[\exp\left(i\theta Z\right)\right]$ denote the cumulant generating function (CGF) of a random variable $Z$. This means that $C\{\theta \ddagger Z\}$ is the unique complex-valued continuous function on $\mathbb{R}$ such that $C\{0 \ddagger Z\} = 0$ and $\exp(C\{\theta \ddagger Z\}) = \mathbb{E}[\exp(i\theta Z)]$. More information on the definition of a distinguished logarithm can be found on page 33 of \cite{Sato1999}. The cumulants of $Z$, $\kappa_{l}(Z)$, for $l = 1, 2, ...$, are defined through: $C\{\theta \ddagger Z\} = \sum_{l = 1}^{\infty} \kappa_{l}(Z)(i\theta)^{l}/l!$.
\end{Def}
\vspace{4mm}
\begin{Def}[Homogeneous L\'evy basis and its seed] \label{defn:lseed}\hfill \\
A L\'evy basis $L$ is homogeneous if there exists a random variable $L'$ such that $C\{\theta \ddagger L(E)\} = \int_{E} C\{\theta \ddagger L'\} \mathrm{d}\xi \mathrm{d}s$ for all $E\in\mathcal{B}_{b}(S)$. We refer to $L'$ as the L\'evy seed of $L$. Its CGF, $ C\{\theta \ddagger L'\}$, follows the L\'evy-Khintchine (LK) formula: $i a \theta - \frac{1}{2}b \theta^{2} + \int_{\mathbb{R}} (\exp(i\theta z) - 1 - i\theta z \mathbf{1}_{|z|\leq 1}) \nu(\mathrm{d}z)$ where $a\in\mathbb{R}$, $b\geq 0$ and $\nu$ is a L\'evy measure, i.e. it has no atom at $0$ and satisfies $\int_{\mathbb{R}} \min(1, z^{2}) \nu(\mathrm{d}z) < \infty$. We call $(a, b, \nu)$ the LK triplet of $L'$. 
\end{Def}
The relationship between the CGFs of $L(E)$ and $L'$ was established in Proposition 2.4 of \cite{RR1989}. The link between the CGFs of $L'$ and the resulting $\mathrm{OU}_{\wedge}$ process will be derived in Theorem \ref{thm:cgfform}.
\\
In this paper, we will be using $\mathrm{OU}_{\wedge}$ processes with homogeneous Gaussian bases for illustration. For such a basis, $L(E) \sim N(\mu\lambda_{d+1}(E), \tau^{2}\lambda_{d+1}(E))$ for any $E\in\mathcal{B}_{b}(S)$. Here, $\mu$ and $\tau^{2}$ are the mean and variance of the L\'evy seed respectively. Additional simulation studies have been conducted using homogeneous inverse Gaussian (IG), the normal inverse Gaussian (NIG) and the Gamma bases. Interested readers can refer to the supplementary material provided in \cite{NV2015}.
\\
As previously mentioned, the L\'evy basis $L$ and its seed play important roles in the distribution of the resulting $\mathrm{OU}_{\wedge}$ process. This can be characterised by a spatio-temporal extension of the generalised cumulant functional in \cite{BBV2012}:
\\
\begin{Def}[Generalised cumulant functional and cumulants]\label{defn:cgf} \hfill \\
Let $Y = \{Y_{t}(\mathbf{x})\}_{\mathbf{x}\in\mathbb{R}^{d}, t\in\mathbb{R}}$ denote a stochastic process in space-time ($S = \mathbb{R}^{d} \times \mathbb{R}$), and let $v$ denote any non-random measure such that the integral $v(Y) = \int_{S} Y_{t}(\mathbf{x}) v(\mathrm{d}\mathbf{x}, \mathrm{d}t)$ exists almost surely. The generalised cumulant functional (GCF) of $Y$ with respect to $v$ is given by: $C\{\theta \ddagger v(Y) \} = \log\mathbb{E}\left[\exp\left(i\theta v\left(Y\right)\right)\right]$ where the logarithm is the distinguished logarithm.
\\
If $ v(\mathrm{d}\mathbf{x}, \mathrm{d}t) = \delta_{t}(\mathrm{d}t)\delta_{\mathbf{x}}(\mathrm{d}\mathbf{x})$ where $\delta_{t}$ and $\delta_{\mathbf{x}}$ denote the Dirac measures at $t$ and $\mathbf{x}$ respectively, $C\{\theta \ddagger v(Y)\}$ is CGF of $Y_{t}(\mathbf{x})$. 
\end{Def}
As mentioned in \cite{BBV2012}, there are at least two other interesting cases of the measure $v$:
\begin{itemize}
\item[(i)] if $v(\mathrm{d}\mathbf{x}, \mathrm{d}t) = \theta_{1}\delta_{t_{1}}(\mathrm{d}t)\delta_{\mathbf{x}_{1}}(\mathrm{d}\mathbf{x}) + \dots +  \theta_{n}\delta_{t_{n}}(\mathrm{d}t)\delta_{\mathbf{x}_{n}}(\mathrm{d}\mathbf{x})$, $C\{\theta \ddagger v(Y)\}$ is the joint cumulant generating function (JCGF) of $Y_{t_{1}}(\mathbf{x}_{1}), \dots, Y_{t_{n}}(\mathbf{x}_{n})$; 
\item[(ii)] and if $v(\mathrm{d}\mathbf{x}, \mathrm{d}t) = \mathbf{1}_{I}(\mathbf{x}, t) \mathrm{d}\mathbf{x}\mathrm{d}t$, where $I$ is a region in $S = \mathbb{R}^{d} \times \mathbb{R}$ and $\mathbf{1}_{I}(\mathbf{x}, t) = 1$ if ${(\mathbf{x}, t)\in I}$ and $0$ otherwise, $C\{\theta \ddagger v(Y)\}$ determines the distribution of $\int_{I} Y_{t}(\mathbf{x})\mathrm{d}\mathbf{x}\mathrm{d}t$. 
\end{itemize}
Note that the case (ii) is relevant if we are using $\mathrm{OU}_{\wedge}$ processes to study integrated volatility or intermittency. 
\\
The next few definitions are required to study the theoretical properties of the $\mathrm{OU}_{\wedge}$ process and conduct inference. First, we extend the definition of (strict) temporal stationarity to that of spatio-temporal stationarity.
\\
We say that $Y_{t}(\mathbf{x})$ has temporal stationarity if for every $\mathbf{x} \in \mathbb{R}^{d}$, $\{Y_{t}(\mathbf{x})\}_{t \in \mathbb{R}}$ is a strictly stationary temporal process. That is, for every $n \in \mathbb{N}$ and $\epsilon \in \mathbb{R}$, such that $t_{1}, ..., t_{n} \in \mathbb{R}$, the joint distribution of $Y_{t_{1}}(\mathbf{x}), ..., Y_{t_{n}}(\mathbf{x})$ is the same as that of $Y_{t_{1}+ \epsilon}(\mathbf{x}), ..., Y_{t_{n} + \epsilon}(\mathbf{x})$. Analogously, $Y_{t}(\mathbf{x})$ is said to have spatio-temporal stationarity if for $n \in \mathbb{N}$, $\mathbf{u} \in \mathbb{R}^{d}$ and $\epsilon \in \mathbb{R}$, such that $x_{1}, ..., x_{n} \in \mathbb{R}^{d}$ and $t_{1}, ..., t_{n} \in \mathbb{R}$, the joint distribution of $Y_{t_{1}}(\mathbf{x}_{1}), ..., Y_{t_{n}}(\mathbf{x}_{n})$ is the same as that of $Y_{t_{1} + \epsilon}(\mathbf{x}_{1} + \mathbf{u}), ..., Y_{t_{n} + \epsilon}(\mathbf{x}_{n} + \mathbf{u})$.
\\
For our ergodicity results, we also need the classical definition of ergodicity, here given in the notation of \cite{FS2013}. Let $\{X_{t}\}_{t\in\mathbb{R}}$ be a real-valued strictly stationary process defined on the probability space $((\mathbb{R})^{\mathbb{R}}, \mathcal{F}, P)$ with $\mathcal{F} = \mathcal{B}((\mathbb{R})^{\mathbb{R}})$. We say that $(X_{t})_{t\in\mathbb{R}}$ is ergodic if $T^{-1}\int_{0}^{T}P(A \cap S^{t}B)\mathrm{d}t \stackrel{T\rightarrow\infty}{\rightarrow} P(A)P(B)$, where $A$, $B \in \mathcal{F}$. Here, $(S^{t})_{t\in\mathbb{R}}$ is the induced group of shift transformations on $(\mathbb{R})^{\mathbb{R}}$. That is, $S^{t}x_{s} = x_{s-t}$ for any $(x_{s})_{s\in\mathbb{R}}$ and $t\in\mathbb{R}$. $(X_{t})_{t\in\mathbb{R}}$ is mixing if $P(A\cap S^{t}B) \stackrel{t\rightarrow\infty}{\rightarrow} P(A)P(B)$. 

\section{Theoretical properties}
\label{sec:prop}

In this section, we review the known theoretical properties of $\mathrm{OU}_{\wedge}$ processes, as discussed in \cite{BN2004}, and derive various new results. These are important both from a modelling point of view and for developing statistical inference. The proofs of the Theorems, if not given or referenced, are provided in the Appendix. 

\subsection{Stationarity and Markovianity}

In \cite{BN2004}, it was proved in slightly different notation, that:
\\
\begin{thm} 
\label{thm:stationt}
An $\mathrm{OU}_{\wedge}$ process $\{Y_{t}(\mathbf{x})\}_{t\in\mathcal{T}}$ can be decomposed as:
\begin{align}
Y_{t}(\mathbf{x}) &= \exp{(-\lambda t)} Y_{0}(\mathbf{x}) + U_{t}(\mathbf{x}) + V_{t}(\mathbf{x}), \label{eqn:Ydecomp}\\
\text{where } U_{t}(\mathbf{x}) &= \exp{(-\lambda t)} \int_{A_{t}(\mathbf{x}) \cap (\mathcal{X}\times(-\infty, 0])\setminus A_{0}(\mathbf{x})} \exp{(\lambda s)} L(\mathrm{d}\xi, \mathrm{d}s), \nonumber \\
\text{and } V_{t}(\mathbf{x})&= \exp{(-\lambda t)} \int_{A_{t}(\mathbf{x}) \cap (\mathcal{X}\times(0, \infty))} \exp{(\lambda s)} L(\mathrm{d}\xi, \mathrm{d}s). \nonumber
\end{align}
In representation (\ref{eqn:Ydecomp}), $Y_{0}(\mathbf{x})$, $\{U_{t}(\mathbf{x})\}_{t\geq0}$ and $\{V_{t}(\mathbf{x})\}_{t\geq0}$ are independent. Furthermore, $\{Y_{t}(\mathbf{x})\}_{t\in\mathcal{T}}$ is Markovian and stationary.
\end{thm}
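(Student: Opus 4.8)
The plan is to establish the four assertions in turn, each from the independent-scattering and homogeneity structure of $L$ together with the two ambit conditions in~(\ref{eqn:ambitset}). For the decomposition I would start from $Y_{t}(\mathbf{x}) = \exp(-\lambda t)\int_{A_{t}(\mathbf{x})}\exp(\lambda s)\,L(\mathrm{d}\xi,\mathrm{d}s)$ and partition the integration set for $t\geq 0$. Monotonicity gives $A_{0}(\mathbf{x})\subseteq A_{t}(\mathbf{x})$, while the second condition in~(\ref{eqn:ambitset}) evaluated at time $0$ gives $A_{0}(\mathbf{x})\subseteq\mathcal{X}\times(-\infty,0]$. Hence $A_{t}(\mathbf{x})$ splits into three pairwise disjoint pieces, namely $A_{0}(\mathbf{x})$, the ``new past'' set $(A_{t}(\mathbf{x})\cap(\mathcal{X}\times(-\infty,0]))\setminus A_{0}(\mathbf{x})$, and the ``future'' set $A_{t}(\mathbf{x})\cap(\mathcal{X}\times(0,\infty))$. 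By additivity of the integral over disjoint regions, the three resulting integrals are exactly $\exp(-\lambda t)Y_{0}(\mathbf{x})$, $U_{t}(\mathbf{x})$ and $V_{t}(\mathbf{x})$, which is~(\ref{eqn:Ydecomp}).

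For independence I would note that each family is a measurable functional of $L$ restricted to a single region: $Y_{0}(\mathbf{x})$ to $R_{0}=A_{0}(\mathbf{x})$, the family $\{U_{t}(\mathbf{x})\}_{t\geq0}$ to $R_{U}=\bigcup_{t\geq0}[(A_{t}(\mathbf{x})\cap(\mathcal{X}\times(-\infty,0]))\setminus A_{0}(\mathbf{x})]$, and $\{V_{t}(\mathbf{x})\}_{t\geq0}$ to $R_{V}=\bigcup_{t\geq0}[A_{t}(\mathbf{x})\cap(\mathcal{X}\times(0,\infty))]$. These are pairwise disjoint: $R_{U}$ excludes $A_{0}(\mathbf{x})=R_{0}$ by construction, and $R_{0},R_{U}\subseteq\mathcal{X}\times(-\infty,0]$ are separated from $R_{V}\subseteq\mathcal{X}\times(0,\infty)$ by their temporal components. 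Independent scattering then makes the restrictions of $L$ to $R_{0},R_{U},R_{V}$ independent, from which independence of the three families follows by verifying that every finite-dimensional vector assembled from them factorises.

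For Markovianity, fix $\mathbf{x}$ and $0\leq s<t$. Monotonicity yields $Y_{t}(\mathbf{x})=\exp(-\lambda(t-s))Y_{s}(\mathbf{x})+I_{s,t}$, where $I_{s,t}=\exp(-\lambda t)\int_{A_{t}(\mathbf{x})\setminus A_{s}(\mathbf{x})}\exp(\lambda r)\,L(\mathrm{d}\xi,\mathrm{d}r)$. Since $\bigcup_{r\leq s}A_{r}(\mathbf{x})=A_{s}(\mathbf{x})$, the filtration $\mathcal{F}_{s}=\sigma(Y_{r}(\mathbf{x}):r\leq s)$ is contained in the $\sigma$-algebra generated by $L$ on $A_{s}(\mathbf{x})$, which is disjoint from $A_{t}(\mathbf{x})\setminus A_{s}(\mathbf{x})$; hence $I_{s,t}$ is independent of $\mathcal{F}_{s}$, and $\mathbb{E}[f(Y_{t}(\mathbf{x}))\mid\mathcal{F}_{s}]$ reduces to a function of $Y_{s}(\mathbf{x})$ alone. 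For stationarity I would show the finite-dimensional laws are shift-invariant: given $t_{1},\dots,t_{n}$ and $\epsilon$, translation invariance $A_{t_{i}+\epsilon}(\mathbf{x})=A_{t_{i}}(\mathbf{x})+(\mathbf{0},\epsilon)$ together with the substitution $s\mapsto s+\epsilon$ rewrites each $Y_{t_{i}+\epsilon}(\mathbf{x})$ as the same functional of the time-shifted basis $L(\cdot,\cdot+\epsilon)$ that $Y_{t_{i}}(\mathbf{x})$ is of $L$, and homogeneity of $L$ makes the shifted basis equal in law to $L$, so the two joint laws coincide.

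I expect the main obstacle to be the independence claim at the level of the full processes rather than of single random variables: it requires identifying the correct global regions $R_{0},R_{U},R_{V}$, verifying their disjointness carefully from the ambit conditions, and then promoting the independent-scattering property of $L$ to joint independence of the three families through their finite-dimensional distributions. The decomposition, Markov and stationarity steps are comparatively routine once the disjointness bookkeeping and the change of variables are set up correctly.
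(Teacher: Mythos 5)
Your proposal is sound: the three-way partition of $A_{t}(\mathbf{x})$ is justified exactly as you say (monotonicity gives $A_{0}(\mathbf{x})\subseteq A_{t}(\mathbf{x})$, and the no-future condition applied at $t=0$ gives $A_{0}(\mathbf{x})\subseteq\mathcal{X}\times(-\infty,0]$), and the independence, Markov and stationarity steps all go through. Be aware, though, that the paper itself does not prove Theorem \ref{thm:stationt}; it is quoted from \cite{BN2004}, so there is no in-paper proof to compare against line by line. What can be said is that your arguments mirror the techniques the paper uses for neighbouring results: your partition-plus-independent-scattering step is precisely the proof of Lemma \ref{Lem:Yt1t2}, and your decomposition $Y_{t}(\mathbf{x})=e^{-\lambda(t-s)}Y_{s}(\mathbf{x})+I_{s,t}$ with $I_{s,t}$ independent of $\mathcal{F}_{s}$ is the same device the paper exploits in the conditional-law argument inside the proof of Theorem \ref{thm:OUh-OU}. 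Where you genuinely diverge is stationarity: you argue at the level of the random measure (translate the ambit set, change variables, invoke distributional shift-invariance of the homogeneous basis), whereas the paper's route to the stronger spatio-temporal stationarity (Theorems \ref{thm:GCF} and \ref{thm:tsstation}) computes the joint cumulant generating functional and shows it depends only on coordinate differences. Your version is more elementary and self-contained; the CGF route buys invariance under spatial as well as temporal shifts and produces machinery that the paper reuses later (Example \ref{eg:GJCGF} and the Gaussian prediction theorem). One point to make explicit in a full write-up: passing from independent scattering (joint independence of $L$ on finitely many disjoint sets) to independence of the three $\sigma$-algebras generated by the restrictions of $L$ to $R_{0}$, $R_{U}$, $R_{V}$ requires a monotone-class argument; you flag this as verifying that finite-dimensional vectors factorise, which is the right idea and standard, but it is the one step of your plan that is not purely mechanical.
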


Stationarity is an important feature for inference as it allows us to pair observations based on temporal distances and consider quantities such as the sample autocorrelation. Markovianity is also a highly desirable property as it can be used to speed up simulations, make step-ahead predictions and simplify likelihood computations. 
\\
Besides temporal stationarity, Barndorff-Nielsen and Schmiegel also give an intuitive reasoning for $Y_{t}(\mathbf{x})$'s spatial stationarity. This involves the homogeneity of the L\'evy basis and the ambit set assumptions. However, by using the GCF, one can show that $Y_{t}(\mathbf{x})$ actually has spatio-temporal stationarity. By definition, this includes temporal and spatial stationarity. 
\\
\begin{thm}
\label{thm:GCF}
Let $A = A_{0}(\mathbf{0})$. Assume that for all $\xi \in \mathcal{X} = \mathbb{R}^{d}$ and $s \in \mathcal{T} = \mathbb{R}$, 
\begin{equation*}
h_{A}(\xi,s) = \int_{S} \mathbf{1}_{A}(\xi - \mathbf{x}, s-t) \exp(-\lambda(t-s)) v(\mathrm{d}\mathbf{x}, \mathrm{d}t) < \infty,
\end{equation*}
and that $h_{A}(\xi, s)$ is integrable with respect to the L\'evy basis $L$. Then, the GCF of $Y$ with respect to $v$ can be written as:
\begin{align*}
C\{\theta \ddagger v(Y)\} &= \int_{S} C\{\theta h_{A}(\xi, s) \ddagger L'\} \mathrm{d}\xi\mathrm{d}s \nonumber \\ &= i\theta a \int_{S} h_{A}(\xi, s) \mathrm{d}\xi\mathrm{d}s - \frac{1}{2}\theta^{2}b\int_{S} h^{2}_{A}(\xi, s) \mathrm{d}\xi\mathrm{d}s + \int_{\mathbb{R}} \int_{\mathbb{R}} (\exp(i\theta u z) - 1 - i\theta uz\mathbf{1}_{|z|\leq1}) \nu(\mathrm{d}z) \chi(\mathrm{d}u), 
\end{align*}
where $(a, b, \nu)$ is the LK triplet of $L'$ and $\chi$ is the measure on $\mathbb{R}$ obtained by transforming the Lebesgue measure on $S$ by the mapping $(\xi, s) \rightarrow h_{A}(\xi, s)$. 
\end{thm}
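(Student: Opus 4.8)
The plan is to compute the GCF of $v(Y)$ by substituting the integral representation of $Y_t(\mathbf{x})$ from (\ref{eqn:OUh}) into the functional $v(Y) = \int_S Y_t(\mathbf{x})\, v(\mathrm{d}\mathbf{x},\mathrm{d}t)$, interchanging the order of the $v$-integration and the stochastic $L$-integration, and recognising that the result is a single stochastic integral of the deterministic kernel $h_A$ against $L$. First I would rewrite $Y_t(\mathbf{x})$ using the translation-invariance assumption $A_t(\mathbf{x}) = A + (\mathbf{x},t)$ with $A = A_0(\mathbf{0})$, so that the indicator of the ambit set becomes $\mathbf{1}_A(\xi - \mathbf{x}, s - t)$; this lets me write $Y_t(\mathbf{x}) = \int_S \mathbf{1}_A(\xi - \mathbf{x}, s - t)\exp(-\lambda(t-s))\, L(\mathrm{d}\xi,\mathrm{d}s)$ as an integral over all of $S$ rather than over the set $A_t(\mathbf{x})$.

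The central step is the stochastic Fubini argument: substituting this into $v(Y)$ gives a double integral, one against $v(\mathrm{d}\mathbf{x},\mathrm{d}t)$ and one against $L(\mathrm{d}\xi,\mathrm{d}s)$, and upon exchanging the two I obtain
\begin{equation*}
v(Y) = \int_S \left( \int_S \mathbf{1}_A(\xi - \mathbf{x}, s-t)\exp(-\lambda(t-s))\, v(\mathrm{d}\mathbf{x},\mathrm{d}t) \right) L(\mathrm{d}\xi,\mathrm{d}s) = \int_S h_A(\xi, s)\, L(\mathrm{d}\xi,\mathrm{d}s),
\end{equation*}
where the inner integral is precisely the definition of $h_A(\xi,s)$. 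Once this identity is in place, the finiteness hypothesis $h_A(\xi,s) < \infty$ and its integrability with respect to $L$ ensure the right-hand side is a well-defined $\mathcal{L}_0$-integral, and I can apply the key result on CGFs of integrals against a L\'evy basis: by Proposition 2.4 of \cite{RR1989} together with Definition \ref{defn:lseed}, the CGF of $\int_S f\, \mathrm{d}L$ equals $\int_S C\{\theta f(\xi,s) \ddagger L'\}\,\mathrm{d}\xi\,\mathrm{d}s$. Taking $f = h_A$ yields the first displayed equality.

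The second equality is then obtained by inserting the explicit L\'evy--Khintchine form of $C\{\theta \ddagger L'\}$ from Definition \ref{defn:lseed} with $\theta$ replaced by $\theta h_A(\xi,s)$: the drift term contributes $i\theta a \int_S h_A\,\mathrm{d}\xi\,\mathrm{d}s$, the Gaussian term contributes $-\tfrac12 \theta^2 b \int_S h_A^2\,\mathrm{d}\xi\,\mathrm{d}s$, and the jump term becomes $\int_S \int_{\mathbb{R}} (\exp(i\theta h_A(\xi,s) z) - 1 - i\theta h_A(\xi,s) z\mathbf{1}_{|z|\leq 1})\,\nu(\mathrm{d}z)\,\mathrm{d}\xi\,\mathrm{d}s$. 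The final step is to rewrite this last term by pushing the Lebesgue measure forward through the map $(\xi,s)\mapsto h_A(\xi,s)$, which by the change-of-variables formula replaces $\int_S (\cdots)\,\mathrm{d}\xi\,\mathrm{d}s$ with $\int_{\mathbb{R}}(\cdots)\,\chi(\mathrm{d}u)$ where $u = h_A(\xi,s)$ and $\chi$ is the image measure, giving exactly the stated expression.

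I expect the main obstacle to be rigorously justifying the stochastic Fubini interchange, since $L$ is only a random measure of $\mathcal{L}_0$-type (not of finite variation, as the Remark emphasises) and the ordinary Fubini theorem does not directly apply. This requires invoking a stochastic Fubini theorem appropriate to the $\mathcal{L}_0$ integration theory of \cite{RR1989}, and the hypotheses of the theorem ($h_A$ finite and $L$-integrable) are precisely what is needed to license it; I would verify the integrability conditions carefully, as everything downstream is a routine application of the L\'evy--Khintchine representation and a measure-theoretic change of variables.
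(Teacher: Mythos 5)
Your proposal is correct and takes essentially the same approach as the paper: the paper's own proof is a one-line citation to Proposition 5 of \cite{BBV2012}, whose underlying argument is precisely your reduction of $v(Y)$ to the single stochastic integral $\int_{S} h_{A}(\xi,s)\,L(\mathrm{d}\xi,\mathrm{d}s)$ via a stochastic Fubini interchange, followed by the Rajput--Rosinski cumulant formula for L\'evy-basis integrals and the L\'evy--Khintchine expansion with the pushforward measure $\chi$. Your write-up simply supplies in full the details that the paper delegates to that reference, including the correct identification of the integrability hypotheses as what licenses the interchange.
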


\begin{proof}
This is analogous to the proof for Proposition 5 in \cite{BBV2012} with $h_{A}$ being defined differently to account for space-time and our definition of $Y_{t}(\mathbf{x})$.
\end{proof}

\begin{thm} 
\label{thm:tsstation}
Let $Y_{t}(\mathbf{x})$ be an $\mathrm{OU}_{\wedge}$ process. Then $Y_{t}(\mathbf{x})$ has spatio-temporal stationarity.
\end{thm}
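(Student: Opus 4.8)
The plan is to reduce spatio-temporal stationarity to the shift-invariance of all finite-dimensional distributions, and to verify the latter through the joint cumulant generating function supplied by Theorem~\ref{thm:GCF}. Recall that, by case (i) of Definition~\ref{defn:cgf}, choosing $v(\mathrm{d}\mathbf{x},\mathrm{d}t) = \sum_{j=1}^{n}\theta_{j}\delta_{t_{j}}(\mathrm{d}t)\delta_{\mathbf{x}_{j}}(\mathrm{d}\mathbf{x})$ makes $C\{\theta \ddagger v(Y)\}$ the JCGF of $Y_{t_{1}}(\mathbf{x}_{1}),\dots,Y_{t_{n}}(\mathbf{x}_{n})$. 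Since the distinguished logarithm, and hence the characteristic function, determines the joint law, it suffices to show that this functional is unchanged when each evaluation point $(\mathbf{x}_{j},t_{j})$ is replaced by $(\mathbf{x}_{j}+\mathbf{u},t_{j}+\epsilon)$, for arbitrary $\mathbf{u}\in\mathbb{R}^{d}$ and $\epsilon\in\mathbb{R}$.

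First I would introduce the shifted measure $v_{\mathbf{u},\epsilon}$, the pushforward of $v$ under $(\mathbf{x},t)\mapsto(\mathbf{x}+\mathbf{u},t+\epsilon)$, and compute the associated kernel $h_{A}$ from Theorem~\ref{thm:GCF}. The central observation is that
\begin{equation*}
h_{A}^{v_{\mathbf{u},\epsilon}}(\xi,s) = h_{A}^{v}(\xi - \mathbf{u}, s - \epsilon),
\end{equation*}
which follows by the substitution $(\mathbf{x},t)=(\mathbf{x}'+\mathbf{u},t'+\epsilon)$ inside the defining integral, upon rewriting $\mathbf{1}_{A}(\xi-\mathbf{x}'-\mathbf{u},s-t'-\epsilon)\exp(-\lambda(t'+\epsilon-s))$ as $\mathbf{1}_{A}((\xi-\mathbf{u})-\mathbf{x}',(s-\epsilon)-t')\exp(-\lambda(t'-(s-\epsilon)))$. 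The key structural input here is the translation invariance of the ambit set, $A_{t}(\mathbf{x})=A_{0}(\mathbf{0})+(\mathbf{x},t)$, which is precisely what is encoded in the indicator $\mathbf{1}_{A}(\xi-\mathbf{x},s-t)$ and makes $h_{A}$ transform as a genuine space-time translate.

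Having established this, the conclusion is immediate from the explicit form in Theorem~\ref{thm:GCF}, since every term there depends on $h_{A}$ only through integrals over all of $S$ against Lebesgue measure. By translation invariance of the Lebesgue measure, $\int_{S}h_{A}^{v}(\xi-\mathbf{u},s-\epsilon)\,\mathrm{d}\xi\mathrm{d}s=\int_{S}h_{A}^{v}(\xi,s)\,\mathrm{d}\xi\mathrm{d}s$, and likewise for the squared term; moreover the pushforward measure $\chi$ of the Lebesgue measure under $(\xi,s)\mapsto h_{A}(\xi,s)$ is unaffected by translating the argument of $h_{A}$, so the jump part is identical as well. Hence $C\{\theta\ddagger v_{\mathbf{u},\epsilon}(Y)\}=C\{\theta\ddagger v(Y)\}$, the finite-dimensional distributions coincide, and $Y_{t}(\mathbf{x})$ is spatio-temporally stationary.

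I expect the main obstacle to be the bookkeeping in the translation identity for $h_{A}$ and, relatedly, confirming that the finiteness and $L$-integrability hypotheses of Theorem~\ref{thm:GCF} are inherited by $v_{\mathbf{u},\epsilon}$. However, since those hypotheses also involve $h_{A}$ only through translation-invariant quantities, they transfer automatically once the identity above is in hand.
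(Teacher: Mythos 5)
Your proposal is correct and follows essentially the same route as the paper: both reduce spatio-temporal stationarity to shift-invariance of the JCGF obtained from Theorem~\ref{thm:GCF} with the discrete measure $v$, and both exploit the translation structure of $h_{A}$ together with a change of variables in the outer Lebesgue integral (the paper centres at $(\mathbf{x}_{1},t_{1})$ and observes the result depends only on differences, which is the same translation-invariance argument you phrase via the shifted measure $v_{\mathbf{u},\epsilon}$). No gaps; your closing remark about the hypotheses of Theorem~\ref{thm:GCF} transferring under translation is also consistent with how the paper implicitly treats them.
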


\begin{proof}
Let $v(\mathrm{d}\mathbf{x}, \mathrm{d}t) = \sum_{i = 1}^{n} \theta_{i}\delta_{t_{i}}(\mathrm{d}t)\delta_{\mathbf{x}_{i}}(\mathrm{d}x) $. From Theorem \ref{thm:GCF}, the JCGF of $Y_{t_{1}}(\mathbf{x}_{1}), \cdots, Y_{t_{n}}(\mathbf{x}_{n})$ can be written as:
\begin{align}
C\{\theta \ddagger v(Y)\} &= \int_{S} C\{\theta h_{A}(\xi, s) \ddagger L'\} \mathrm{d}\xi\mathrm{d}s \text{ where } h_{A}(\xi, s) = \sum_{i = 1}^{n}  \mathbf{1}_{A}(\xi - \mathbf{x}_{i}, s - t_{i}) \exp(-\lambda(t_{i}-s)) \theta_{i}.
\end{align}
By using the change of variables $\mathbf{u} = \xi - \mathbf{x}_{1}$ and $\epsilon = s - t_{1}$, we have $C\{\theta \ddagger v(Y)\} = \int_{\mathbb{R}\times\mathbb{R}} C\{\theta h'_{A}(\mathbf{u}, \epsilon) \ddagger L'\} \mathrm{d}u\mathrm{d}\epsilon$, where $h'_{A}(\mathbf{u}, \epsilon) := \sum_{i = 1}^{n}  \mathbf{1}_{A}(\mathbf{u} + \mathbf{x}_{1} - \mathbf{x}_{i}, \epsilon + t_{1} - t_{i}) \exp(-\lambda(t_{i}-t_{1} - \epsilon)) \theta_{i}$. The JCGF does not depend on the space or time coordinates but only their differences. This means that $Y_{t}(x)$ has spatio-temporal stationarity. 
\end{proof}

\subsection{Autocorrelation structures} \label{sec:autoc}

There is a wide range of literature on suitable choices of spatio-temporal autocorrelation and covariance functions (see Section 6.7 of \cite{CW2011} for a review). Hence, we show what type of autocorrelation structures can be obtained in our parsimonious class of $\mathrm{OU}_{\wedge}$ processes when the L\'evy basis $L$ is assumed to have finite second moments.
\\
With spatio-temporal stationarity, we can compute the spatio-temporal autocorrelation of such an $\mathrm{OU}_{\wedge}$ process.  Since $L$ is independently scattered, we find that, in terms of spatial and temporal differences ($d_{\mathbf{x}} \in \mathbb{R}^{d}$ and $d_{t} \in \mathbb{R}$): 
\begin{align}
\Cov[Y_{t}(\mathbf{x}), Y_{t + d_{t}}(\mathbf{x}+d_{\mathbf{x}})] &=  \Var[L'] \exp(-\lambda d_{t})\int_{A_{t}(\mathbf{x}) \cap A_{t+d_{t}}(\mathrm{x} + d_{\mathbf{x}})}\exp(-2\lambda(t-s))\mathrm{d}\xi\mathrm{d}s. \nonumber\\
\Rightarrow \Corr[Y_{t}(\mathbf{x}), Y_{t+d_{t}}(\mathbf{x}+d_{\mathbf{x}})] &= \frac{\exp(-\lambda d_{t})\int_{A_{t}(\mathbf{x}) \cap A_{t+d_{t}}(\mathrm{x} + d_{\mathbf{x}})}\exp(-2\lambda(t-s))\mathrm{d}\xi\mathrm{d}s}{\int_{A_{t}(\mathbf{x})}\exp(-2\lambda(t-s))\mathrm{d}\xi\mathrm{d}s}. \label{eqn:GCorr}
\end{align}
This provides information for both the temporal and the spatial autocorrelation. For the former, we set $d_{\mathbf{x}} = \mathbf{0}$ to get:
\begin{equation*}
\rho^{(T)}(d_{t}):= \Corr[Y_{t}(\mathbf{x}), Y_{t+d_{t}}(\mathbf{x})] = \exp\left(-\lambda |d_{t}|\right). 
\end{equation*}
This is identical to the autocorrelation of a classical $\mathrm{OU}$ process. For the spatial autocorrelation $\rho^{(S)}(d_{\mathbf{x}})$, we set $d_{t} = 0$. The actual form depends on our ambit set $A_{t}(\mathrm{x})$, as is shown in the next example.
\\
\begin{Eg}
\label{eg:Scorr}
Consider an $\mathrm{OU}_{\wedge}$ process of the $g$-class, i.e.: $Y_{t}(x) = \int_{-\infty}^{t}\int_{x - g(|t-s|)}^{x + g(|t-s|)} \exp(-\lambda(t-s)) L(\mathrm{d}\xi, \mathrm{d}s)$, where $g$ is a non-negative strictly increasing continuous function on $[0, \infty)$. Assume without loss of generality that $d_{x}>0$, and let $(x^{*}, t^{*}) = \left(x + \frac{d_{x}}{2}, t - g^{-1}\left(\frac{d_{x}}{2}\right)\right)$ be the intersection point between $A_{t}(x)$ and $A_{t}(x + d_{x})$. Then:
\begin{align}
\Cov[Y_{t}(x), Y_{t}(x+d_{x})] &= \Var[L']  \int_{-\infty}^{t^{*}}\int_{x + d_{x} - g(|t - s|)}^{x +  g(|t-s|)} \exp(-2\lambda(t - s)) \mathrm{d}\xi\mathrm{d}s \nonumber\\
&= \Var[L'] \left( \int_{-\infty}^{t^{*}}2 g(t-s)\exp(-2\lambda(t - s)) \mathrm{d}s  - \int_{-\infty}^{t^{*}}d_{x}\exp(-2\lambda(t - s)) \mathrm{d}s \right) \nonumber\\
&=  \Var[L'] \left( \int_{0}^{\infty}2 g\left(w+ g^{-1}\left(\frac{d_{x}}{2}\right)\right)e^{-2\lambda\left(w+ g^{-1}\left(\frac{d_{x}}{2}\right)\right)} \mathrm{d}w - \int_{0}^{\infty}d_{x}e^{-2\lambda\left(w + g^{-1}\left(\frac{d_{x}}{2}\right)\right)} \mathrm{d}w \right), \label{eqn:covg}
\end{align}
where $w = t^{*} - s$. We give two examples to show how the spatial autocorrelation structure varies according to our choice of the ambit set: 

\begin{itemize}
\item[(i)] If $g(|t-s|) = c|t-s|$, $g^{-1}\left(d_{x}/2\right) = d_{x}/2c$. By solving the integrals in (\ref{eqn:covg}), we find that $\rho^{(S)}(d_{x}) = \exp\left(-\lambda d_{x}/c\right)$. 
\item[(ii)] If $g(|t-s|) = \begin{cases}
  c_{1}|t-s| & \text{if } |t-s| \leq 1, \nonumber \\
  c_{1} + c_{2}(|t-s|-1) & \text{ otherwise,} \nonumber
  \end{cases}$ \hspace{4mm}$g^{-1}\left(d_{x}/2\right) = \begin{cases} d_{x}/2 c_{1} \text{ if } d_{x}\leq 2 c_{1}, \\
1 + \frac{d_{x} - 2 c_{1}}{2 c_{2}} \text{ otherwise. }
\end{cases}$\vspace{2mm}\\ By solving (\ref{eqn:covg}) for $d_{x}\leq 2c_{1}$ and $d_{x}>2c_{1}$ separately, we have:
\begin{equation*}
\rho^{(S)}(d_{x}) = \begin{cases} \frac{\left(c_{1} + (c_{2} - c_{1})e^{-2\lambda\left(1 - d_{x}/2 c_{1}\right)}\right)e^{-\lambda d_{x}/c_{1}}}{c_{1} + (c_{2} - c_{1})e^{-2\lambda}} \text{ for } d_{x}\leq 2 c_{1},\\
\frac{c_{2}e^{-2\lambda\left(1 + \left(d_{x}-2 c_{1}\right)/2 c_{1}\right)}}{c_{1} + (c_{2} - c_{1})e^{-2\lambda}} \text{ otherwise.}
\end{cases}
\end{equation*}
\end{itemize}
\end{Eg}

Figure \ref{fig:rhosplots2} shows the plots of the spatial autocorrelations for cases (i) and (ii) under different values of $\lambda$, $c$, $c_{1}$, $c_{2}$ and $d_{x}$. From Plot (ii), we notice that introducing a change in gradient of $g(|t-s|)$ from $0.5$ to $1$ at $|t-s| = 1$ results in changes in the overall scaling and gradient of the spatial autocorrelation. In particular, since the spatial correlation is a combination of two functions which are joined at $d_{x} = 2 c_{1} = 1$, there is a change of behaviour at that point as denoted by the deviation from the first function (FF). Since we can choose different values of $\lambda$, $c$, $c_{1}$ and $c_{2}$ to fit the curvatures of our empirical spatial autocorrelations, we see that $\mathrm{OU}_{\wedge}$ processes provide a flexible tool for modelling spatial dependencies.
\\
In the canonical case with $g(|t-s|) = c|t-s|$, we can find an explicit form for the spatio-temporal autocorrelation:
\\
\begin{figure}[tbp]
\centering
\caption{Plot of the spatial autocorrelations achieved by different ambit sets of the form $A_{t}(x) = \{(\xi, s): s \leq t, x-g(|t-s|) \leq \xi \leq x+g(|t-s|)\}$ where: (i) $g(|t-s|) = c|t-s|$; (ii) $g(|t-s|) =  c_{1}|t-s|$ if $|t-s| \leq 1$, and $ c_{1} + c_{2}(|t-s|-1)$ otherwise. FF refers to the functional form of $\rho^{(S)}(d_{x})$ for $d_{x}\leq 2c_{1}$.}
\label{fig:rhosplots2}
\includegraphics[width = 5in, height = 2.5in, trim = 0.4in 0.2in 0.4in 0in]{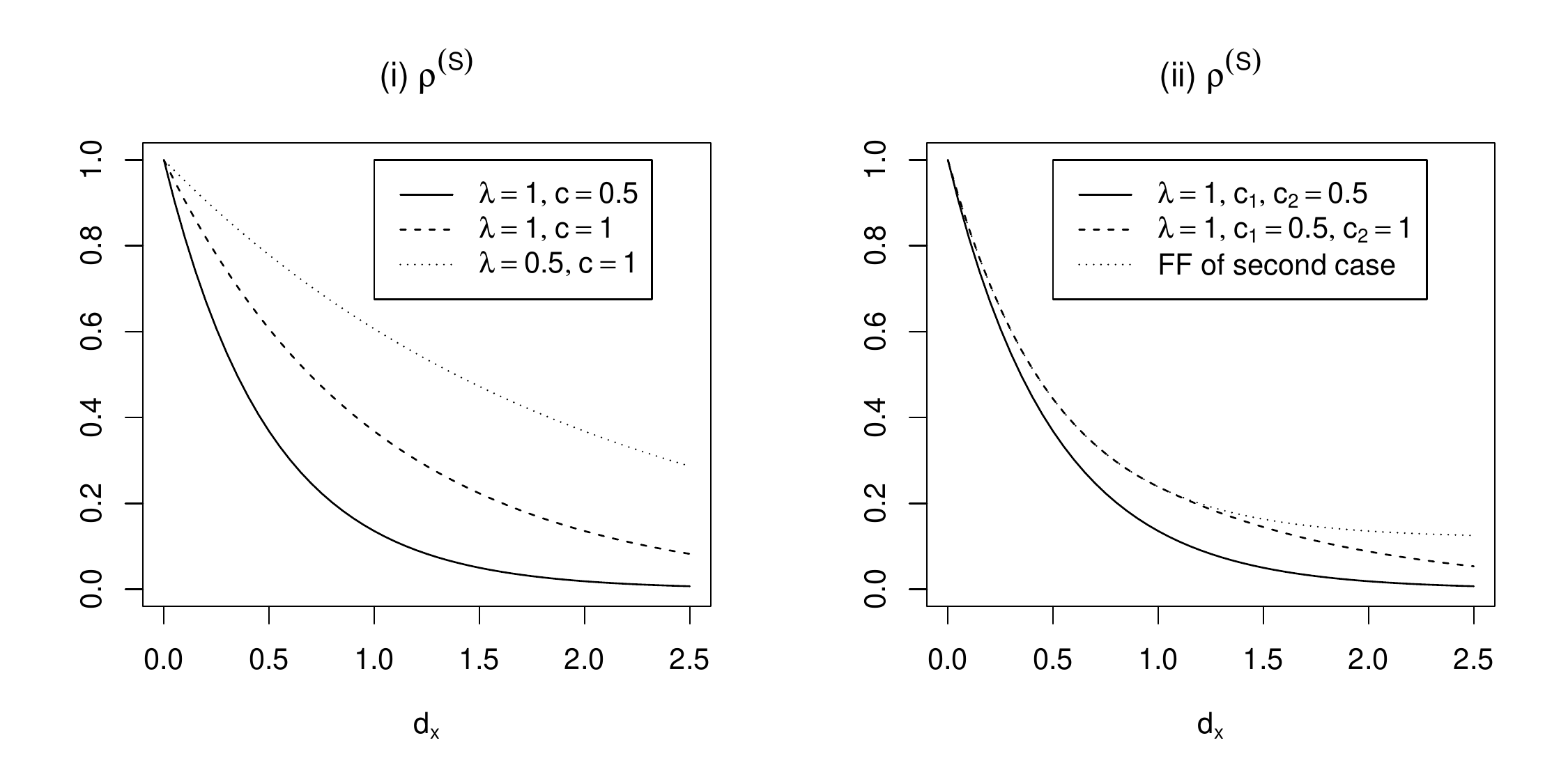}
\end{figure}

\begin{Eg}
\label{eg:STcorr}
Let $Y_{t}(x)$  be an $\mathrm{OU}_{\wedge}$ process described by:
\begin{equation}
Y_{t}(x) = \int_{-\infty}^{t}\int_{x - c|t-s|}^{x + c|t-s|} \exp(-\lambda(t-s)) L(\mathrm{d}\xi, \mathrm{d}s),\label{eqn:TestOUh}
\end{equation}
where $c > 0$. By considering the intersection of $A_{t}(x)$ and $A_{t+d_{t}}(x+d_{x})$ under different relative magnitudes of $d_{x}$ and $c d_{t}$, we find that the spatio-temporal autocorrelation is given by:
\begin{equation}
\Corr[Y_{t}(x), Y_{t + d_{t}}(x+d_{x})]  = \min\left(\exp\left(-\lambda |d_{t}|\right), \exp\left(-\frac{\lambda|d_{x}|}{c}\right)\right). \label{eqn:STCorr}
\end{equation}
This is a non-separable ACF which cannot be obtained through the definitions of the spatio-temporal $\mathrm{OU}$ process in \cite{BD2001} and \cite{TLB2004}. Note that when $d_{x} = 0$, we get $\rho^{(T)}(d_{t}) = \exp(-\lambda |d_{t}|)$ as expected and when $d_{t} = 0$, we get $\rho^{(S)}(d_{x}) = \exp\left(-\lambda |d_{x}|/c\right)$ as calculated before. We have exponential decay in time and in space but with different rate parameters. Interested readers can refer to Section 3.1 of the supplementary material in \cite{NV2015} for the details of this derivation.
\end{Eg}
\vspace{4mm}
\begin{Eg} \label{eg:GJCGF}
The spatio-temporal ACF derived in the previous example can be used to obtain the JCGF of an $\mathrm{OU}_{\wedge}$ process of the form (\ref{eqn:TestOUh}) with a Gaussian basis. The CGF of a Gaussian L\'evy seed $L'$ with mean $\mu$ and variance $\tau^{2}$ is $C\{\theta\ddagger L'\} = i\mu\theta - \tau^{2}\theta^{2}/2$. Let $h_{A}(\xi, s) = \sum_{i = 1}^{n}  \mathbf{1}_{A}(\xi - x_{i}, s - t_{i}) \exp(-\lambda(t_{i}-s)) \theta_{i}$ where $A = A_{0}(0)$, the ambit set at the origin and define $\tilde{\theta}_{i} := \theta\theta_{i}$. With reference to the notation in the proof of Theorem \ref{thm:tsstation}, the JCGF is given by:
\begin{align*}
&\int_{\mathbb{R}\times\mathbb{R}}C\{\theta h_{A}(\xi, s) \ddagger L'\} \mathrm{d}\xi\mathrm{d}s \\
=&  i\mu\sum_{i = 1}^{n} \tilde{\theta}_{i}\int_{A_{t_{i}}(x_{i})} \exp(-\lambda(t_{i}-s))  \mathrm{d}\xi\mathrm{d}s -\sum_{i, j = 1}^{n}\frac{1}{2}\tau^{2}\tilde{\theta}_{i}\tilde{\theta}_{j}  \int_{A_{t_{i}}(x_{i}) \cap A_{t_{j}}(x_{j})} \exp(-\lambda(t_{i} + t_{j}-2s)) \mathrm{d}\xi\mathrm{d}s, 
\\
=& i \frac{2c\mu}{\lambda^{2}}\sum_{i = 1}^{n}\tilde{\theta}_{i}   -\sum_{i, j = 1}^{n}\frac{1}{2}\tau^{2}\tilde{\theta}_{i}\tilde{\theta}_{j}\frac{c}{2\lambda^{2}} \min\left(\exp\left(-\lambda |t_{i} - t_{j}|\right), \exp\left(-\frac{\lambda|x_{i} - x_{j}|}{c}\right)\right) \text{ using (\ref{eqn:GCorr}) and (\ref{eqn:STCorr}).} 
\end{align*}
\end{Eg}

\subsection{Equality in law to the $\mathrm{OU}$ process}

In the previous subsection, we saw that the temporal autocorrelation of an $\mathrm{OU}_{\wedge}$ process is exactly that of an $\mathrm{OU}$ process. In many circumstances, we also find that when its spatial parameter is fixed, the $\mathrm{OU}_{\wedge}$ process is equal in distribution to an $\mathrm{OU}$ process. This is one key reason the $\mathrm{OU}_{\wedge}$ process can be seen as a spatio-temporal $\mathrm{OU}$ process. Before we present the corresponding Theorem, we give a Lemma which is used in its proof:
\\
\begin{Lem}
\label{Lem:Yt1t2}
Let $t_{1} < t_{2} \in \mathbb{R}$ be two arbitrary time points, and let $Y_{t}(\mathbf{x})$ be an $\mathrm{OU}_{\wedge}$ process. Then, for fixed $\mathbf{x}\in\mathbb{R}^{d}$:
\begin{equation}
Y_{t_{2}}(\mathbf{x}) = \exp(-\lambda(t_{2}-t_{1})) Y_{t_{1}}(\mathbf{x}) + \widetilde{U}(t_{1}, t_{2}, \mathbf{x}),
\end{equation}
where  $\widetilde{U}(t_{1}, t_{2}, \mathbf{x}) = \int_{A_{t_{2}}(\mathbf{x})\backslash A_{t_{1}}(\mathbf{x})} \exp(-\lambda(t_{2}-s)) L(\mathrm{d}\xi, \mathrm{d}s)$ is independent of $Y_{t_{1}}(\mathbf{x})$.
\end{Lem}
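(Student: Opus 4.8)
The plan is to exploit two structural facts about the $\mathrm{OU}_{\wedge}$ process: the nesting of its ambit sets guaranteed by (\ref{eqn:ambitset}), and the additivity and independent-scattering properties of the integral against the L\'evy basis $L$. Since $t_{1} < t_{2}$, condition (\ref{eqn:ambitset}) gives $A_{t_{1}}(\mathbf{x}) \subset A_{t_{2}}(\mathbf{x})$, so the larger ambit set decomposes as the disjoint union
\[
A_{t_{2}}(\mathbf{x}) = A_{t_{1}}(\mathbf{x}) \,\cup\, \bigl(A_{t_{2}}(\mathbf{x}) \setminus A_{t_{1}}(\mathbf{x})\bigr).
\]
First I would split the defining integral of $Y_{t_{2}}(\mathbf{x})$ along this partition, using the finite additivity of the $\mathcal{L}_{0}$ integral over disjoint domains.

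The key algebraic step is then to rewrite the integrand on the first piece. Writing $\exp(-\lambda(t_{2}-s)) = \exp(-\lambda(t_{2}-t_{1}))\exp(-\lambda(t_{1}-s))$ and pulling the deterministic constant $\exp(-\lambda(t_{2}-t_{1}))$ outside the integral, the integral of $\exp(-\lambda(t_{1}-s))$ against $L$ over $A_{t_{1}}(\mathbf{x})$ is, by definition, exactly $Y_{t_{1}}(\mathbf{x})$. The integral over the remaining piece is, again by definition, $\widetilde{U}(t_{1}, t_{2}, \mathbf{x})$. This yields the claimed decomposition.

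For the independence assertion I would invoke the fact that $L$ is independently scattered (condition 1 of the L\'evy basis definition): integrals of a fixed deterministic integrand against $L$ over disjoint Borel sets are independent random variables. Since $A_{t_{1}}(\mathbf{x})$ and $A_{t_{2}}(\mathbf{x}) \setminus A_{t_{1}}(\mathbf{x})$ are disjoint, it follows that $Y_{t_{1}}(\mathbf{x})$ and $\widetilde{U}(t_{1}, t_{2}, \mathbf{x})$ are independent.

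The manipulations here are essentially bookkeeping; the only point requiring care is the justification that the $\mathcal{L}_{0}$ stochastic integral is finitely additive over disjoint domains and commutes with a deterministic scalar factor, and that the independent-scattering property lifts from the basis $L$ itself to integrals of (approximating simple) integrands over disjoint sets. These are standard properties of integration against an independently scattered random measure in the sense of \cite{RR1989}, so I would cite them rather than reprove them, after which the remaining work is immediate.
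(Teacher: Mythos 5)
Your proposal is correct and follows essentially the same route as the paper's proof: partition $A_{t_{2}}(\mathbf{x})$ into $A_{t_{1}}(\mathbf{x})$ and $A_{t_{2}}(\mathbf{x})\setminus A_{t_{1}}(\mathbf{x})$ using the nesting condition (\ref{eqn:ambitset}), apply the definition of $Y_{t}(\mathbf{x})$, and invoke the independent-scattering property of $L$ for the independence claim. The only difference is that you spell out the bookkeeping (the factorisation $\exp(-\lambda(t_{2}-s)) = \exp(-\lambda(t_{2}-t_{1}))\exp(-\lambda(t_{1}-s))$ and the additivity of the $\mathcal{L}_{0}$ integral) that the paper leaves implicit in the phrase ``follows directly.''
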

\begin{proof}
Since $A_{s}(\mathbf{x}) \subset A_{t}(\mathbf{x})$, $\forall s<t$, $A_{t_{2}}(\mathbf{x}) \cap A_{t_{1}}(\mathbf{x}) = A_{t_{1}}(\mathbf{x}) $ and $A_{t_{2}}(\mathbf{x})$ can be partitioned into $A_{t_{1}}(\mathbf{x})$ and $A_{t_{2}}(\mathbf{x}) \backslash A_{t_{1}}(\mathbf{x})$. The result follows directly by using the definition of $Y_{t}(\mathbf{x})$ and the fact that the L\'evy basis is independently scattered.
\end{proof} 

\clearpage

\begin{thm}
\label{thm:OUh-OU}
Let $Y_{t}(\mathbf{x}) = \int_{A_{t}(\mathbf{x})} \exp(-\lambda(t-s)) L(\mathrm{d}\xi, \mathrm{d}s)$ be an $\mathrm{OU}_{\wedge}$ process where the L\'evy seed $L'$ has LK triplet $(a, b, \nu)$, and let $Z_{t} = \int_{-\infty}^{t} \exp(-\lambda(t-s)) \tilde{L}(\mathrm{d}s)$ be an $\mathrm{OU}$ process where the L\'evy seed $\tilde{L}'$ has LK triplet $(\tilde{a}, \tilde{b}, \tilde{\nu})$. Then, for fixed $\mathbf{x}\in\mathbb{R}^{d}$, the process $\{Y_{t}(\mathbf{x})\}_{t\in\mathbb{R}}$ is equal in law to $\{Z_{t}\}_{t\in\mathbb{R}}$ if:   
\begin{align*}
\tilde{a}&= \lambda\left[ \int_{A} e^{\lambda w}  \left[a- \int_{\mathbb{R}}z\mathbf{1}_{1<|z|\leq e^{-\lambda w}} \nu(\mathrm{d}z)\right]\mathrm{d}\mathbf{u}\mathrm{d}w + \int_{-\infty}^{0} e^{\lambda w}  \int_{\mathbb{R}}z\mathbf{1}_{1<|z|\leq e^{-\lambda w}} \tilde{\nu}(\mathrm{d}z)\mathrm{d}w\right],\\
\tilde{b} &= 2\lambda \int_{A} b e^{2\lambda w} \mathrm{d}\mathbf{u}\mathrm{d}w \hspace{2mm} \text{ and } \hspace{2mm} \int_{-\infty}^{0}\tilde{\nu}(e^{-\lambda w}\mathrm{d} y)\mathrm{d}w = \int_{A}\nu(e^{-\lambda w}\mathrm{d} y)\mathrm{d}\mathbf{u}\mathrm{d}w,
\end{align*}
where $A = A_{0}(\mathbf{0})$, $\mathbf{u} = \xi - \mathbf{x}$ and $w = s-t$.
\end{thm}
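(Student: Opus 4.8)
The plan is to exploit that both processes are stationary and Markovian with the same autoregressive structure, so that equality in law reduces to equality of their one-dimensional (stationary) marginals. By Lemma \ref{Lem:Yt1t2}, for fixed $\mathbf{x}$ the $\mathrm{OU}_{\wedge}$ process satisfies $Y_{t_2}(\mathbf{x}) = e^{-\lambda(t_2-t_1)}Y_{t_1}(\mathbf{x}) + \widetilde{U}(t_1,t_2,\mathbf{x})$ with an innovation independent of the past, while the classical $\mathrm{OU}$ process obeys $Z_{t_2} = e^{-\lambda(t_2-t_1)}Z_{t_1} + \int_{t_1}^{t_2} e^{-\lambda(t_2-s)}\tilde{L}(\mathrm{d}s)$ with an innovation independent of $Z_{t_1}$. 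Both are stationary (Theorem \ref{thm:stationt} and the classical case) with the same contraction factor $e^{-\lambda(t_2-t_1)}$. Taking characteristic functions of each recursion gives $\Psi^{\mathrm{innov}}(\theta) = \Psi^{\mathrm{marg}}(\theta) - \Psi^{\mathrm{marg}}(e^{-\lambda(t_2-t_1)}\theta)$, so once the marginal CGFs agree the innovation laws agree automatically; combined with the Markov property this forces all finite-dimensional distributions to coincide. Hence it is enough to prove $Y_{t}(\mathbf{x}) \stackrel{d}{=} Z_{t}$.

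To compare the marginals I would write both CGFs as integrals of the seed CGF. Applying Theorem \ref{thm:GCF} with $v(\mathrm{d}\mathbf{x},\mathrm{d}t) = \delta_{t}(\mathrm{d}t)\delta_{\mathbf{x}}(\mathrm{d}\mathbf{x})$ and substituting $\mathbf{u}=\xi-\mathbf{x}$, $w=s-t$ gives $C\{\theta \ddagger Y_t(\mathbf{x})\} = \int_A C\{\theta e^{\lambda w} \ddagger L'\}\,\mathrm{d}\mathbf{u}\,\mathrm{d}w$, while the analogous computation for the classical process, substituting $w=s-t$, yields $C\{\theta \ddagger Z_t\} = \int_{-\infty}^{0} C\{\theta e^{\lambda w} \ddagger \tilde{L}'\}\,\mathrm{d}w$. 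Inserting the LK representation of each seed, the task becomes matching the two expressions term by term.

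The Gaussian and jump parts are comparatively straightforward. Equating the $\theta^2$ terms uses $\int_{-\infty}^{0} e^{2\lambda w}\mathrm{d}w = 1/(2\lambda)$ and immediately produces $\tilde{b} = 2\lambda \int_A b\,e^{2\lambda w}\,\mathrm{d}\mathbf{u}\,\mathrm{d}w$. For the jump parts I would, inside each integral, change variables $y = e^{\lambda w}z$, pushing $\nu$ (respectively $\tilde{\nu}$) forward to $\nu(e^{-\lambda w}\mathrm{d}y)$; equating the resulting canonical L\'evy measures of the two marginals gives the third condition $\int_{-\infty}^{0}\tilde{\nu}(e^{-\lambda w}\mathrm{d}y)\mathrm{d}w = \int_A \nu(e^{-\lambda w}\mathrm{d}y)\mathrm{d}\mathbf{u}\,\mathrm{d}w$. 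Throughout, the integrability hypothesis on $h_A$ together with Fubini's theorem justifies interchanging the spatio-temporal integration with the integration against the L\'evy measure.

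The main obstacle is the drift term, because the scaling $z \mapsto e^{\lambda w}z$ does not respect the truncation $\mathbf{1}_{|z|\le 1}$ used in the LK formula: since $w \le 0$ on $A$ we have $e^{\lambda w}\le 1$, so the scaled truncation $\mathbf{1}_{|y|\le e^{\lambda w}}$ must be rewritten as $\mathbf{1}_{|y|\le 1} - \mathbf{1}_{e^{\lambda w}<|y|\le 1}$ to recover canonical form. The leftover indicator $\mathbf{1}_{e^{\lambda w}<|y|\le 1}$, transformed back to the region $\mathbf{1}_{1<|z|\le e^{-\lambda w}}$, contributes a correction to the effective drift on each side; collecting these corrections for $Y$ (with $\nu$) and for $Z$ (with $\tilde{\nu}$), equating the two effective drifts and solving for $\tilde{a}$ gives the stated formula. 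I would take particular care with the bookkeeping of these truncation corrections and with checking that each correction integral is finite (again from the $h_A$ integrability), since this is precisely where the asymmetric structure of the $\nu$- and $\tilde{\nu}$-terms in the expression for $\tilde{a}$ originates.
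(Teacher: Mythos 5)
Your proposal is correct and takes essentially the same route as the paper: equate the one-dimensional marginals by matching the L\'evy--Khintchine triplets coming from the CGF representation (the content of Theorem \ref{thm:cgfform}, including the truncation bookkeeping for the drift and the change of variables $y = e^{\lambda w}z$ for the jump part), then lift to equality as processes via Lemma \ref{Lem:Yt1t2}, stationarity and Markovianity --- exactly the paper's argument, merely with the two stages in reverse order. The only step you omit is the paper's verification that the measure $\tilde{\nu}$ determined by the third condition satisfies $\int_{\mathbb{R}}\min(1,z^{2})\,\tilde{\nu}(\mathrm{d}z)<\infty$, which is immaterial under your reading that $Z_{t}$ is given in advance as a well-defined $\mathrm{OU}$ process.
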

\vspace{4mm}
\begin{Eg}
\label{eg:gclassOUhOU}
Let $Y_{t}(x)$ be an $\mathrm{OU}_{\wedge}$ process of the $g$-class and suppose that the L\'evy seeds $L'$ and $\tilde{L'}$ have L\'evy densities, i.e. there exists measures $\nu_{d}$ and $\tilde{\nu}_{d}$ such that $\nu(\mathrm{d}y) = \nu_{d}(y)\mathrm{d}y$ and $\tilde{\nu}(\mathrm{d}y) = \tilde{\nu}_{d}(y)\mathrm{d}y$. We can find an explicit formula for $\tilde{\nu}$ by translating the third condition of Theorem \ref{thm:OUh-OU} into $\int_{-\infty}^{0}e^{-\lambda w}\tilde{\nu}_{d}(e^{-\lambda w}y)\mathrm{d}w = \int_{-\infty}^{0}2g(-w) e^{-\lambda w}\nu_{d}(e^{-\lambda w}y)\mathrm{d}w$, $\forall y\in\mathbb{R}$. By substituting $z  = e^{-\lambda w} y$, we have:
\begin{align}
\int_{y}^{\infty} \frac{\tilde{v}_{d}(z)}{\lambda y} \mathrm{d}z &= \int_{y}^{\infty} \frac{2}{\lambda y}g\left(\frac{\log(z/y)}{\lambda}\right) {v}_{d}(z) \mathrm{d}z \Leftrightarrow \int_{y}^{\infty} \tilde{v}_{d}(z) \mathrm{d}z = \int_{y}^{\infty} 2 g\left(\frac{\log(z/y)}{\lambda}\right)  {v}_{d}(z) \mathrm{d}z \label{eqn:LDrelation} \\
\Rightarrow \tilde{\nu}(\mathrm{d}z) &= 2 \left[(\lambda z)^{-1} \left(\int_{z}^{\infty}\left.\frac{\delta g(w)}{\delta w}\right|_{w = \lambda^{-1}\log(z/y)}\nu_{d}(y)\mathrm{d}y\right)\mathrm{d}z + g(0)\nu(\mathrm{d}y)\right], \nonumber
\end{align}
where we have used Leibniz's Integral Rule to differentiate both sides of (\ref{eqn:LDrelation}) with respect to $y$.  
\end{Eg}
\vspace{4mm}
\begin{Eg}
\label{eg:LsubOUhOU}
Let $Y_{t}(x)$ be a canonical $\mathrm{OU}_{\wedge}$ process as described in (\ref{eqn:TestOUh}). If $L$ is a Gaussian basis whose seed has mean $\mu$ and standard deviation $\tau$, for fixed $x$, $Y_{t}(x) \stackrel{d}{=} Z_{t}$ where $Z_{t}$ is an $\mathrm{OU}$ process defined by $Z_{t} = \int_{-\infty}^{t} \exp(-\lambda(t-s)) \widetilde{L}(\mathrm{d}s)$. Here, $\tilde{L}$ is a Gaussian basis whose seed has mean $\tilde{\mu} = 2c\mu/\lambda$ and standard deviation $\tilde{\tau} = \sqrt{c\tau^{2}/\lambda}$.
\end{Eg}
In general, the L\'evy measure of $L$, $\nu$ is related to that of $\tilde{L}$, $\tilde{\nu}$ via a transformation depending on $\lambda$ and the ambit set. This means that the distribution of the L\'evy process $\tilde{L}$ corresponding to $Z_{t}$ may not be of the same family as that of the L\'evy basis $L$ corresponding to $Y_{t}(\mathbf{x})$.

\subsection{Time-changed versions}

Time-changed versions of classical OU processes have been studied by Barndorff-Nielsen and Shephard. These were introduced to eliminate dependence of the cumulants on the parameter $\lambda$ \cite[]{BS2001}. This means that $\lambda$ only appears in the ACFs and can be interpreted as a memory parameter. For the $\mathrm{OU}_{\wedge}$ process, we have an additional spatial dimension. As a result, we have more than one way to introduce a time change. Note that these methods are based on particular classes of $\mathrm{OU}_{\wedge}$ processes. The first time-change method is taken directly from the temporal idea:
\\
\begin{Def}[Time-changed $\mathrm{OU}_{\wedge}$ process of order $q$]
Let $L$, $\lambda$ and $A_{t}(\mathbf{x})$ satisfy the conditions in the definition of an $\mathrm{OU}_{\wedge}$ process. $Y_{t}(\mathbf{x})$ is an time-changed $\mathrm{OU}_{\wedge}$ ($\mathrm{TCOU}_{\wedge}$) process of order $q \in \mathbb{N}$ if:
\begin{equation*}
Y_{t}(\mathbf{x}) = \int_{A_{t}(\mathbf{x})} \exp(-\lambda(t-s)) L(\mathrm{d}\xi, \lambda^{q}\mathrm{d}s), \text{ and the integral is well-defined.}
\end{equation*}
\end{Def}

\begin{thm}
\label{thm:tcform1}
Consider the following $(q+ 1)^{\text{th}}$ order $\mathrm{TCOU}_{\wedge}$ process:
\begin{equation}
Y_{t}(x) = \int_{-\infty}^{t}\int_{x - c|t-s|^{q}}^{x + c|t-s|^{q}} \exp(-\lambda(t-s)) L(\mathrm{d}\xi, \lambda^{q+1} \mathrm{d}s), \label{eqn:TCOUhq}
\end{equation}
where $q \in \mathbb{N}$ is chosen such that the ambit set satisfies the conditions (\ref{eqn:ambitset}). The cumulants of $Y_{t}(x)$ do not depend on $\lambda$.
\end{thm}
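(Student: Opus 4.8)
The plan is to compute the $l$-th cumulant of $Y_{t}(x)$ in closed form and exhibit the cancellation of every power of $\lambda$. First I would specialise the generalised cumulant functional of Theorem \ref{thm:GCF} to the single-point measure $v(\mathrm{d}\mathbf{x},\mathrm{d}t)=\delta_{t}(\mathrm{d}t)\delta_{x}(\mathrm{d}\mathbf{x})$, which by Definition \ref{defn:cgf} returns the CGF of $Y_{t}(x)$ itself. The only adjustment forced by the time change in (\ref{eqn:TCOUhq}) is to replace the temporal Lebesgue measure $\mathrm{d}s$ by $\lambda^{q+1}\mathrm{d}s$, reflecting the driving basis $L(\mathrm{d}\xi,\lambda^{q+1}\mathrm{d}s)$; the deterministic kernel and the ambit indicator are untouched. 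This gives
\[
C\{\theta \ddagger Y_{t}(x)\} = \lambda^{q+1}\int_{S} C\{\theta h_{A}(\xi,s) \ddagger L'\}\,\mathrm{d}\xi\,\mathrm{d}s,
\]
where $h_{A}(\xi,s)=\mathbf{1}_{A}(\xi-x,s-t)\exp(-\lambda(t-s))$ and $A=A_{0}(\mathbf{0})$ is the curved triangular ambit set of the process.

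Next I would expand the inner seed CGF using its cumulant series from Definition \ref{defn:CGF}, namely $C\{\theta h_{A}\ddagger L'\}=\sum_{l\geq1}\kappa_{l}(L')(i\theta h_{A})^{l}/l!$, interchange summation and integration, and match coefficients of $(i\theta)^{l}/l!$ against the defining expansion $C\{\theta\ddagger Y_{t}(x)\}=\sum_{l\geq1}\kappa_{l}(Y_{t}(x))(i\theta)^{l}/l!$. Using that the indicator is idempotent ($h_{A}^{l}=\mathbf{1}_{A}\exp(-l\lambda(t-s))$), this produces
\[
\kappa_{l}(Y_{t}(x)) = \kappa_{l}(L')\,\lambda^{q+1}\int_{-\infty}^{t}\!\int_{x-c|t-s|^{q}}^{x+c|t-s|^{q}}\exp(-l\lambda(t-s))\,\mathrm{d}\xi\,\mathrm{d}s.
\]
Integrating over $\xi$ first contributes the slice width $2c|t-s|^{q}$ and reduces the expression to a single temporal integral over $s\in(-\infty,t]$.

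The decisive step is the substitution that exposes the cancellation. Setting $w=t-s$ and then $u=\lambda w$ yields
\[
\kappa_{l}(Y_{t}(x)) = 2c\,\kappa_{l}(L')\,\lambda^{q+1}\int_{0}^{\infty} w^{q}\exp(-l\lambda w)\,\mathrm{d}w = 2c\,\kappa_{l}(L')\,\lambda^{q+1}\lambda^{-(q+1)}\int_{0}^{\infty} u^{q}\exp(-lu)\,\mathrm{d}u,
\]
so that the time-change prefactor $\lambda^{q+1}$ exactly annihilates the Jacobian factor $\lambda^{-(q+1)}$ generated by the rescaling. The surviving Gamma integral equals $q!/l^{q+1}$, whence $\kappa_{l}(Y_{t}(x))=2c\,q!\,\kappa_{l}(L')/l^{q+1}$, which carries no dependence on $\lambda$. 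The computation itself is routine; the one point genuinely worth articulating is \emph{why} the order $q+1$ of the time change is the correct match for an ambit width growing like $w^{q}$, since it is precisely the balance between $\lambda^{q+1}$ and the homogeneity of $w^{q}\exp(-l\lambda w)$ under the scaling $w\mapsto u/\lambda$ that forces the cancellation. I would also record that the interchange of sum and integral, and hence convergence of the cumulant series, is legitimate under the standing integrability assumption that $h_{A}$ is $L$-integrable and that $L'$ possesses the relevant finite cumulants.
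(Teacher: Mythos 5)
Your proposal is correct and takes essentially the same route as the paper: both start from the CGF representation of Theorem \ref{thm:cgfform} with the extra factor $\lambda^{q+1}$ coming from the time change, and expose the cancellation via the scaling substitution $w = \lambda(t-s)$, under which the ambit width $c|t-s|^{q}$ and the Jacobian jointly contribute the factor $\lambda^{-(q+1)}$. The only difference is cosmetic: the paper stops once the substitution shows the entire CGF equals $\int_{0}^{\infty} 2cw^{q}\, C\{\theta e^{-w} \ddagger L'\}\,\mathrm{d}w$, which is $\lambda$-free (so all cumulants are automatically $\lambda$-free), whereas you expand the seed CGF into its cumulant series and compute each $\kappa_{l}$ explicitly, which costs the series-interchange caveat you rightly flag but yields the explicit value $2c\,q!\,\kappa_{l}(L')/l^{q+1}$ as a bonus.
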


Another way to eliminate the $\lambda$ dependence in the cumulants is to define a first-order $\mathrm{TCOU}_{\wedge}$ process with time-changed ambit sets. This means that $A_{t}(x)$ depends on the parameter $\lambda$. 
\\
\begin{thm}
\label{thm:tcform2}
Let $g$ be a non-negative strictly increasing continuous function on $[0, \infty)$. We can construct an $\mathrm{TCOU}_{\wedge}$ process as follows:
\begin{equation}
Y_{t}(x) = \int_{-\infty}^{t}\int_{x - g(\lambda|t-s|)}^{x + g(\lambda|t-s|)} \exp(-\lambda(t-s)) L(\mathrm{d}\xi, \lambda \mathrm{d}s), \label{eqn:TCOUhg}
\end{equation}
where $x , t \in \mathbb{R}$ and $\lambda > 0$. $Y_{t}(x)$ has cumulants which do not depend on $\lambda$.
\end{thm}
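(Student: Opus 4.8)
The plan is to show directly that the cumulant generating function (CGF) of $Y_{t}(x)$, and hence each of its cumulants via Definition \ref{defn:CGF}, carries no dependence on $\lambda$. The mechanism mirrors the purely temporal construction of \cite{BS2001} and the argument behind Theorem \ref{thm:tcform1}: the scaling of the temporal control measure by $\lambda$ in the integrator $L(\mathrm{d}\xi, \lambda\mathrm{d}s)$ is engineered to cancel exactly the $\lambda$ that enters through the exponential kernel and the $\lambda$-dilated ambit set.

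First I would write down the CGF of $Y_{t}(x)$. Specialising the GCF of Theorem \ref{thm:GCF} to the single space-time point $v(\mathrm{d}x,\mathrm{d}t) = \delta_{t}(\mathrm{d}t)\delta_{x}(\mathrm{d}x)$ (equivalently invoking Theorem \ref{thm:cgfform}) and accounting for the time change by replacing the temporal Lebesgue control measure $\mathrm{d}s$ with $\lambda\,\mathrm{d}s$, I obtain
\begin{equation*}
C\{\theta \ddagger Y_{t}(x)\} = \lambda\int_{-\infty}^{t}\int_{x - g(\lambda(t-s))}^{x + g(\lambda(t-s))} C\{\theta\exp(-\lambda(t-s)) \ddagger L'\}\,\mathrm{d}\xi\,\mathrm{d}s.
\end{equation*}
Since neither the kernel $\exp(-\lambda(t-s))$ nor the seed CGF depends on the spatial variable $\xi$, the inner integral simply contributes the ambit-set width $2g(\lambda(t-s))$. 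I would then substitute $w = t-s>0$ (which reverses the limits of integration) and subsequently $u = \lambda w$, under which $\mathrm{d}w = \lambda^{-1}\mathrm{d}u$; the leading factor $\lambda$ is consumed exactly by the Jacobian $\lambda^{-1}$, leaving
\begin{equation*}
C\{\theta \ddagger Y_{t}(x)\} = 2\int_{0}^{\infty} g(u)\, C\{\theta\exp(-u) \ddagger L'\}\,\mathrm{d}u,
\end{equation*}
which is manifestly free of $\lambda$.

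Finally I would read off the cumulants. Expanding the seed's CGF as $C\{\theta\exp(-u)\ddagger L'\} = \sum_{l\geq 1}\kappa_{l}(L')(i\theta)^{l} e^{-lu}/l!$ and interchanging sum and integral yields $\kappa_{l}(Y_{t}(x)) = 2\kappa_{l}(L')\int_{0}^{\infty} g(u)e^{-lu}\,\mathrm{d}u$, so that every cumulant is a $\lambda$-free multiple of the corresponding seed cumulant, as claimed.

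The one step requiring genuine care — and the main obstacle — is the first one: justifying that the time change $L(\mathrm{d}\xi,\lambda\mathrm{d}s)$ enters the CGF precisely as a factor $\lambda$ on the temporal Lebesgue measure, i.e.\ that a homogeneous basis with seed $L'$ is effectively integrated against the dilated control measure $\lambda\,\mathrm{d}\xi\mathrm{d}s$. Once this translation is in place, the double change of variables and the cancellation are routine; the finiteness of $\int_{0}^{\infty} g(u)e^{-lu}\,\mathrm{d}u$ is guaranteed by the well-definedness of the integral assumed in the statement together with the growth constraints on $g$ that render the ambit set admissible under \eqref{eqn:ambitset}.
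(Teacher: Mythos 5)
Your proposal is correct and follows essentially the same route as the paper's proof: both express $C\{\theta \ddagger Y_{t}(x)\}$ via Theorem \ref{thm:cgfform} with the time-changed control measure contributing the factor $\lambda$, integrate out $\xi$ to pick up the ambit width $2g(\lambda(t-s))$, and cancel the $\lambda$ through the substitution $w = \lambda(t-s)$ (which you merely split into the two steps $w = t-s$, $u = \lambda w$), arriving at $2\int_{0}^{\infty} g(u)\, C\{\theta e^{-u} \ddagger L'\}\,\mathrm{d}u$. Your final paragraph extracting $\kappa_{l}(Y_{t}(x)) = 2\kappa_{l}(L')\int_{0}^{\infty} g(u)e^{-lu}\,\mathrm{d}u$ is a harmless explicit addition beyond the paper's observation that the CGF itself is $\lambda$-free.
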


The equality in law of the $\mathrm{OU}_{\wedge}$ and the $\mathrm{OU}$ processes at fixed locations can also be preserved after a time-change:
\\
\begin{thm}
\label{thm:tcourelation}
Let $Y^{1}_{t}(\mathbf{x})$ be a $\mathrm{TCOU}_{\wedge}$ process of the type (\ref{eqn:TCOUhq}) where the L\'evy seed $L_{1}'$ has LK triplet $(a_{1}, b_{1}, \nu_{1})$ and let $Y^{2}_{t}(\mathbf{x})$ be a $\mathrm{TCOU}_{\wedge}$ process of the type (\ref{eqn:TCOUhg}) where the L\'evy seed $L_{2}'$ has LK triplet $(a_{2}, b_{2}, \nu_{2})$. Further define  $Z^{1}_{t} =  \int_{-\infty}^{t}\exp(-\lambda (t-s)) \tilde{L}_{1}(\lambda\mathrm{d}s)$ and $Z^{2}_{t} =  \int_{-\infty}^{t}\exp(-\lambda (t-s)) \tilde{L}_{2}(\lambda\mathrm{d}s)$ to be time-changed $\mathrm{OU}$ ($\mathrm{TCOU}$) processes with corresponding L\'evy seed LK triplets $(\tilde{a}_{1}, \tilde{b}_{1}, \tilde{\nu}_{1})$ and  $(\tilde{a}_{2}, \tilde{b}_{2}, \tilde{\nu}_{2})$ respectively. Then, for fixed $\mathbf{x}\in\mathbb{R}^{d}$, the process $\{Y^{1}_{t}(\mathbf{x})\}_{t\in\mathbb{R}}$ is equal in law to $\{Z^{1}_{t}\}_{t\in\mathbb{R}}$ if:   
\begin{align*}
\tilde{a}_{1}&= \int_{0}^{\infty} 2cw^{q} e^{-w}  \left[a_{1}- \int_{\mathbb{R}}z\mathbf{1}_{1<z\leq e^{w}} \nu_{1}(\mathrm{d}z)\right]\mathrm{d}w + \int_{0}^{\infty} e^{- w}  \int_{\mathbb{R}}z\mathbf{1}_{1<z\leq e^{w}} \tilde{\nu}_{1}(\mathrm{d}z)\mathrm{d}w\\
\tilde{b}_{1} &= 4\int_{0}^{\infty} b_{1}cw^{q}  e^{-2w} \mathrm{d}w \hspace{2mm} \text{ and } \hspace{2mm} \int_{0}^{\infty}\tilde{\nu}_{1}(e^{w}\mathrm{d} y)\mathrm{d}w = \int_{0}^{\infty}2cw^{q}\nu_{1}(e^{ w}\mathrm{d} y)\mathrm{d}w,
\end{align*}
where $\mathbf{u} = \mathbf{x} - \xi$ and $w = t-s$. Similarly, for fixed $\mathbf{x}\in\mathbb{R}^{d}$, the process $\{Y^{2}_{t}(\mathbf{x})\}_{t\in\mathbb{R}}$ is equal in law to $\{Z^{2}_{t}\}_{t\in\mathbb{R}}$ if:   
\begin{align*}
\tilde{a}_{2}&= \int_{0}^{\infty} 2g(w) e^{-w}  \left[a_{2}- \int_{\mathbb{R}}z\mathbf{1}_{1<z\leq e^{w}} \nu_{2}(\mathrm{d}z)\right]\mathrm{d}w + \int_{0}^{\infty} e^{- w}  \int_{\mathbb{R}}z\mathbf{1}_{1<z\leq e^{w}} \tilde{\nu}_{2}(\mathrm{d}z)\mathrm{d}w\\
\tilde{b}_{2} &= 4\int_{0}^{\infty} b_{2}g(w)  e^{-2w} \mathrm{d}w \hspace{2mm} \text{ and } \hspace{2mm} \int_{0}^{\infty}\tilde{\nu}_{2}(e^{w}\mathrm{d} y)\mathrm{d}w = \int_{0}^{\infty}2g(w)\nu_{2}(e^{ w}\mathrm{d} y)\mathrm{d}w.
\end{align*}
\end{thm}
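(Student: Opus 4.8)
The plan is to treat this as the time-changed analogue of Theorem~\ref{thm:OUh-OU} and to reuse that proof's architecture. First I would note that, for fixed $\mathbf{x}$, both $\{Y^{1}_{t}(\mathbf{x})\}_{t\in\mathbb{R}}$ and $\{Z^{1}_{t}\}_{t\in\mathbb{R}}$ are stationary Markov processes: the time change $\lambda^{q+1}\mathrm{d}s$ only rescales the controlling Lebesgue measure of the L\'evy basis, so the nested-ambit-set decomposition of Lemma~\ref{Lem:Yt1t2} and the stationarity and Markovianity of Theorem~\ref{thm:stationt} carry over. In particular each process (write $W$ for either $Y^{1}(\mathbf{x})$ or $Z^{1}$) obeys an autoregressive relation $W_{t_{2}} = e^{-\lambda(t_{2}-t_{1})}W_{t_{1}} + U$ with an innovation $U$ independent of $W_{t_{1}}$ and the \emph{same} coefficient $e^{-\lambda(t_{2}-t_{1})}$. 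Taking CGFs of a stationary pair gives $C_{U}(\theta) = C_{\pi}(\theta) - C_{\pi}(e^{-\lambda(t_{2}-t_{1})}\theta)$, where $C_{\pi}$ is the CGF of the stationary marginal; hence the transition law is determined by $\lambda$ together with $C_{\pi}$, and once the two marginals coincide the Markov property forces all finite-dimensional distributions to coincide. It therefore suffices to match the single-point marginal CGFs.

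Second, I would compute the marginal CGF of $Y^{1}_{t}(x)$ from the generalised cumulant functional of Theorem~\ref{thm:GCF}, taking $v = \delta_{t}\delta_{x}$ but replacing the temporal Lebesgue measure by $\lambda^{q+1}\mathrm{d}s$ to reflect the time change in (\ref{eqn:TCOUhq}). With $A = A_{0}(0)$ and $h_{A}(\xi,s) = \mathbf{1}_{A}(\xi-x,\,s-t)\,e^{-\lambda(t-s)}$ the CGF equals $\lambda^{q+1}\int_{S}C\{\theta h_{A}(\xi,s)\ddagger L'_{1}\}\,\mathrm{d}\xi\,\mathrm{d}s$. For fixed $s\le t$ the integrand is independent of $\xi$, so the spatial integral over $[x-c(t-s)^{q},\,x+c(t-s)^{q}]$ contributes only its width $2c(t-s)^{q}$; the substitutions $w=t-s$ and then $u=\lambda w$ collapse everything to the $\lambda$-free expression $2c\int_{0}^{\infty}u^{q}\,C\{\theta e^{-u}\ddagger L'_{1}\}\,\mathrm{d}u$, consistent with the $\lambda$-independence asserted in Theorem~\ref{thm:tcform1}. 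Applying the same substitutions to $Z^{1}_{t}=\int_{-\infty}^{t}e^{-\lambda(t-s)}\tilde{L}_{1}(\lambda\mathrm{d}s)$ gives $\int_{0}^{\infty}C\{\theta e^{-u}\ddagger\tilde{L}'_{1}\}\,\mathrm{d}u$, so equality in law reduces to the single identity $\int_{0}^{\infty}C\{\theta e^{-u}\ddagger\tilde{L}'_{1}\}\,\mathrm{d}u = 2c\int_{0}^{\infty}u^{q}\,C\{\theta e^{-u}\ddagger L'_{1}\}\,\mathrm{d}u$ for every $\theta$.

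Third, I would substitute the L\'evy--Khintchine form of each seed CGF and equate the drift, Gaussian and jump contributions separately, reading off the three stated conditions on $(\tilde{a}_{1},\tilde{b}_{1},\tilde{\nu}_{1})$. Matching the $\theta^{2}$ coefficients yields the $\tilde{b}_{1}$ relation, and rescaling the jump variable $z\mapsto e^{-u}z$ inside the jump integral yields the relation between $\tilde{\nu}_{1}$ and $\nu_{1}$. I expect the main obstacle to be the compensator bookkeeping: the rescaling $z\mapsto e^{-u}z$ moves the truncation threshold, so the indicator $\mathbf{1}_{|z|\le 1}$ no longer matches between the two sides and one must add and subtract the associated $\int_{\mathbb{R}}z\,\mathbf{1}_{1<z\le e^{u}}\,\nu_{1}(\mathrm{d}z)$ terms. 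These are exactly the drift-correction terms visible in the formula for $\tilde{a}_{1}$, and pinning down their ranges and signs (while checking the integrability needed to justify Fubini and the interchange of the distinguished logarithm with the integrals) is the delicate step, just as in Theorem~\ref{thm:OUh-OU}.

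Finally, the second assertion requires no new idea. Replacing the ambit half-width $c(t-s)^{q}$ by $g(\lambda(t-s))$ and the time change $\lambda^{q+1}\mathrm{d}s$ by $\lambda\mathrm{d}s$, as in (\ref{eqn:TCOUhg}), the same spatial integration together with $w=t-s$ and $u=\lambda w$ turns the marginal CGF of $Y^{2}_{t}(x)$ into $2\int_{0}^{\infty}g(u)\,C\{\theta e^{-u}\ddagger L'_{2}\}\,\mathrm{d}u$. Equating this with $\int_{0}^{\infty}C\{\theta e^{-u}\ddagger\tilde{L}'_{2}\}\,\mathrm{d}u$ and repeating the term-by-term matching reproduces the conditions on $(\tilde{a}_{2},\tilde{b}_{2},\tilde{\nu}_{2})$ with $cu^{q}$ replaced by $g(u)$ throughout.
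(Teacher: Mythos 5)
Your proposal is correct and follows essentially the same route as the paper's proof: you derive the $\lambda$-free marginal CGFs exactly as in the proofs of Theorems \ref{thm:tcform1} and \ref{thm:tcform2}, equate the L\'evy--Khintchine components term by term to read off the stated conditions, and lift equality of marginals to equality in law as processes via the Markov property and the autoregressive decomposition of Lemma \ref{Lem:Yt1t2}, which is precisely how the paper reduces this theorem to the arguments of Theorem \ref{thm:OUh-OU}. The only step the paper flags that you leave implicit is the verification that the stipulated $\tilde{\nu}_{i}$ are valid L\'evy measures, which carries over by the same comparison argument used in the proof of Theorem \ref{thm:OUh-OU}.
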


\begin{Eg}
When $L_{1}'$ and $L_{2}'$ have L\'evy densities $\nu^{1}_{d}$ and $\nu^{2}_{d}$ respectively, we can find explicit formulae for $\tilde{\nu}_{1}$ and $\tilde{\nu}_{2}$. Similar to Example \ref{eg:gclassOUhOU}, we can write the L\'evy measure conditions in terms of L\'evy densities and use Leibniz's Integral Rule. This gives us: 
\begin{equation*}
\tilde{v}_{1}(\mathrm{d}y) = 2cq y^{-1}\int_{y}^{\infty}\left(\log\left(z/y\right)\right)^{q-1}\nu_{1}(\mathrm{d}z)\mathrm{d}y \text{ and } \tilde{v}_{2}(\mathrm{d}y) =  2\left[y^{-1} \int_{y}^{\infty}\left.\frac{\mathrm{d}g(x)}{\mathrm{d}x}\right|_{x = \log\left(z/y\right)}\nu_{2}(\mathrm{d}z)\mathrm{d}y + g(0)\nu_{2}(\mathrm{d}y)\right].
\end{equation*}
\end{Eg}
As mentioned, working with $\mathrm{TCOU}_{\wedge}$ processes allows us to interpret the parameter $\lambda$ solely as the memory parameter.

\subsection{Ergodicity}
\label{sec:ergodic}
Stationarity and autocorrelations are useful for estimating the parameters of an $\mathrm{OU}_{\wedge}$ process. Another useful property is ergodicity which is typically used to derive a Weak Law of Large Numbers. This in turn is used to establish the consistency of the estimators. We find that the $g$-class of $\mathrm{OU}_{\wedge}$ processes possesses this property in time and space:
\\
\begin{thm}
\label{thm:ergodicity}
Let $Y_{t}(x)$ be defined by (\ref{eqn:gclass}). Then, $\{Y_{t}(x)\}_{t\in\mathbb{R}}$ and $\{Y_{t}(x)\}_{x\in\mathbb{R}}$ are ergodic. 
\end{thm}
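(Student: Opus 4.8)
The plan is to establish the stronger property of mixing and then invoke the fact, recorded in the Preliminaries, that mixing implies ergodicity. Since both $\{Y_{t}(x)\}_{t\in\mathbb{R}}$ and $\{Y_{t}(x)\}_{x\in\mathbb{R}}$ are strictly stationary (Theorem \ref{thm:tsstation}), it suffices to verify $P(A\cap S^{\tau}B)\to P(A)P(B)$ for $A,B$ ranging over a generating algebra of finite-dimensional cylinder sets. By a standard approximation argument this reduces to showing that finite-dimensional joint characteristic functions factorise asymptotically under a large temporal shift $\tau$ (for temporal mixing) or a large spatial shift $u$ (for spatial mixing). Writing the field as the infinitely divisible moving average $Y_{t}(x)=\int_{S} f_{t,x}(\xi,s)\,L(\mathrm{d}\xi,\mathrm{d}s)$ with kernel $f_{t,x}(\xi,s)=\mathbf{1}_{A_{t}(x)}(\xi,s)\exp(-\lambda(t-s))$, I would apply the mixing criterion for infinitely divisible processes of \cite{FS2013}, which expresses mixing through the vanishing of the interaction between a kernel and its far-shifted copy.

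The technical heart is therefore the GCF of Theorem \ref{thm:GCF}. Collecting the two groups of evaluation points into a weight $h_{1}$ and the shifted copy $h_{2}^{(\tau)}$ (in the notation of the proof of Theorem \ref{thm:tsstation}, so the combined weight is $h_{1}+h_{2}^{(\tau)}$), asymptotic factorisation is equivalent to
\[
\int_{S}\!\Big(C\{h_{1}+h_{2}^{(\tau)}\ddagger L'\}-C\{h_{1}\ddagger L'\}-C\{h_{2}^{(\tau)}\ddagger L'\}\Big)\,\mathrm{d}\xi\,\mathrm{d}s\;\longrightarrow\;0.
\]
Because $C\{0\ddagger L'\}=0$, the integrand telescopes to zero wherever $h_{1}$ or $h_{2}^{(\tau)}$ vanishes, so the integral is supported only on the overlap of the relevant (possibly shifted) ambit sets. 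On this overlap I would control the Lévy--Khintchine integrand termwise: the drift and Gaussian contributions are bounded by a multiple of $\int_{S}|h_{1}|\,|h_{2}^{(\tau)}|\,\mathrm{d}\xi\,\mathrm{d}s$, which is precisely the covariance-type quantity appearing in (\ref{eqn:GCorr}), while the jump contribution $\int_{\mathbb{R}}(\exp(i(h_{1}+h_{2}^{(\tau)})z)-\exp(ih_{1}z)-\exp(ih_{2}^{(\tau)}z)+1)\,\nu(\mathrm{d}z)$ is treated via the elementary bound on $|\exp(iaz)-1-iaz\mathbf{1}_{|z|\le1}|$ together with the truncation built into the LK formula, again reducing matters to the pointwise smallness of $h_{2}^{(\tau)}$ on the overlap.

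For the temporal statement the shifted kernel carries the factor $\exp(-\lambda(s_{l}+\tau-s))=\exp(-\lambda\tau)\exp(-\lambda(s_{l}-s))$, so on the support of $h_{1}$ one has $h_{2}^{(\tau)}=O(\exp(-\lambda\tau))$ and the interaction vanishes as $\tau\to\infty$, in line with $\rho^{(T)}(d_{t})=\exp(-\lambda|d_{t}|)\to0$. For the spatial statement the two groups of ambit sets are centred at spatial distance $|u|$, so at fixed $t$ their intersection is forced into the deep past: overlap requires $g(t-s)\ge|u|/2$, i.e. $s\le t-g^{-1}(|u|/2)$, on which the kernel weight is at most $\exp(-\lambda\,g^{-1}(|u|/2))$. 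Since $g$ is non-negative, strictly increasing and continuous, $g^{-1}(|u|/2)\to\infty$ as $|u|\to\infty$, so this weight, and hence the interaction, again vanishes, consistent with the decay of $\rho^{(S)}$ computed in Example \ref{eg:Scorr}.

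The main obstacle I anticipate is the jump term: passing to the limit inside the $\nu$-integral requires a dominating function valid even without assuming finite second moments. I would obtain this by exploiting the $z$-truncation of the LK integrand and the uniform (in $z$) exponential (temporal) or deep-past (spatial) smallness of $h_{2}^{(\tau)}$, so that dominated convergence applies and the interaction cumulant tends to zero in both regimes. Mixing in time and in space then follows, and ergodicity is immediate.
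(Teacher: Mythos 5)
Your route is genuinely different from the paper's, and it is worth saying what the paper actually does: it does not prove mixing from scratch. It observes that for fixed $x$ the $g$-class process is a \emph{mixed moving average}, $Y_{t}(x)=\int_{\mathbb{R}}\int_{\mathbb{R}}f_{x}(\zeta,t-s)\,L(\mathrm{d}s,\mathrm{d}\zeta)$ with mixing parameter $\zeta=x-\xi$ and $f_{x}(\zeta,t-s)=\mathbf{1}_{t-s>0}\mathbf{1}_{-g(t-s)\leq\zeta\leq g(t-s)}\exp(-\lambda(t-s))$, invokes a one-dimensional adaptation of Theorem 3.5 of \cite{FS2013} (every such stationary mixed moving average is mixing), and then swaps the roles of $x$ and $t$ for the spatial statement. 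What you propose is essentially to re-derive that cited theorem in this special case via asymptotic factorisation of joint characteristic functions. Your skeleton is the standard and correct one, and the easy parts are handled correctly: the drift and the compensator are linear in the weight and cancel exactly in the interaction, and the Gaussian interaction reduces to $-b\int_{S}h_{1}h_{2}^{(\tau)}\,\mathrm{d}\xi\,\mathrm{d}s$, which does vanish.

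The genuine gap is in the jump term, exactly where you flag "the main obstacle", and the strategy you sketch does not close it. The interaction equals $\int_{S}\int_{\mathbb{R}}(e^{ih_{1}z}-1)(e^{ih_{2}^{(\tau)}z}-1)\,\nu(\mathrm{d}z)\,\mathrm{d}\xi\,\mathrm{d}s$ (note it is this product, not the truncated LK integrand of each term separately, that must be estimated: the individual terms do not vanish). You propose to kill it by uniform smallness of $h_{2}^{(\tau)}$ on the overlap plus dominated convergence, but two things break. First, the overlap is not a finite-measure set on which uniform smallness helps: by the nesting (\ref{eqn:ambitset}), in the temporal case $\mathrm{supp}\,h_{1}\subset\mathrm{supp}\,h_{2}^{(\tau)}$, so the overlap is \emph{all} of $\mathrm{supp}\,h_{1}$, which for the canonical set $g(w)=cw$ has infinite Lebesgue measure ($\int_{0}^{\infty}2cw\,\mathrm{d}w=\infty$). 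A bound $h_{2}^{(\tau)}=O(e^{-\lambda\tau})$ uniformly on this set proves nothing by itself; you need the stronger envelope bound $|h_{2}^{(\tau)}|\mathbf{1}_{\mathrm{supp}\,h_{1}}\leq Ce^{-\lambda\tau}\tilde{h}_{1}$, where $\tilde{h}_{1}=\sum_{i}|\theta_{i}|\mathbf{1}_{A_{t_{i}}(x_{i})}e^{-\lambda(t_{i}-s)}$ is an integrable envelope. Second, there is no obvious dominating function in $z$: the natural candidate $2\min(2,|h_{1}z|)$ behaves like $2|h_{1}||z|$ near $z=0$, and $\int_{|z|\leq1}|z|\,\nu(\mathrm{d}z)=\infty$ for, e.g., symmetric $\alpha$-stable seeds with $\alpha\in[1,2)$ --- which are admissible here, and which is precisely the infinite-variation, infinite-second-moment regime your proof is meant to cover. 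A working repair uses the product bound $\min(2,|h_{1}z|)\min(2,|h_{2}z|)\leq|h_{1}||h_{2}|z^{2}$ for small $z$ (together with $\int_{|z|\leq1}z^{2}\nu(\mathrm{d}z)<\infty$), the envelope bound above, and the \cite{RR1989} existence conditions to control the remaining regions --- which is, in substance, reproducing the proof of the Fuchs--Stelzer theorem. So either supply these estimates, or do what the paper does and reduce to the cited mixed-moving-average result.
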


\subsection{Probability distributions}
\label{sec:CGF}

Cumulants, together with normalised variograms, are used to conduct inference on $\mathrm{OU}_{\wedge}$ processes  in Section \ref{sec:siminfer}. We present a theorem relating the CGF of an $\mathrm{OU}_{\wedge}$ process $Y_{t}(\mathbf{x})$ to its L\'evy seed $L'$:

\clearpage

\begin{thm}
\label{thm:cgfform}
With $A = A_{0}(\mathbf{0})$, $\mathbf{u} = \xi - \mathbf{x}$ , $w = s-t$ and $(a, b, \nu)$ being the LK triplet of $L'$, the CGF of $Y_{t}(\mathbf{x})$ can be written as:
\begin{align}
C\{\theta \ddagger Y_{t}(\mathbf{x})\} &:= \log\left(\mathbb{E}\left[\exp\left(i\theta Y_{t}(\mathbf{x}\right)\right)]\right) = \int_{A_{t}(\mathbf{x})} C\{\theta e^{-\lambda(t-s)} \ddagger L'\} \mathrm{d}\xi \mathrm{d}s  = \int_{A} C\{\theta e^{\lambda w} \ddagger L'\} \mathrm{d}\mathbf{u} \mathrm{d}w \label{eqn:CCrelation}\\
&= i\theta a_{Y} - \frac{1}{2}\theta^{2} b_{Y} + \int_{\mathbb{R}} (e^{i\theta z} - 1 - i\theta z\mathbf{1}_{|z|\leq1}) \nu_{Y}(\mathrm{d}z), \nonumber
\end{align}
where $a_{Y} =  \int_{A} e^{\lambda w}  \left[a- \int_{\mathbb{R}}z\mathbf{1}_{1<|z|\leq e^{-\lambda w}} \nu(\mathrm{d}z)\right]\mathrm{d}\mathbf{u}\mathrm{d}w$, $b_{Y} = \int_{A} b e^{2\lambda w} \mathrm{d}\mathbf{u}\mathrm{d}w$ and $\nu_{Y}(\mathrm{d}y) = \int_{A}\nu(e^{-\lambda w}\mathrm{d} y)\mathrm{d}\mathbf{u}\mathrm{d}w$.
\end{thm}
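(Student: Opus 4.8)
The plan is to establish the three displayed equalities in turn, treating the first two as bookkeeping and reserving the real work for the passage to the canonical L\'evy--Khintchine form. First I would read off the two integral representations from the generalised cumulant functional: by Definition \ref{defn:cgf}, the CGF of $Y_t(\mathbf{x})$ is exactly the GCF of $Y$ against the point mass $v(\mathrm{d}\mathbf{x}',\mathrm{d}t')=\delta_t(\mathrm{d}t')\delta_{\mathbf{x}}(\mathrm{d}\mathbf{x}')$. Substituting this $v$ into Theorem \ref{thm:GCF} gives $h_A(\xi,s)=\mathbf{1}_A(\xi-\mathbf{x},s-t)\exp(-\lambda(t-s))$, so that $C\{\theta h_A(\xi,s)\ddagger L'\}$ equals $C\{\theta\exp(-\lambda(t-s))\ddagger L'\}$ on the set where $(\xi-\mathbf{x},s-t)\in A$ — which by translation invariance is precisely $A_t(\mathbf{x})$ — and vanishes elsewhere because $C\{0\ddagger L'\}=0$. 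This yields the first equality, and the unit-Jacobian change of variables $\mathbf{u}=\xi-\mathbf{x}$, $w=s-t$, which carries $A_t(\mathbf{x})$ onto $A$ and turns $-\lambda(t-s)$ into $\lambda w$, yields the second. (Equivalently, the first representation is the Rajput--Rosinski formula \cite{RR1989} for the log-characteristic function of the deterministic-kernel integral $\int\exp(-\lambda(t-s))\mathbf{1}_{A_t(\mathbf{x})}\,\mathrm{d}L$, so one could bypass Theorem \ref{thm:GCF}.)

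Next I would insert the L\'evy--Khintchine representation of $C\{\cdot\ddagger L'\}$, evaluated at the argument $\theta e^{\lambda w}$, into $\int_A C\{\theta e^{\lambda w}\ddagger L'\}\,\mathrm{d}\mathbf{u}\,\mathrm{d}w$ and collect the three pieces. The Gaussian piece is immediate, giving $-\frac{1}{2}\theta^2\int_A be^{2\lambda w}\,\mathrm{d}\mathbf{u}\,\mathrm{d}w=-\frac{1}{2}\theta^2 b_Y$. For the jump piece I would, at each fixed $w$, apply the scaling $y=e^{\lambda w}z$, whose pushforward of $\nu$ is the measure written $\nu(e^{-\lambda w}\mathrm{d}y)$; this sends $e^{i\theta e^{\lambda w}z}-1$ to $e^{i\theta y}-1$ but carries the original truncation $\mathbf{1}_{|z|\le1}$ into $\mathbf{1}_{|y|\le e^{\lambda w}}$ rather than the canonical $\mathbf{1}_{|y|\le1}$.

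The crux is therefore to re-centre this compensator. Since the ambit-set conditions (\ref{eqn:ambitset}) force $w=s-t\le0$ on $A$, we have $e^{\lambda w}\le1$, so $\mathbf{1}_{|y|\le e^{\lambda w}}$ and $\mathbf{1}_{|y|\le1}$ differ exactly by $\mathbf{1}_{e^{\lambda w}<|y|\le1}$. Splitting off this difference replaces the non-canonical compensator by $\mathbf{1}_{|y|\le1}$ at the price of an extra term linear in $y$, which I would transport back to the $z$-variable as $e^{\lambda w}\int_{\mathbb{R}}z\,\mathbf{1}_{1<|z|\le e^{-\lambda w}}\,\nu(\mathrm{d}z)$ and absorb into the drift. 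After a Fubini interchange of $\int_A$ with the $z/y$-integration, the canonical jump term integrates to $\int_{\mathbb{R}}(e^{i\theta y}-1-i\theta y\mathbf{1}_{|y|\le1})\,\nu_Y(\mathrm{d}y)$ with $\nu_Y(\mathrm{d}y)=\int_A\nu(e^{-\lambda w}\mathrm{d}y)\,\mathrm{d}\mathbf{u}\,\mathrm{d}w$, the explicit $a$-term and the re-centring term combine to produce $a_Y$, and the Gaussian coefficient is $b_Y$ as above.

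The main obstacle I anticipate is analytic rather than algebraic: one must justify the Fubini interchange and verify that $\nu_Y$ is a genuine L\'evy measure, $\int_{\mathbb{R}}\min(1,y^2)\,\nu_Y(\mathrm{d}y)<\infty$, and that the re-centring integral $\int_{\mathbb{R}}z\,\mathbf{1}_{1<|z|\le e^{-\lambda w}}\,\nu(\mathrm{d}z)$ is finite for a.e.\ $w$ and integrable in $w$, so that the triplet $(a_Y,b_Y,\nu_Y)$ is well posed. These all follow from $Y_t(\mathbf{x})$ being well defined (the kernel being $L$-integrable), the L\'evy-measure bound $\int_{\mathbb{R}}\min(1,z^2)\,\nu(\mathrm{d}z)<\infty$, and $e^{\lambda w}\le1$ on $A$. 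The only genuinely delicate bookkeeping is tracking the range and sign of the truncation adjustment, since it is exactly this that fixes the precise form of $a_Y$.
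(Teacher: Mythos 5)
Your proposal is correct and is essentially the paper's own proof: the paper likewise obtains (\ref{eqn:CCrelation}) as the Dirac-measure special case of Theorem \ref{thm:GCF}, then writes the integrand via the L\'evy--Khintchine form of $L'$, handles $a_{Y}$ by isolating the set where the original and canonical truncation indicators differ, and obtains $\nu_{Y}$ through the substitution $y = z e^{\lambda w}$ followed by an interchange of the order of integration, exactly as you describe.

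One caution on the bookkeeping you explicitly deferred: if you carry the recentring to the end (test it on $a = b = 0$, $\nu = \delta_{2}$, where $Y$ is a Poisson integral with jumps $2e^{\lambda w} \le 2$ and the canonical drift must equal $\int_{A} 2e^{\lambda w}\mathbf{1}_{2e^{\lambda w}\le 1}\,\mathrm{d}\mathbf{u}\,\mathrm{d}w$), the correction term enters with a \emph{plus} sign, i.e.
$a_{Y} = \int_{A} e^{\lambda w}\bigl[a + \int_{\mathbb{R}} z\,\mathbf{1}_{1<|z|\le e^{-\lambda w}}\,\nu(\mathrm{d}z)\bigr]\mathrm{d}\mathbf{u}\,\mathrm{d}w$,
because jumps of $L'$ exceeding $1$ that are scaled below $1$ acquire a compensator that must be added back to the drift; so your argument, done carefully, will not reproduce the minus sign printed in the theorem (and propagated to Theorem \ref{thm:OUh-OU}) --- that minus sign appears to be a typo in the statement rather than a defect of your approach.
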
 
\begin{proof}
We can obtain (\ref{eqn:CCrelation}) by considering a special case of Theorem \ref{thm:GCF}. To find the LK triplet $(a_{Y}, b_{Y}, \nu_{Y})$, we write (\ref{eqn:CCrelation}) in terms of the LK triplet of $L'$ and evaluate each resulting integral. For $a_{Y}$, this involves restricting the integration bound for the inner integral to cases where $1_{|ze^{-\lambda w}|\leq 1} - 1_{|z|\leq 1}$ is non-zero. For $\nu_{Y}$, we use $y = z \exp(\lambda w)$ and change the order of integration so that we integrate over $u$ and $w$ before integrating over $y$. 
\end{proof}
\begin{Eg} \label{eg:GMean}
For the canonical $\mathrm{OU}_{\wedge}$ process given by (\ref{eqn:TestOUh}), suppose that $L$ is a Gaussian basis whose seed has mean $\mu$ and standard deviation $\tau$. Then, we obtain that $Y_{t}(x) \sim N\left(2c\mu/\lambda^{2}, c\tau^{2}/2\lambda^{2}\right)$ since:
\begin{equation}
C\{\theta \ddagger Y_{t}(x)\} = \int_{-\infty}^{t}\int_{x-c|t-s|}^{x+c|t-s|} C\{\theta\exp(-\lambda(t-s)) \ddagger L'\} \mathrm{d}\xi \mathrm{d}s = i\theta\left(\frac{2c\mu}{\lambda^{2}}\right) - \frac{1}{2}\left(\frac{c\tau^{2}}{2\lambda^{2}}\right)\theta^{2}. \label{eqn:GaussianCGF}
\end{equation}
\end{Eg}

As the cumulants of $Y_{t}(\mathbf{x})$ are the coefficients in the power series of its CGF, we can use Theorem \ref{thm:cgfform} to derive expressions for the cumulants of $Y_{t}(\mathbf{x})$, $\kappa_{l}\left(Y_{t}\left(\mathbf{x}\right)\right)$ for $l \in \mathbb{N}$. For the canonical $\mathrm{OU}_{\wedge}$ process, it can be shown that:
\begin{align*}
\kappa_{l}(Y_{t}(x)) = \kappa_{l}(L')\int_{-\infty}^{t} \int_{x-c|t-s|}^{x+c|t-s|} \exp(-l\lambda(t-s)) \mathrm{d}\xi\mathrm{d}s = \kappa_{l}(L')\int_{-\infty}^{t} 2c(t-s) \exp(-l\lambda(t-s)) \mathrm{d}s = \kappa_{l}(L') \frac{2c}{l^{2}\lambda^{2}}.
\end{align*}
Here, $\kappa_{l}(L')$ refers to the $l^{\text{th}}$ cumulant of the L\'evy seed.

\section{Simulation}
\label{sec:simalg}

In this section, we develop two simulation algorithms for the canonical process defined by (\ref{eqn:TestOUh}). Let $Y_{t}(x)$ be a canonical process. By reducing the stochastic integral to a truncated sum, we obtain its discrete convolution (DC) approximation. This can be computed efficiently by the in-built convolution algorithms in software such as R and Mathematica. Based on this idea, we create two algorithms: one for values on the usual rectangular grid and another for values on a so-called diamond grid to mimic the edges of the integration set.  The former can be easily extended to more general $\mathrm{OU}_{\wedge}$ processes. 
\\
By writing $w = t-s$ and $u = x-\xi$, (\ref{eqn:TestOUh}) is equal to:
\begin{align}
Y_{t}(x) &= -\int_{0}^{\infty}\int_{x-cw}^{x+cw}\exp(-\lambda w)L(\mathrm{d}\xi, t - \mathrm{d}w) = -\int_{0}^{\infty}\int_{-\infty}^{\infty}h(u, w) L(x - \mathrm{d}u, t - \mathrm{d}w) \nonumber \\ 
&= \int_{0}^{\infty}\int_{-\infty}^{\infty}h(u, w) L'_{(x, t)}(\mathrm{d}u, \mathrm{d}w), 
\label{eqn:Yhrep} 
\end{align}
where $L'_{(x, t)} := -L(x-\cdot, t - \cdot)$ is the L\'evy basis given by $L'_{x, t}(A) = -L((x, t) - A)$ for $A \in \mathcal{B}_{b}(S)$, and $h(u, w) = \mathbf{1}_{|u| \leq cw}\exp\left(-\lambda w \right)$. Figure \ref{fig:hplot}(i) shows the plot of $h(u, w)$ when $c = 1$. 
\\
Since $h(u, w) \rightarrow 0$ as $u \rightarrow \pm \infty$ and $w \rightarrow \infty$, it seems reasonable to approximate (\ref{eqn:Yhrep}) by the following DC:
\begin{align}
Y_{t}(x) &\approx \sum_{j = 0}^{p}  \sum_{i = -q}^{q} h(u_{i}, w_{j}) W^{(x, t)}_{ij}, \label{eqn:Ydcapprox}
\end{align}

\clearpage

\begin{figure}[tbp]
\centering
\caption{(i) Plot of the kernel $h(u, w) = \mathbf{1}_{|u| \leq cw}\exp\left(-\lambda w \right)$ when $c = 1$ for $ 0 \leq w \leq 4, -4 \leq u \leq 4$; (ii) plot of the rectangular simulation grid set by spatial and temporal coordinates  $\{x_{I}: I = 1, ..., n\}$ and $\{t_{J}: J = 1, ... m\}$ and the extension to accommodate the kernel approximation for the boundaries. The original grid points are denoted by points while the additional ones are denoted by crosses. Here, $n = m = 4$ and $p = q = 2$. The L\'evy noise over the extended simulation grid is represented by the $W$s which are attached to the grid points on their left. For the grid point $(t_{1}, x_{n})$, the kernel is evaluated at the points within the shaded area. The shaded area is an approximation for the ambit set (now in $u, w$ form). The true transformed ambit set is outlined by the bold red line.}
\label{fig:hplot}
\includegraphics[width = 6.2in, height = 3.2in, trim = 0.3in 1in 0.3in 0.1in]{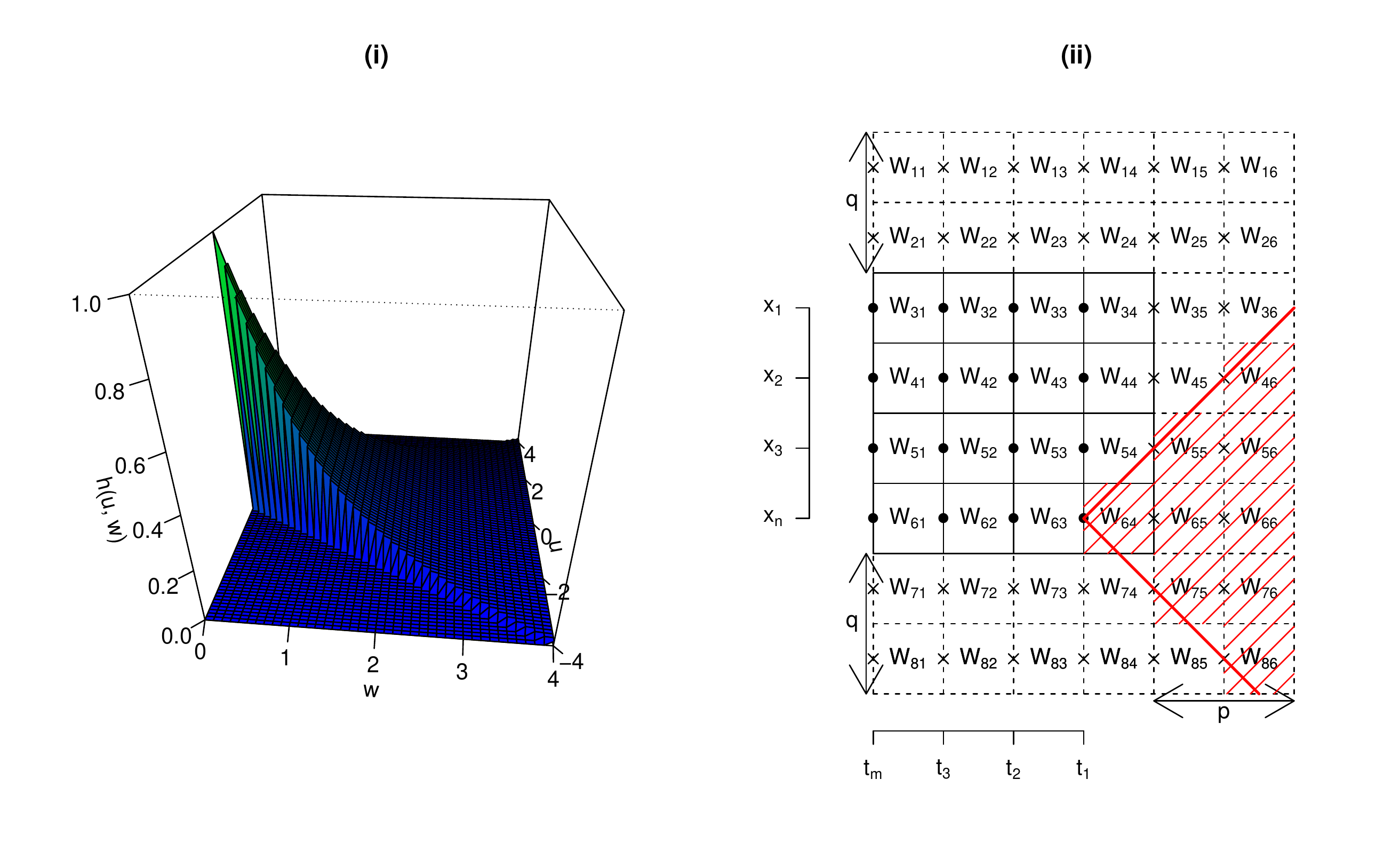}
\end{figure}

\begin{figure}[tbp]
\centering
\caption{(i) Field view of data simulated from the rectangular grid algorithm for $Y_{t}(x)$ when $g(w) = w$, $\lambda = 1$ and $L$ is a Gaussian basis whose seed has mean $0.2$ and standard deviation $0.1$; (ii) a heat map for the same data. The simulation grid is over the space-time region $[0, 10] \times [0, 10]$ with grid size $\triangle_{u} = \triangle_{w} = 0.05$ and kernel truncation parameters $p = q = 300$.}
\label{fig:simeg}
\includegraphics[width = 2.8in, height = 3.0in, trim = 0.4in 0.5in 0.4in 0in]{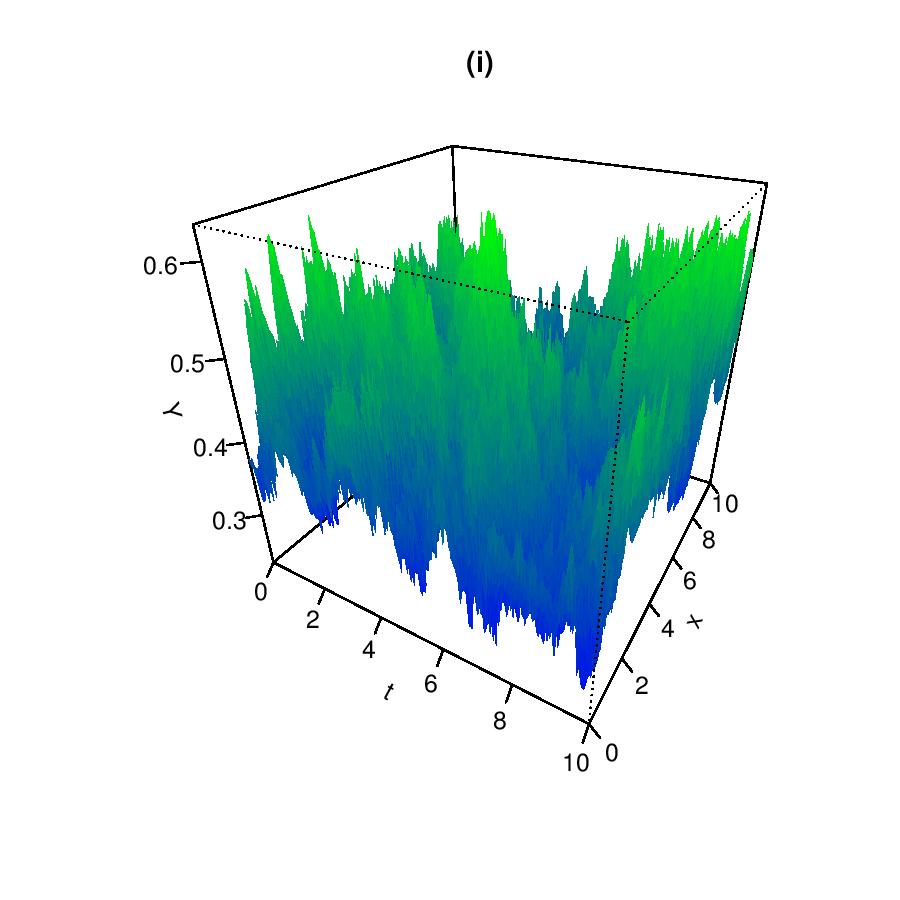}
\includegraphics[width = 2.8in, height = 3.0in, trim = 0.4in 0.5in 0.4in 0in]{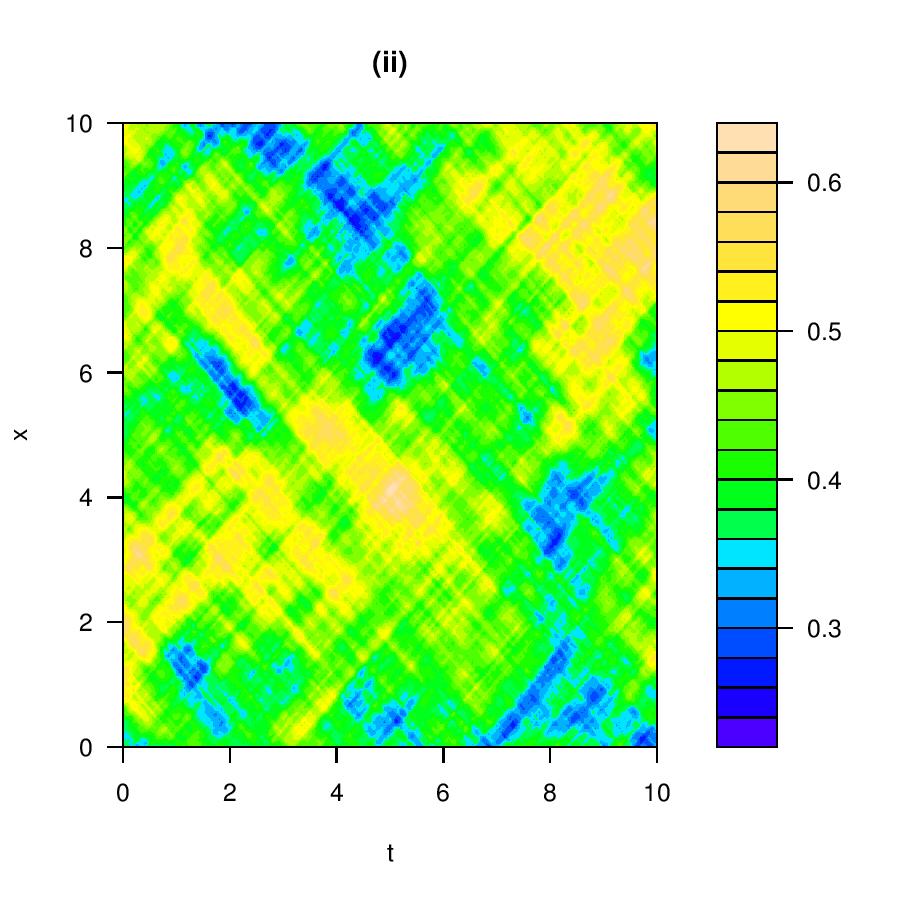}
\end{figure}

\clearpage

where $w_{j} = j\triangle_{w}$ and $u_{i} = i\triangle_{u}$ for some $\triangle_{w}, \triangle_{u} > 0$, and $p$ and $q$ are arbitrarily chosen integers. $\{W^{(x,t)}_{ij}: -q \leq i \leq q, 0 \leq j \leq p\}$ is a set of independent identically distributed random variables which depend on $(x, t)$. These represent the L\'evy noise over each grid segment. For example, for the rectangular grid algorithm, we take $W^{(x, t)}_{ij} \sim L([0, \triangle_{u}] \times [0, \triangle_{w}])$ for all $i$, $j$. The approximation in (\ref{eqn:Ydcapprox}) is expected to improve as $\triangle_{w}, \triangle_{u} \rightarrow 0$ and $p, q \rightarrow \infty$. 

\subsection{Rectangular grid algorithm} \label{sec:RGrid}

Referring to (\ref{eqn:Ydcapprox}), we can use the optimised convolution function for vectors in the R software to create a simulation algorithm for $Y_{t}(x)$. Alternatively, one can use any similar function in other software such as the convolution function for lists in Mathematica. Since such functions are typically written in terms of Discrete Fourier Transforms, it is also possible to code using these directly.
\\
To begin, we choose $p, q, \triangle_{u}$ and $\triangle_{w}$. We create a list of row vectors of values of $h$:
\begin{align*}
\hat{h} = &\{\left[h\left(u_{-q}, w_{0}\right), h\left(u_{-q}, w_{1}\right), \dots ,  h\left(u_{-q}, w_{p}\right)\right], \dots, \left[h\left(u_{0}, w_{0}\right), h\left(u_{0}, w_{1}\right), \dots ,  h\left(u_{0}, w_{p}\right)\right], \\
&\dots, \left[h\left(u_{q}, w_{0}\right), h\left(u_{q}, w_{1}\right), \dots ,  h\left(u_{q}, w_{p}\right)\right]\}.
\end{align*}
Next, we choose our simulation grid by determining the spatial and temporal coordinates $\{x_{I}: I = 1, ..., n\}$ and $\{t_{J}: J = 1, ..., m\}$. These need to be separated by $\triangle_{u}$ and $\triangle_{w}$ respectively. To account for boundary effects, i.e. to be able to use the entire kernel approximation for boundary points, we need to extend the grid by $q$ on both ends of the spatial axis and $p$ on one end of the temporal axis. In Figure \ref{fig:hplot}(ii), we consider the case $n = m = 4$, $p = q = 2$ and $c = 1$. Simulating with such a discretisation scheme means that there are three sources of simulation error: the kernel truncation, the kernel discretisation and the block approximation of the ambit set.
\\
Let $\tilde{n} = n+2q$, $\tilde{m} = m + p$ and $\hat{w}_{kl}$ denote the realisations of $W_{kl}$ for $k = 1, ..., \tilde{n}$ and $l = 1, ..., \tilde{m}$. We create a list of row vectors of these values: $\widehat{W} = \left\{ \left[\hat{w}_{11}, \hat{w}_{12}, \dots ,  \hat{w}_{1\tilde{m}}\right], \dots, \left[\hat{w}_{\tilde{n}1}, \hat{w}_{\tilde{n}2}, \dots ,  \hat{w}_{\tilde{n}\tilde{m}}\right]\right\}$.
\\
With $\hat{h}_{\tilde{i}}$ as the $\tilde{i}^{\text{th}}$ row vector in $\hat{h}$ and $\widehat{W}_{k}$ as the $k^{\text{th}}$ row vector in $\widehat{W}$, we can simulate from $Y_{t}(x)$ using Algorithm \ref{alg:DCrect}.
\\
Step \algref{alg:DCrect}{line:conrect} is equivalent to calculating $\mathbf{y} = [\hat{Y}_{t_{m}}(x_{I}), \dots, \hat{Y}_{t_{1}}(x_{I})]$ with $\hat{Y}_{t_{J}}(x_{I}) = \sum_{j = 0}^{p}\sum_{i = -q}^{q} h(u_{i}, w_{j}) \hat{w}_{(I + \tilde{i}-1)\tilde{j}}$, where $\tilde{i} = i + q + 1$, and $\tilde{j} = J+j$ for $J = 1, \dots, m$.
\\
Figure \ref{fig:simeg} shows some data simulated over the space-time region $[0, 10] \times [0, 10]$ for $Y_{t}(x)$ when $c = 1$, $\lambda = 1$ and $L$ is a Gaussian basis whose seed has mean $0.2$ and standard deviation $0.1$. The step sizes were $\triangle_{u} = \triangle_{w} = 0.05$ and $p = q = 300$.

\subsection{Diamond grid algorithm} \label{sec:sscom}

Simulating the process on a square or rectangular grid requires us to approximate the triangular ambit set by blocks. This will affect the temporal and spatial autocorrelations which are dependent on the approximated intersection areas (see Section \ref{sec:autoc}). We want to better approximate the shape of our ambit set and their intersections by simulating on a grid that follows the shape of the ambit set. We also want to be in the DC framework for computational efficiency. To this end, we consider a diamond grid (DG) built on the original rectangular grid (RG). 
\\
We start with a motivating example. Suppose that we have the equi-spaced square grid $\{(x_{i}, t_{j}): i = 1, \dots, n, j = 1, \dots, m\}$ with $n = m = 5$. We want to simulate values for $Y_{t}(x)$ when $c = 1$. Figure \ref{fig:dgridkernel}(i) shows how one may construct a DG on this square grid. The new grid consists of diamonds with height twice the spatial spacing $\triangle_{x}$ and width twice the temporal spacing. As before, we need to extend the grid by $q$ and $p$ units to account for boundary effects.
\\
With $c = 1$, we require that the left corner of the diamond has an upper slope of $1$ and a lower slope of $-1$. Using diamonds to approximate the ambit set allows us to capture the ambit shape more accurately. However,  one consequence is that we cannot obtain $Y$ values for the points within the diamonds. We can only obtain $Y$ values for points at the edges of the diamond. That is, our simulated data only consists of field values at $(x_{i}, t_{j})$ where $i = 1, \dots, 5$, and $j = 1, 3, 5 \text{ if } i \text{ is odd and } j = 2, 4 \text{ if } i \text{ is even}$. 

\begin{algorithm}[tbp]
\caption{Discrete convolution on a rectangular grid}\label{alg:DCrect}
\begin{algorithmic}[1]
\State $Y \gets rep(NA, m)$ \Comment{We will append rows to $Y$ as $x_{I}$ increases to form our data matrix.}
\For{$I = 1, \dots, n$} 
\State	$y \gets rep(0, m)$ \Comment{For each $x_{I}$, we create a storage vector.}
\For {$\tilde{i} = 1, \dots, 2q+1$}
\State $j \gets I + \tilde{i} - 1$ 
\State $y \gets y + convolve(\hat{W}_{j}, \hat{h}_{\tilde{i}}, type = 'f')$ \Comment{We add the convolutions of $\hat{h}_{\tilde{i}}$ with $\hat{W}_{I + \tilde{i}  - 1}$ to $\mathbf{y}$.} \label{line:conrect}
\EndFor
\State	$Y \gets rbind(Y, rev(yvect))$ \Comment{Append $\mathbf{y}$ in reversed form so the time index increases from the left.}
\EndFor
\State $Y \gets Y[-1, ]$ \Comment{Remove the first dummy row of $Y$.}
\end{algorithmic}
\end{algorithm}

The choice of the grid shape affects all parts of our simulation. To account for the distribution of the L\'evy noise within the diamonds, we do not simulate values for $W$ from independent $L([0, \triangle_{x}] \times [0, \triangle_{t}])$ random variables. Instead, we scatter the $W$s uniformly about the DG. If the $W$s lie within diamonds, we simulate values for it from independent L\'evy random variables whose distribution is proportional to the area of the diamond. If the $W$s lie at the edges of the diamonds (denoted by the circles), we set their values to $0$. 
\\
Next, we construct the truncated kernel matrix with the missing $Y$ values in mind. For $p = q = 2$, we define: 
\begin{align*}
\widehat{H} = \begin{pmatrix}
h(u_{-2}, w_{0}) & 0 & h(u_{-2, w_{2}}) \\
0 & h(u_{-1}, w_{1}) & 0 \\
h(u_{0}, w_{0}) & 0 & h(u_{0}, w_{2}) \\
0 & h(u_{1}, w_{1}) & 0 \\
h(u_{2}, w_{0}) & 0 & h(u_{2}, w_{2})
\end{pmatrix},
\end{align*}
where $h(u, w) = \mathbf{1}_{|u| \leq w} \exp(-\lambda w)$. Figure \ref{fig:dgridkernel}(ii) shows the action of the kernel matrix on the grid of $W$s for the simulation of $Y_{t_{m}}(x_{3})$. The blue shaded area denotes the approximated ambit set $\hat{A}_{t_{m}}(x_{3})$ and the blue cross denotes its starting point. To correctly weigh the L\'evy noise in the shaded area by the kernel values, we overlay the area with the red dotted lines whose intersections denote the values in the truncated kernel matrix. The value for $Y_{t_{m}}(x_{3})$ is then found by multiplying the overlaid kernel values with their respective $W$ values. 
\\
Note that by using the DC algorithm, we will automatically generate a value for the points within the diamonds as they lie on the square grid. However, the alternation with the $0$s in the kernel matrix cancels out any values generated for these points. To see this, consider $Y_{t_{4}}(x_{3})$ and shift the red dotted lines to the right by $\triangle_{t}$. Now, the non-zero kernel values are mutliplied with the zero-valued $W$s and the kernel zeros are multiplied with the non-zero $W$s.
\\
Now that we have an intuition for the DG algorithm, we can construct it. Upon further examination, we find that in order for steps to be easily extended to other cases, we require $p$ and $q$ to be even integers, $n$ and $m$ to be odd integers and $\triangle_{x} = c\triangle_{t}$. Under these conditions, we can create the algorithm for general $c$. 
\\
The pseudo code in Algorithm \ref{alg:ktdia} constructs the $(2q+1) \times (p+1)$ truncated kernel matrix: 
\begin{align*}
\widehat{H}_{ij} = \begin{cases} \mathbf{1}_{|u_{i}|\leq c w_{j}}\exp(-\lambda w_{j}) & \text{if } i + j \text{ is even,} \\
0 & \text{otherwise,}\\
\end{cases}
\end{align*}
where $w_{j} = j\triangle_{t}$ for $j = 0, \dots, p$ and $u_{i} = i\triangle_{x}$ for $i = -q, \dots, q$. This is later split to create a list of the rows of $\widehat{H}$, $\hat{h}$:
\begin{align*}
\hat{h} = \{ &\left[h\left(u_{-q}, w_{0}\right), h\left(u_{-q}, w_{1}\right), \dots ,  h\left(u_{-q}, w_{p}\right)\right], \dots, \left[h\left(u_{0}, w_{0}\right), h\left(u_{0}, w_{1}\right), \dots ,  h\left(u_{0}, w_{p}\right)\right], \\
&\dots, \left[h\left(u_{q}, w_{0}\right), h\left(u_{q}, w_{1}\right), \dots ,  h\left(u_{q}, w_{p}\right)\right]\}.
\end{align*}
With our truncated kernel matrix, we proceed to simulate from the L\'evy random variables:
\begin{align*}
\widehat{W}_{kl} \stackrel{d}{=} \begin{cases} L([0, 2c\triangle_{t}]\times[0, \triangle_{t}]) & \text{if } k + l \text{ is even,} \\
0 & \text{otherwise,}\\
\end{cases}
\end{align*}

\clearpage

\begin{figure}[tbp]
\centering
\caption{(i) Plot of the diamond grid for the case $c = 1$ with $n = m = 5$ and $p = q = 2$ (the $W$s denote the underlying L\'evy random variables); (ii) the simulation of $Y_{t_{m}}(x_{3})$ for the case $c = 1$ with $n = m = 5$ and $p = q = 2$. The shaded blue region denotes the approximated ambit set and the dotted red lines denote the truncated kernel matrix. The intersection of the lines show the kernel value to be multiplied by the $W$ at that location.} \label{fig:dgridkernel}
\includegraphics[width = 5.2in, height = 3.4in, trim = 0.4in 0.4in 0.4in 0in]{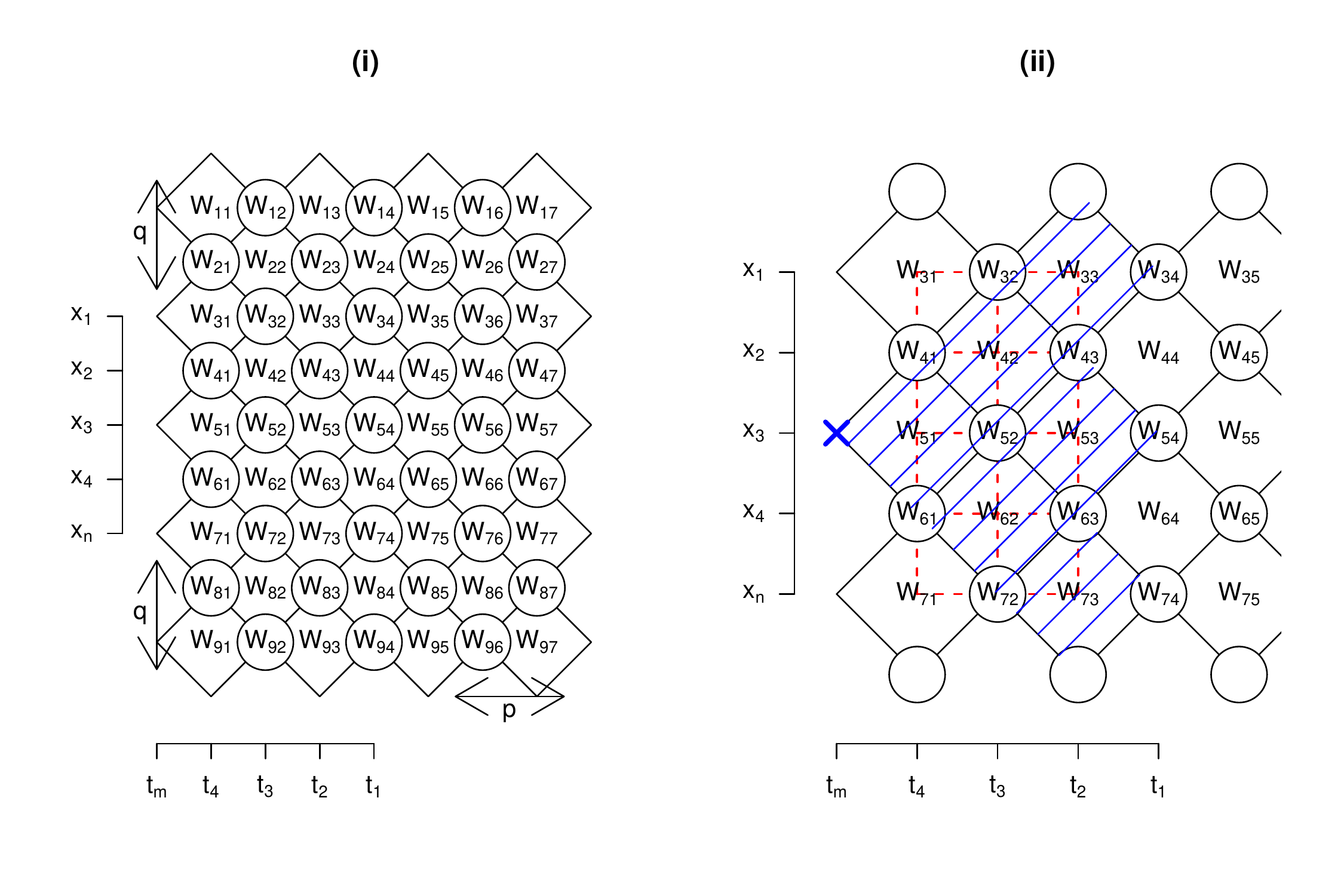}
\end{figure} 

\begin{algorithm}[tbp]
\caption{Obtaining the truncated kernel matrix for the diamond grid algorithm}
\label{alg:ktdia}
\begin{algorithmic}[1]
\Require $\triangle_{t} > 0, \triangle_{x} = c\triangle_{t}, p, q \bmod 2 = 0, n, m \bmod 2 = 1$ 
\State $\mathbf{x} \gets seq(0, n*\triangle_{x}, by = \triangle_{x})$
\State $\mathbf{t} \gets seq(0, m*\triangle_{t}, by = \triangle_{t})$ \Comment{Construct the underlying rectangular grid.}
\State $\mathbf{w} \gets \triangle_{t}*seq(0, p, by = 1)$
\State $\mathbf{u} \gets \triangle_{x}*seq(-q, q, by = 1)$ \Comment{Set the coordinates of the truncated kernel matrix.}
\State $\widehat{H} \gets matrix(0, nrow = 2*q+1, ncol = p+1)$ 
\For{$i  = 1, \dots, 2*q+1$} \Comment{We fill the matrix $\hat{H}$.}
\For {$j = 1, \dots, p+1$} 
\State $k \gets i + j$ 
\If{$k \bmod 2 = 0$}
\State $\widehat{H}_{ij} \gets  h(\mathbf{u}_{i}, \mathbf{w}_{j})$
\EndIf
\EndFor
\EndFor
\State $\hat{h} \gets split(\widehat{H}, row(\widehat{H}))$ \Comment{We split $\widehat{H}$ into a list of its rows.}
\end{algorithmic}
\end{algorithm}

\begin{figure}[tbp]
\centering
\caption{The finer diamond grid embedded within the rectangular grid which is represented by the red dots. Here, $c = 2$, $m = 10$ and $n = 5$.}
\label{fig:finerdia}
\includegraphics[width = 2.8in, height = 2.4in, trim = 0.4in 0.5in 0.4in 0.8in]{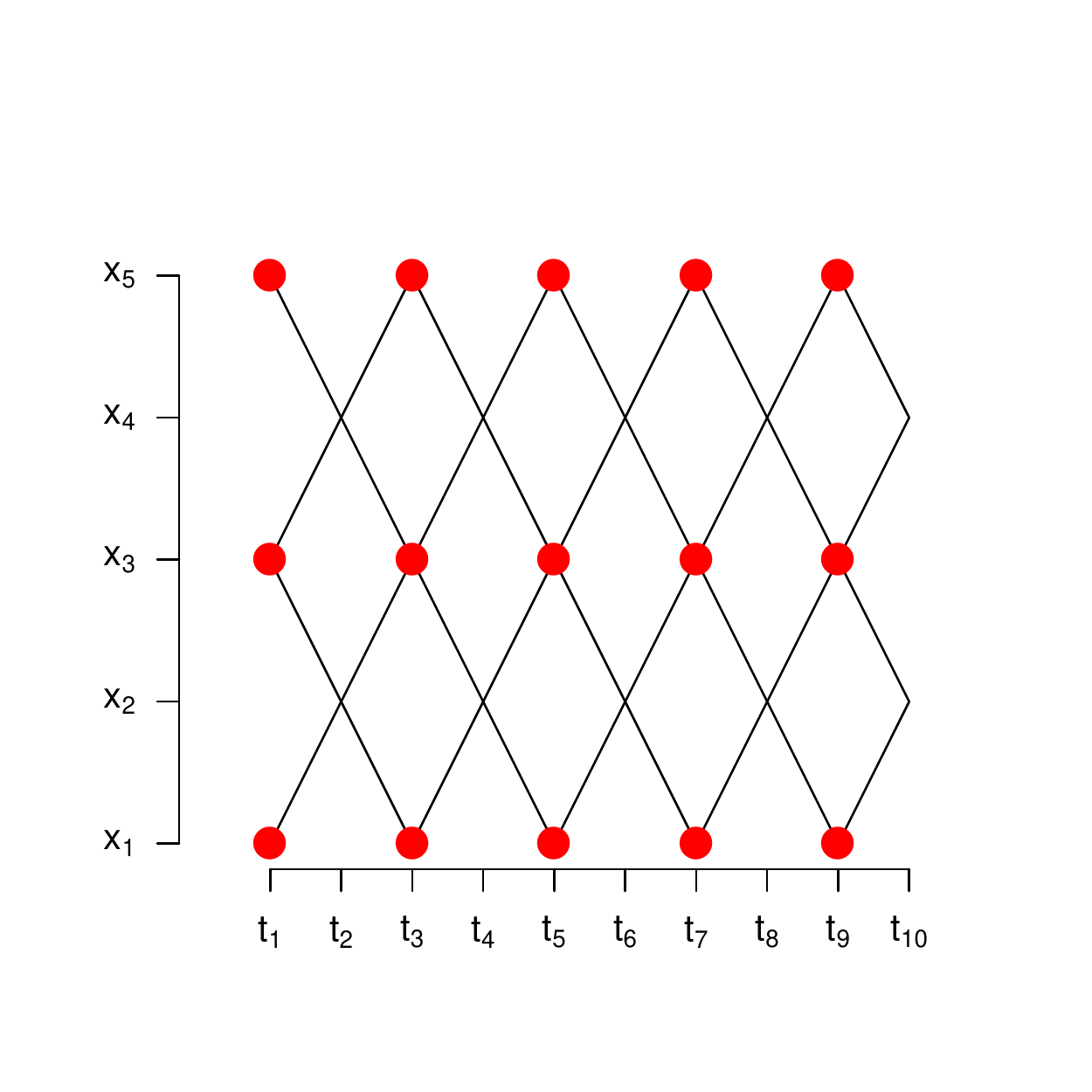}
\end{figure} 

\clearpage

for $k = 1, ..., \tilde{n}$ and $l = 1, ..., \tilde{m}$. Denote the realisations by $\hat{w}_{kl}$. We create a list of row vectors of these values:
\begin{align*}
\widehat{W} = \{ \left[\hat{w}_{11}, \hat{w}_{12}, \dots ,  \hat{w}_{1\tilde{m}}\right], \dots, \left[\hat{w}_{\tilde{n}1}, \hat{w}_{\tilde{n}2}, \dots ,  \hat{w}_{\tilde{n}\tilde{m}}\right]\}.
\end{align*}
All that is required now is to perform the DC of $\hat{h}$ and $\widehat{W}$ by row. The steps are identical to that in the RG algorithm (Algorithm \ref{alg:DCrect}). Finally, with the data matrix $Y$, we label $Y_{IJ}$ where $I + J$ is odd, as missing values.
\\
Instead of having missing values at every alternate space-time point, we can also obtain field values on a RG by simulating on a finer DG but later discarding data on alternate rows and columns. Figure \ref{fig:finerdia} shows an example for $c = 2$. 
\\
\begin{Rem}
\label{re:asim}
Using the DC approach to simulation incorporates a moving window and an efficient in-built algorithm. Alternatives include a definition-based algorithm with double summations and a cut-off point or an algorithm based on the Markov property. However, these involve quadratically or linearly increasing number of operations.
\end{Rem}

\subsection{Mean squared errors}

As they give discrete approximations to our continuous field, both the RG and DG algorithms involve simulation error.
\\
\begin{thm}
\label{thm:simerror}
Let $Y_{t}(x)$ be a $g$-class $\mathrm{OU}_{\wedge}$ process, and $Z_{t}(x)$ be its DC approximation provided by the RG algorithm. Writing $L'$ for the L\'evy seed of $L$ and $A_{t}(x)$ for the relevant ambit set (in the definition of $Y$), we have that:
\begin{align}
&\mathbb{E}\left[|Y_{t}(x) - Z_{t}(x)|^{2}\right] \nonumber \\
=& \left(\int_{\mathbb{R}\times\mathbb{R}}\left[k(x, t, \xi, s) - h(x, t, \xi, s)\right]\mathbb{E}\left[L'\right]\mathrm{d}\xi\mathrm{d}s\right)^{2} + \int_{\mathbb{R}\times\mathbb{R}}\left(k(x, t, \xi, s)-h(x, t, \xi, s)\right)^{2}\Var\left[L'\right]\mathrm{d}\xi\mathrm{d}s , \label{eqn:MSE} 
\end{align}
where $k(x, t, \xi, s) = \mathbf{1}_{A_{t}(x)}(\xi, s)\exp\left(-\lambda\left(t-s\right)\right)$ and:
\begin{equation*}
h(x, t, \xi, s) = \sum_{j = 0}^{p}\sum_{i = -q}^{q} \mathbf{1}_{|i\triangle_{x}|\leq g(j\triangle_{t})}\mathbf{1}_{\left(t - (j+1)\triangle_{t}, t - j\triangle_{t}\right]}(s) \mathbf{1}_{\left[x + i\triangle_{x}- \frac{\triangle_{x}}{2}, x + i\triangle_{x} + \frac{\triangle_{x}}{2}\right)}(\xi) \exp\left(-\lambda j \triangle_{t}\right).
\end{equation*} 
\end{thm}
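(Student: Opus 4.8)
The plan is to realise both $Y_{t}(x)$ and its approximation $Z_{t}(x)$ as stochastic integrals of deterministic integrands against the \emph{same} homogeneous L\'evy basis $L$, and then to read off the first two cumulants of their difference from the cumulant relation underlying Theorems \ref{thm:GCF} and \ref{thm:cgfform}.

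First I would rewrite the output of the RG algorithm as a stochastic integral. By construction (see (\ref{eqn:Ydcapprox}) and Algorithm \ref{alg:DCrect}), $Z_{t}(x) = \sum_{j=0}^{p}\sum_{i=-q}^{q} h(u_{i},w_{j}) W_{ij}^{(x,t)}$, where each block weight $W_{ij}^{(x,t)} = L(C_{ij})$ is the L\'evy mass of the grid cell $C_{ij} = [x + i\triangle_{x} - \tfrac{\triangle_{x}}{2}, x + i\triangle_{x} + \tfrac{\triangle_{x}}{2}) \times (t - (j+1)\triangle_{t}, t - j\triangle_{t}]$, and the deterministic weight is the kernel value $h(u_{i},w_{j}) = \mathbf{1}_{|i\triangle_{x}| \le g(j\triangle_{t})}\exp(-\lambda j \triangle_{t})$. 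Since the cells $C_{ij}$ are pairwise disjoint and $L$ is independently scattered and additive, a finite linear combination $\sum_{i,j} c_{ij} L(C_{ij})$ is exactly the integral of the simple function $\sum_{i,j} c_{ij} \mathbf{1}_{C_{ij}}$ against $L$. Taking $c_{ij} = h(u_{i},w_{j})$ identifies $Z_{t}(x) = \int_{\mathbb{R}\times\mathbb{R}} h(x,t,\xi,s)\, L(\mathrm{d}\xi, \mathrm{d}s)$ with precisely the step function $h$ in the statement, while the definition of the $g$-class process gives $Y_{t}(x) = \int_{\mathbb{R}\times\mathbb{R}} k(x,t,\xi,s)\, L(\mathrm{d}\xi, \mathrm{d}s)$.

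Next, by linearity of the stochastic integral, the approximation error is itself an integral against $L$, namely $Y_{t}(x) - Z_{t}(x) = \int_{\mathbb{R}\times\mathbb{R}} f(x,t,\xi,s)\, L(\mathrm{d}\xi, \mathrm{d}s)$ with $f := k - h$. Because $L$ is homogeneous with L\'evy seed $L'$, the relation $C\{\theta \ddagger \int f\, \mathrm{d}L\} = \int_{\mathbb{R}\times\mathbb{R}} C\{\theta f(\xi,s) \ddagger L'\}\, \mathrm{d}\xi\mathrm{d}s$ from Theorem \ref{thm:cgfform} yields, upon expanding both sides in powers of $\theta$, that $\kappa_{\ell}(\int f\, \mathrm{d}L) = \kappa_{\ell}(L') \int_{\mathbb{R}\times\mathbb{R}} f^{\ell}\, \mathrm{d}\xi\mathrm{d}s$ for every order $\ell$. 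In particular the mean is $\mathbb{E}[\int f\, \mathrm{d}L] = \mathbb{E}[L'] \int_{\mathbb{R}\times\mathbb{R}} f\, \mathrm{d}\xi\mathrm{d}s$ and the variance is $\Var[\int f\, \mathrm{d}L] = \Var[L'] \int_{\mathbb{R}\times\mathbb{R}} f^{2}\, \mathrm{d}\xi\mathrm{d}s$. Applying the identity $\mathbb{E}[X^{2}] = (\mathbb{E}[X])^{2} + \Var[X]$ to $X = Y_{t}(x) - Z_{t}(x)$ then produces (\ref{eqn:MSE}) directly.

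I expect the only delicate step to be the first one: verifying that the discrete algorithm's output coincides with $\int h\, \mathrm{d}L$ for exactly the step function $h$ given. This is essentially bookkeeping, but it requires matching the algorithm's cell indexing and the sign/offset conventions ($u = x-\xi$, $w = t-s$) to the cell $C_{ij}$, noting that the symmetry of the range $i = -q,\dots,q$ and of $|i\triangle_{x}|$ absorbs the spatial sign flip, and checking that the cells tile their union without overlap so that additivity applies. The remaining ingredients are routine once one invokes finiteness of the first two moments of $L'$, which guarantees both that $f$ is $L$-integrable and that the displayed cumulant expressions are finite.
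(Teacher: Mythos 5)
Your proposal is correct and follows essentially the same route as the paper's own proof: both identify $Z_{t}(x)$ with the stochastic integral $\int_{\mathbb{R}\times\mathbb{R}} h\,\mathrm{d}L$ against the \emph{same} basis $L$ driving $Y$, set $f = k - h$, and read off the first two moments of $\int f\,\mathrm{d}L$ from the Rajput--Rosinski CGF relation to obtain (\ref{eqn:MSE}). The paper states this in three terse sentences; your write-up merely fills in the details (the simple-function/independent-scattering identification of $Z$, and the cumulant expansion giving mean and variance), which the paper leaves implicit.
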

We note that the proof for Theorem \ref{thm:simerror} (in the Appendix) works for any approximation of an $\mathrm{OU}_{\wedge}$ process. We only need to be able to formulate $h(\mathrm{x}, t, \xi, s)$ appropriately. Thus, in principle, one can find the mean squared errors (MSEs) for RG algorithms simulating general $\mathrm{OU}_{\wedge}$ processes and DG algorithms simulating canonical $\mathrm{OU}_{\wedge}$ processes. We present the MSE for the latter case:
\\
\begin{thm}
\label{thm:simerrorDG}
Let $Y_{t}(x)$ be a canonical $\mathrm{OU}_{\wedge}$ process with shape parameter $c$, and $Z_{t}(x)$ be its DC approximation provided by the DG algorithm. We have that (\ref{eqn:MSE}) applies with:
\begin{align*}
 h(x, t, \xi, s) &= \sum_{j = 0}^{p}\sum_{i = 0}^{j}\exp\left(-\lambda j \triangle_{t}\right)\left[ \mathbf{1}_{\left(t - (j+1)\triangle_{t}, t - j\triangle_{t}\right]}(s) \mathbf{1}_{\left[x + c(2i-j)\triangle_{t} -c(t-j\triangle_{t}-s), x + c(2i-j)\triangle_{t} + c(t-j\triangle_{t}-s)\right)}(\xi) \right. \\
&\left. + \mathbf{1}_{\left(t - (j+2)\triangle_{t}, t - (j+1)\triangle_{t}\right]}(s)\mathbf{1}_{\left[x + c(2i-j-1)\triangle_{t} + c(t-j\triangle_{t}-s), x + c(2i-j+1)\triangle_{t} -c(t-j\triangle_{t}-s)\right)}(\xi) \right].
\end{align*}
\end{thm}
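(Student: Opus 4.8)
The key observation is that Theorem \ref{thm:simerror} already carries out all of the probabilistic work: its proof derives the mean squared error formula (\ref{eqn:MSE}) for an \emph{arbitrary} approximation $Z_{t}(x)$ of an $\mathrm{OU}_{\wedge}$ process, provided only that $Z_{t}(x)$ can be coupled to the same homogeneous L\'evy basis $L$ that drives $Y_{t}(x)$ through a representation $Z_{t}(x) = \int_{\mathbb{R}\times\mathbb{R}} h(x,t,\xi,s)\, L(\mathrm{d}\xi,\mathrm{d}s)$ with a deterministic kernel $h$. Consequently the whole task here reduces to \emph{identifying} the kernel $h$ generated by the diamond grid (DG) algorithm; once that kernel is in hand, (\ref{eqn:MSE}) follows by a direct specialisation of Theorem \ref{thm:simerror}, with no new analysis of moments required. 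So my plan is entirely about reverse-engineering $h$ from the algorithm.

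First I would rewrite the DG output as a weighted sum of L\'evy masses over diamond cells and then couple it to $L$. By construction (Algorithm \ref{alg:ktdia} together with the subsequent $\widehat{W}$ scheme), $Z_{t}(x)$ is a finite sum $\sum \exp(-\lambda j\triangle_{t})\, W$, where each non-zero $W \stackrel{d}{=} L([0,2c\triangle_{t}]\times[0,\triangle_{t}])$ is the L\'evy mass over a single diamond of area $2c\triangle_{t}^{2}$, and where $\exp(-\lambda j\triangle_{t})$ is the kernel weight evaluated at the representative temporal lag $w_{j}=j\triangle_{t}$ of that diamond. To place $Z$ on the same probability space as $Y$, I would take each $W$ to be the actual mass $L(D)=\int \mathbf{1}_{D}(\xi,s)\,L(\mathrm{d}\xi,\mathrm{d}s)$ over its diamond cell $D$ (which has the stated distribution by homogeneity of $L$). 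Interchanging the finite sum with the stochastic integral then gives $Z_{t}(x)=\int_{\mathbb{R}\times\mathbb{R}} h(x,t,\xi,s)\,L(\mathrm{d}\xi,\mathrm{d}s)$ with $h=\sum(\text{weight})\,\mathbf{1}_{D}$, which is precisely the input required by Theorem \ref{thm:simerror}.

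Next I would convert the geometric indicators $\mathbf{1}_{D}$ into the explicit two-term expression in the statement. Each diamond has temporal extent $2\triangle_{t}$ and spatial extent $2c\triangle_{t}$, so it straddles two consecutive temporal strips: its top half (an upward triangle whose apex sits at $s=t-j\triangle_{t}$) lies in the strip $s\in(t-(j+1)\triangle_{t},\,t-j\triangle_{t}]$, and its bottom half lies in $s\in(t-(j+2)\triangle_{t},\,t-(j+1)\triangle_{t}]$. For a fixed pair $(i,j)$ the diamond is centred spatially at $x+c(2i-j)\triangle_{t}$, and computing the linearly varying spatial cross-section on each strip as a function of $s$ (the endpoints depending affinely on $s$ with slopes $\pm c$ and collapsing to a point at the relevant vertex) yields the two indicator-in-$\xi$ factors, both carrying the single weight $\exp(-\lambda j\triangle_{t})$. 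Summing over the diamonds that tile the approximated ambit set — which, because the triangular ambit widens at rate $c$, contains exactly $j+1$ diamonds in strip $j$, indexed by $i=0,\dots,j$ — produces the stated kernel, and substituting it into Theorem \ref{thm:simerror} completes the proof.

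The main obstacle is the bookkeeping in this last step. I must check that the two families of half-diamonds interleave to tile each temporal strip without gaps or double counting, that the index range $i=0,\dots,j$ correctly enumerates the diamonds sitting inside the approximated triangular ambit at lag $j\triangle_{t}$, and that the constant weight $\exp(-\lambda j\triangle_{t})$ is attached to the \emph{same} diamond across both strips it occupies — so that each diamond contributes one term to strip $j$ and one to strip $j+1$ with identical weight. Getting the spatial interval endpoints exactly right is the delicate part; everything downstream is an immediate appeal to Theorem \ref{thm:simerror}.
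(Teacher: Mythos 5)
Your proposal is correct and is essentially the paper's own argument: the paper proves Theorems \ref{thm:simerror} and \ref{thm:simerrorDG} with one common proof, namely that by homogeneity of $L$ the DG output has the law of $\int_{\mathbb{R}\times\mathbb{R}} h(x,t,\xi,s)\,L(\mathrm{d}\xi,\mathrm{d}s)$ for the piecewise-constant diamond kernel $h$, after which the MSE formula follows by applying the Rajput--Rosinski moment identity (via the CGF) to $f = k - h$. The strip-by-strip geometric identification of $h$ that you elaborate (half-diamonds per temporal strip, index range $i = 0,\dots,j$, common weight $e^{-\lambda j\triangle_{t}}$ across both strips) is precisely the step the paper leaves implicit in its remark that one only needs to ``formulate $h$ appropriately.''
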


The MSEs in Theorems \ref{thm:simerror} and \ref{thm:simerrorDG} are affected by two different asymptotics: $\triangle_{t}, \triangle_{x} \rightarrow 0$ and $R_{t} = p\triangle_{t}, R_{x} = q\triangle_{x} \rightarrow \infty$. We illustrate this in the following example, whose computational details can be found in Section 3.2 of the supplementary material provided in \cite{NV2015}: 

\clearpage

\begin{figure}[htbp]
\centering
\caption{(i) MSEs of the canonical $\mathrm{OU}_{\wedge}$ simulations, as a function of the grid size for fixed kernel truncation bound $R = 15$ (in black: the RG algorithm with rectangle width $\triangle$; in red: the DG algorithm with diamond width $2\triangle$; and in blue: the DG algorithm with diamond width $\triangle$); (ii) MSEs as a function of $R$ for fixed grid size $0.05$; (iii) MSEs as a function of the grid size for $R = 0.05/\triangle$;(iv) the corresponding three simulation grids and the definition of "grid size" in each case (in black: the RG; in dotted red: the DG; and in blue: the finer DG). Here, $\lambda = c = 1$, and the mean and standard deviation of the L\'evy seed are $0.2$ and $0.1$ respectively. The coloured dotted lines refer to the respective non-zero asymptotic limits when $\triangle \rightarrow 0$ in Plot (i) and $R \rightarrow \infty$ in Plot (ii). Since the algorithms share the same limit for the first case, only one limit is shown.}
\label{fig:MSECompare}
\includegraphics[width = 2.7in, height = 3.2in, trim = 0in 0in 0in 0in]{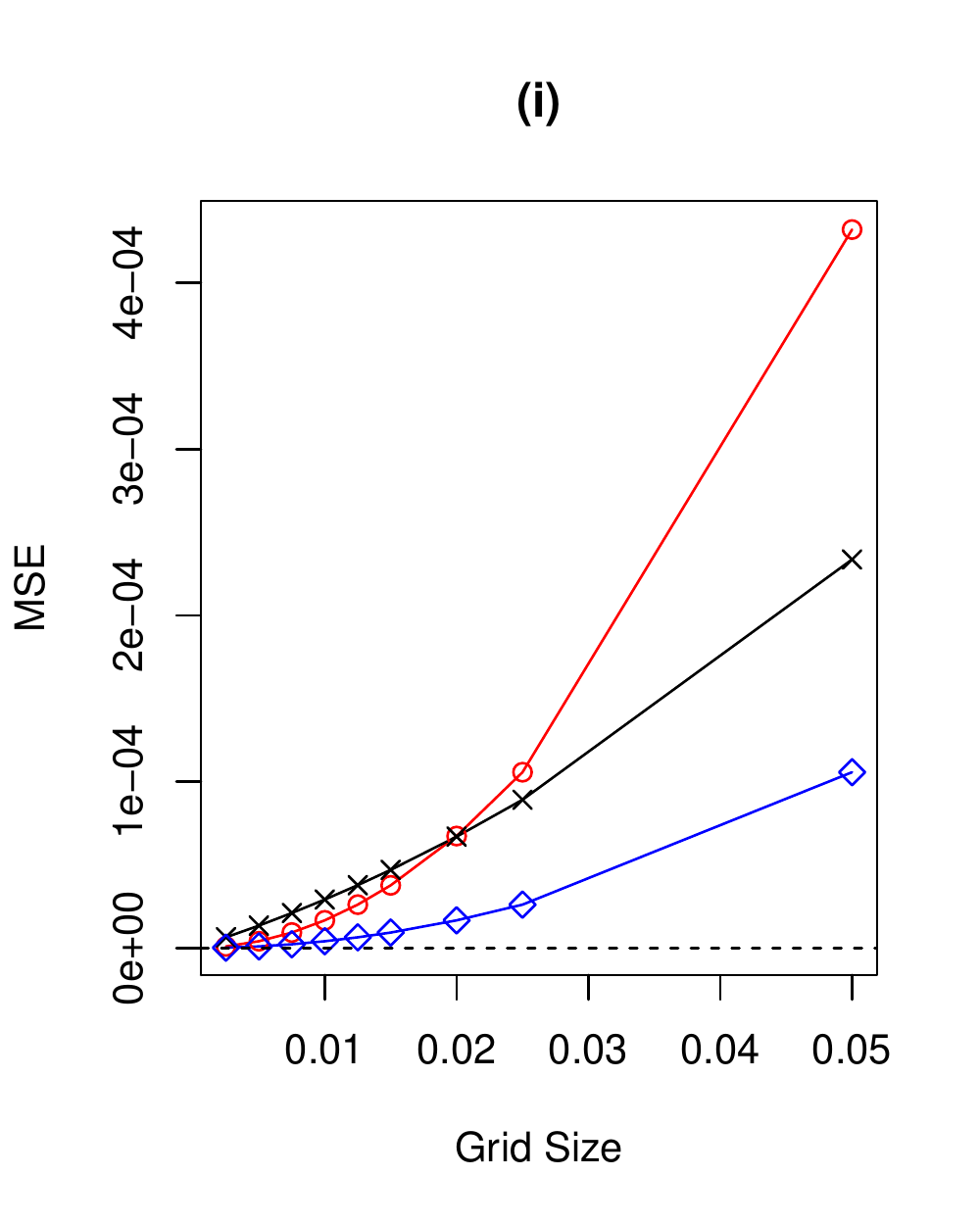}
\includegraphics[width = 2.7in, height = 3.2in, trim = 0in 0in 0in 0in]{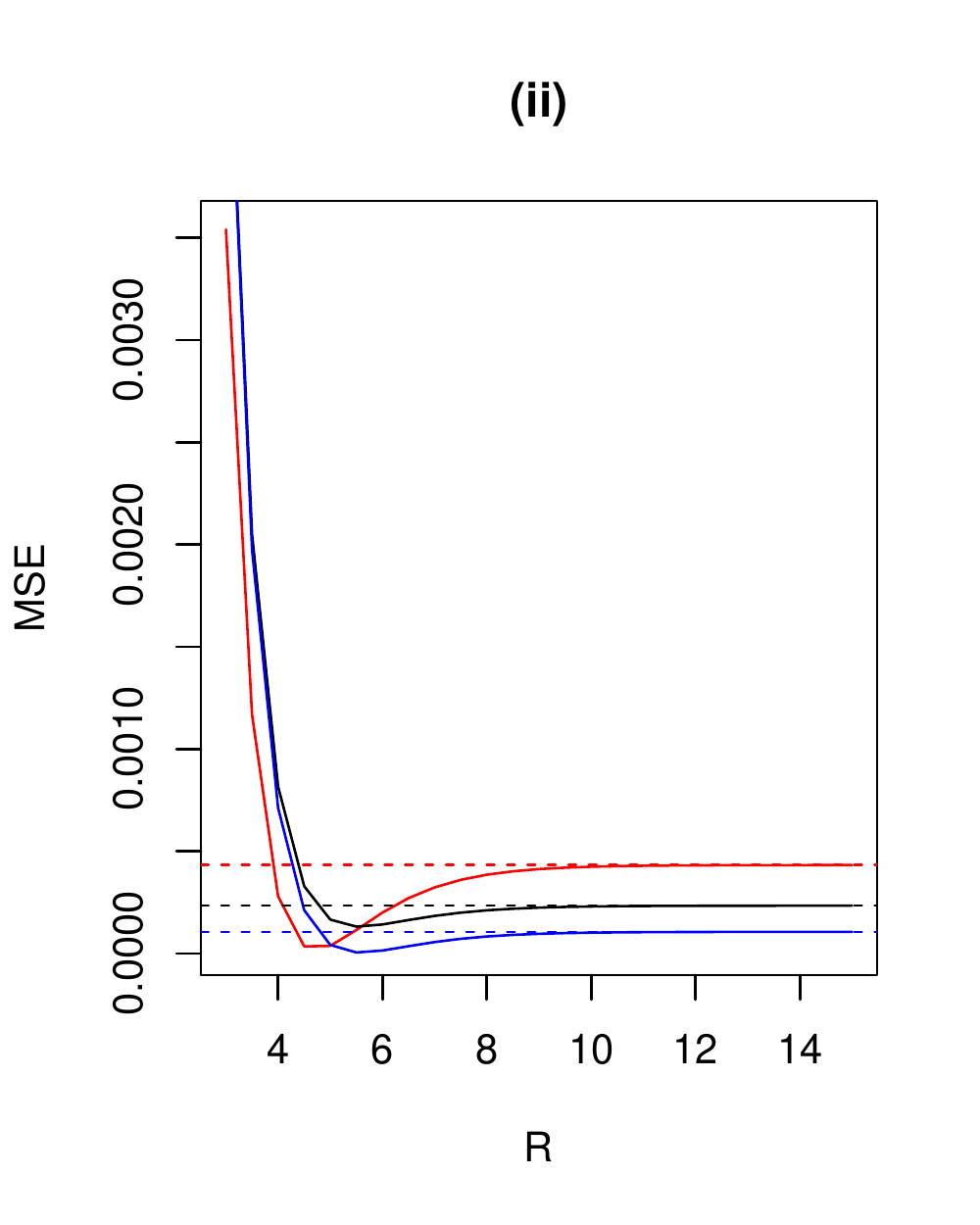}
\includegraphics[width = 2.7in, height = 3.2in, trim = 0in 0in 0in 0in]{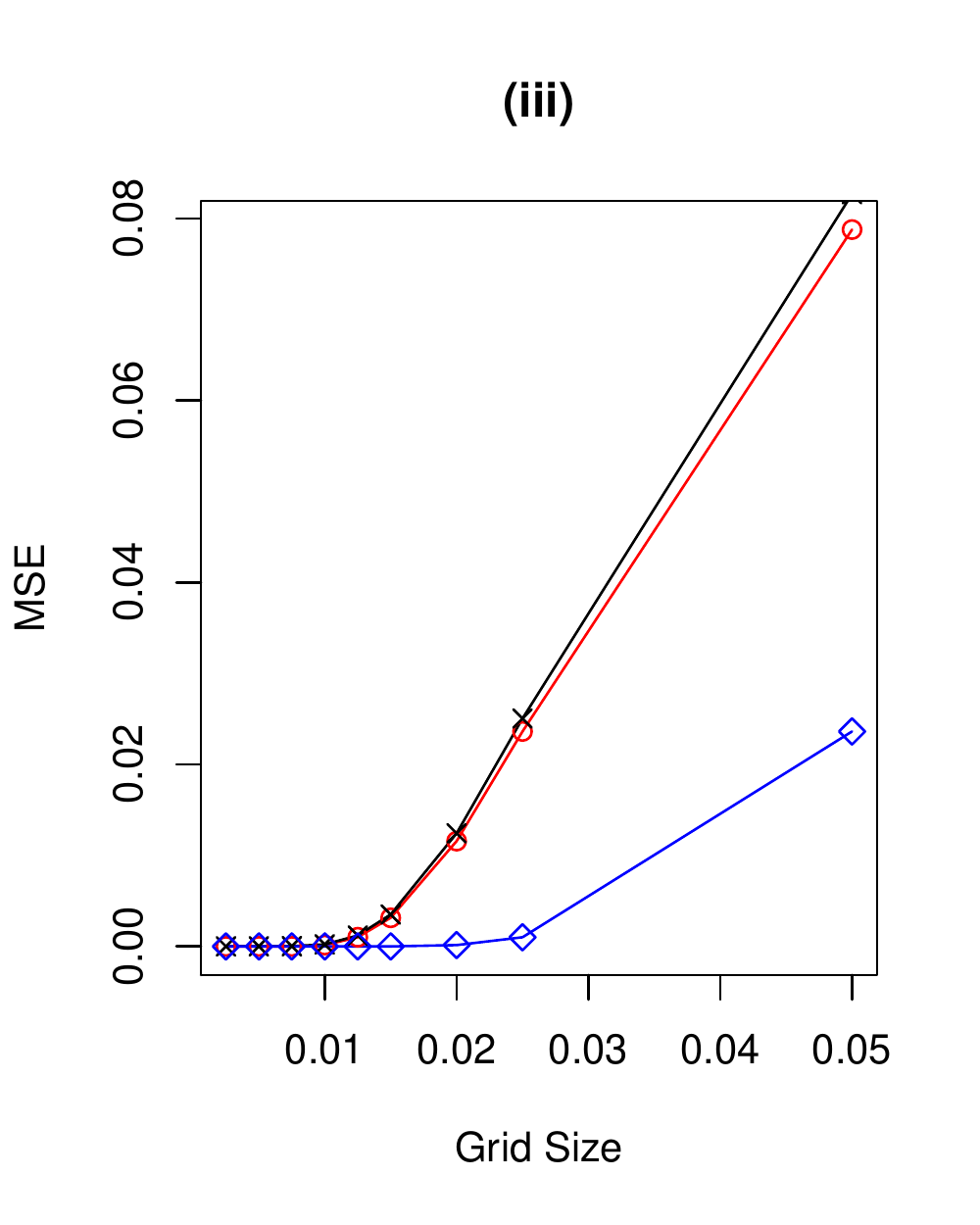}
\includegraphics[width = 2.7in, height = 3.2in, trim = 0in 0in 0in 0in]{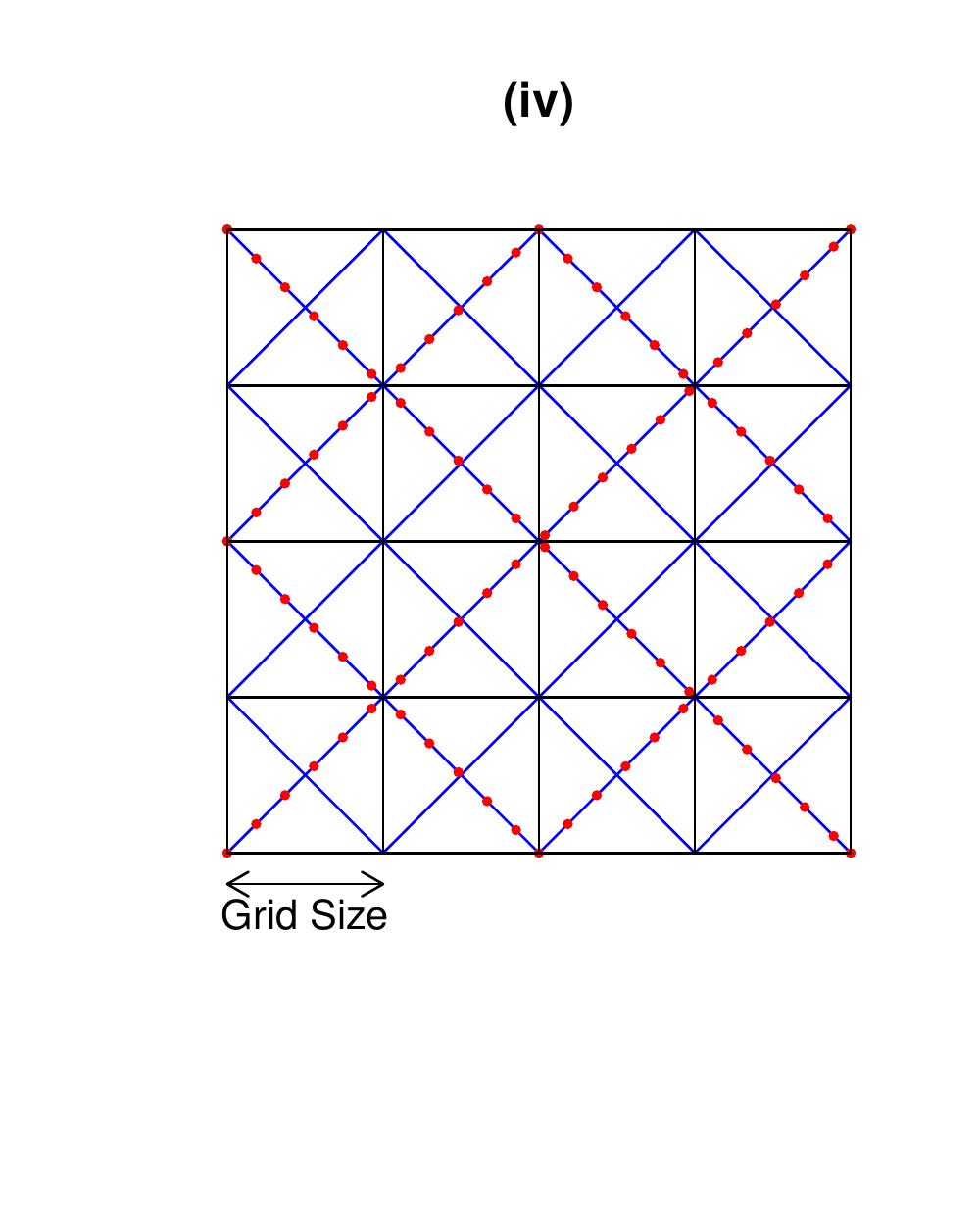}
\end{figure} 

\begin{Eg} \label{eg:MSE}
Let $Y_{t}(x)$ be the canonical process with $c = 1$, and let the mean and variance of the L\'evy seed be $\mu$ and $\tau^2$ respectively. Fix $p = q$, $\triangle_{t} = \triangle_{x} = \triangle$ and $R = p\triangle$. By using series identities and expansions, we find that for the RG algorithm: 
\begin{align*}
& \mathbb{E}\left[\left|Y_{t}(x)-Z_{t}(x)\right|^{2}\right] = \left[\frac{4\mu^{2}}{\lambda^{4}}(1+\lambda R)^{2}+\frac{\tau^{2}}{2\lambda^{2}}(1+2\lambda R)\right]e^{-2R\lambda} + O(\triangle) \text{ for fixed } R, \\
&\text{and } \mathbb{E}\left[\left|Y_{t}(x)-Z_{t}(x)\right|^{2}\right] \stackrel{R\rightarrow \infty}{\longrightarrow}\frac{4\mu^{2}}{\lambda^{4}} \left(1- \left[ \frac{\lambda^{2}\triangle^{2}}{\left(1 - e^{-\lambda \triangle}\right)^{2}}\right]\left[\frac{1 + e^{-\lambda\triangle} }{2}\right]\right)^{2} \\
&+\tau^{2} \left[\frac{1}{2\lambda^{2}}\left(1 - \left(\frac{2\left(1 - e^{-\lambda\triangle}\right)}{\lambda\triangle}-1\right)\frac{4\lambda^{2}\triangle^{2} e^{-2\lambda \triangle}}{\left(1 - e^{-2\lambda \triangle}\right)^{2}} \right) + \left( \frac{e^{-\lambda\triangle}}{\lambda^{2}}+ \frac{\triangle}{2\lambda} - \frac{\left(1 - e^{-\frac{\lambda\triangle}{2}}\right)}{\lambda^{2}\left(\frac{\lambda\triangle}{2}\right)}\right)\frac{2\lambda\triangle}{1 - e^{-2\lambda\triangle}}\right].
\end{align*}
\\
In a similar way, we find that the DG MSE and its limits are:
\begin{align*}
\mathbb{E}\left[\left|Y_{t}(x)-Z_{t}(x)\right|^{2}\right] &= \left[\frac{4\mu^{2}}{\lambda^{4}}(1+R)^{2}\lambda^{2}+\frac{\tau^{2}}{2\lambda^{2}}(1 + 2\lambda R)\right]e^{-2R\lambda} + O(\triangle) \text{ for fixed } R, \text{ and}\\
\mathbb{E}\left[\left|Y_{t}(x)-Z_{t}(x)\right|^{2}\right] &\stackrel{R\rightarrow \infty}{\longrightarrow} \frac{4\mu^{2}}{\lambda^{4}} \left(1- \left[ \frac{\lambda^{2}\triangle^{2}}{\left(1 - e^{-\lambda \triangle}\right)^{2}}\right]\right)^{2} +\frac{\tau^{2}}{2\lambda^{2}} \left[1 - \left\{\frac{2}{\lambda^{2}\triangle^{2}}\left(1 - e^{-\lambda\triangle}\right)^{2}-1\right\}\left[\frac{4\lambda^{2}\triangle^{2}}{\left(1 - e^{-2\lambda \triangle}\right)^{2}}\right]\right].
\end{align*}
When $\lambda = 1$, the MSEs of the RG and DG algorithms converge to the same limit with the same order $\triangle$ when $\triangle\rightarrow 0$ (here, the Big O notation has been used). Due to the infinite series in $\triangle$ and $R$, it is difficult to compare the MSEs analytically. Instead, we compute them numerically and plot the results for two scenarios: decreasing grid size with $R = 15$ and increasing $R$ with grid size equal to $0.05$. For both cases, we set $\mu = 0.2$ and $\tau = 0.1$. 
\\
From Figure \ref{fig:MSECompare}(i), we see that the MSEs of both the RG and the DG algorithms decrease to their common limit when the grid size decreases. So far, $\triangle$ for the DG algorithm refers to the grid size of the underlying RG. If one defines the DG grid size as $\triangle$, the diamonds would have width $2\triangle$ (see the red dotted lines in Figure \ref{fig:MSECompare}(iv)). In this case, the MSE of the DG algorithm is larger than that of the RG algorithm for grid sizes more than $0.02$, but less than that of the RG algorithm for grid sizes less than $0.02$. It seems that, when the grid size decreases beyond $0.02$, the ambit set approximation error of the RG decreases more slowly than the kernel discretisation error of this DG. If instead, one defines the DG grid size to be the distance to the nearest spatial or temporal neighbour, the DG with the same grid size as the RG in Figure \ref{fig:MSECompare}(iv) would be the one in blue. In this case, the MSE of the DG algorithm is consistently lower than that of the RG algorithm as can be seen from Figure \ref{fig:MSECompare}(i).
\\
Figure \ref{fig:MSECompare}(ii) shows the behaviour of the MSEs when $R$ increases from $3$ to $15$. Instead of a monotonic decrease as $R$ increases, we observe that the MSEs for both the RG and the DG algorithms drop sharply before slowly rising to their asymptotic limits (denoted by the dotted lines in their corresponding colours). This illustrates the interaction between kernel truncation and kernel discretisation errors. When we fix the grid size but increase $R$, kernel discretisation error increases as more terms are involved but kernel truncation error decreases. Both changes occur at decreasing rates due to the exponential decay of our kernel function. Eventually, these changes become insignificant and MSEs level out when $R$ is large enough (around $10$ here). 
\\
Based on our computations, the RG and DG MSEs converge to zero only if $\triangle\rightarrow 0$ and $R\rightarrow \infty$. If $p =\left[1/\triangle^{2}\right]$ so that $R\propto 1/\triangle$ and $ R \rightarrow \infty$ as $\triangle\rightarrow 0$, the MSEs in both cases converge to zero at the order $\triangle$. (Equivalently, we can fix $\triangle\propto 1/R$ and let $R\rightarrow \infty$.) This can be seen by replacing the $R$s in the MSE expressions by $K/\triangle$ for some constant $K$. 
\\
In Figure \ref{fig:MSECompare}(iii), we illustrate the RG and DG MSEs for $R = 0.05/\triangle$, $\lambda = 1$, $\mu = 0.2$ and $\tau = 0.1$. In this plot of the MSEs against grid size, the MSEs of the DG algorithms are consistently lower than that of the RG algorithm. This arose from our choice of $0.05$ for the proportionality constant between $R$ and $\triangle$. If instead we choose this to be $1$, the $R$ values for $\triangle < 0.05$ would be greater than $20$. Thus, there will be an insignificant change in MSE due to the increasing $R$. Instead, the MSE behaviour will be dominated by that caused by the change in grid size and the resulting plot will look very similar to Figure \ref{fig:MSECompare}(i). 
\end{Eg}

\section{Inference}
\label{sec:siminfer}

In this section, we focus on inference techniques for the canonical $\mathrm{OU}_{\wedge}$ process $Y_{t}(x)$ defined by (\ref{eqn:TestOUh}). 

\subsection{Moment-matching inference}
\label{sec:mmest}

We consider a spatio-temporal extension of the moments-matching (MM) method in \cite{KAKE2013}. This involves matching the theoretical forms and the empirical values for the normalised spatial and temporal variograms, as well as the cumulants of $Y_{t}(x)$. The normalised spatial and temporal variograms are defined as:
\begin{align}
\bar{\gamma}^{(S)}(d_{x})  &:=  \frac{\mathbb{E}\left[\left(Y_{t}\left(x\right)-Y_{t}\left(x-d_{x}\right)\right)^{2}\right]}{\Var\left[Y_{t}\left(x\right)\right]} = 2(1-\rho^{(S)}(d_{x})) = 2\left(1-\exp\left(-\frac{\lambda d_{x}}{c}\right)\right), \label{eqn:Svario}\\
\text{and } \bar{\gamma}^{(T)}(d_{t})  &:= \frac{\mathbb{E}\left[\left(Y_{t}\left(x\right)-Y_{t-d_{t}}\left(x\right)\right)^{2}\right]}{\Var\left[Y_{t}\left(x\right)\right]}=  2(1-\rho^{(T)}(d_{t})) =  2(1-\exp(-\lambda d_{t})). \label{eqn:Tvario}
\end{align}
Suppose we have data $\{Y_{\mathbf{v}}: \mathbf{v} \in \mathcal{V}\}$ where $\mathcal{V}$ is the set of space-time observation sites. Let $N(d_{x}, d_{t})$ be the set containing all the pairs of indices of sites with spatial distance $d_{x}> 0$ and temporal distance $d_{t}> 0$, and let $\hat{\kappa}_{2}(Y_{\mathbf{v}})$ denote the empirical variance in our observations. We can estimate $\bar{\gamma}^{(S)}(d_{x})$ and $\bar{\gamma}^{(T)}(d_{t})$ by:
\begin{align}
\hat{\bar{\gamma}}^{(S)}(d_{x}) &= \frac{1}{|N(d_{x}, 0)|} \sum_{(i, j) \in N(d_{x}, 0)} \frac{(Y_{\mathbf{v}_{i}} - Y_{\mathbf{v}_{j}})^{2}}{\hat{\kappa}_{2}(Y_{\mathbf{v}})}, \text{ and }\hat{\bar{\gamma}}^{(T)}(d_{t}) = \frac{1}{|N(0, d_{t})|} \sum_{(i, j) \in N(0, d_{t})} \frac{(Y_{\mathbf{v}_{i}} - Y_{\mathbf{v}_{j}})^{2}}{\hat{\kappa}_{2}(Y_{\mathbf{v}})}. \label{eqn:ENVxt}
\end{align}
By matching the empirical and the theoretical forms, we can estimate $\lambda$ and $c$ by $\hat{\lambda} = -d_{t}^{-1}\log\left(1 - \frac{\hat{\bar{\gamma}}^{(T)}(d_{t})}{2}\right)$, and $\hat{c}  =  -\hat{\lambda} d_{x}/\log\left(1 - \hat{\bar{\gamma}}^{(S)}(d_{x})/2\right)$. 
\\
Having found $\hat{\lambda}$ and $\hat{c}$, we use the cumulants of $Y_{t}(x)$ to estimate the parameters of the L\'evy basis $L$. For $l = 1, 2, 3,$ and $4$, we can estimate $\kappa_{l}\left(Y_{t}\left(x\right)\right)$ non-parametrically by (see for example, page 391 of \cite{KS1976}):
\begin{align*}
&\hat{\kappa}_{1}(Y_{t}(x)) = \frac{1}{D}S_{1}, \text{ }\hat{\kappa}_{2}(Y_{t}(x)) = \frac{1}{D(D-1)}(DS_{2} - S_{1}^{2}),\text{ } \hat{\kappa}_{3}(Y_{t}(x)) = \frac{1}{D(D-1)(D-2)}(D^{2}S_{3} - 3DS_{2}S_{1} + 2S_{1}^{3}), \\
&\text{and } \hat{\kappa}_{4}(Y_{t}(x)) = \frac{1}{D(D-1)(D-2)(D-3)}[(D^{3}+D^{2})S_{4} - 4(D^{2}+D)S_{3}S_{1} - 3(D^{2}-D)S_{2}^{2} + 12DS_{2}S_{1}^{2} - 6S_{1}^{4}], 
\end{align*}
where $D$ denotes our sample size and $S_{k} = \sum_{i = 1}^{D} Y^{k}_{\mathbf{v}_{i}}$. Now, we can match the estimates with the theoretical representations and solve for the basis parameters. We give an example for the Gaussian basis:
\\
\begin{Eg}
Let $L$ be a Gaussian basis whose seed has mean $\mu$ and variance $\tau^{2}$. With $\lambda = \hat{\lambda}$ and $c = \hat{c}$, we obtain:
\begin{align*}
\hat{\kappa}_{1}(Y_{t}(x)) = \hat{\mu} \frac{2\hat{c}}{\hat{\lambda}^{2}} \Rightarrow \hat{\mu} = \frac{\hat{\kappa}_{1}(Y_{t}(x))\hat{\lambda}^{2}}{2\hat{c}} \text{ and } \hat{\kappa}_{2}(Y_{t}(x)) &= \hat{\tau}^{2} \frac{\hat{c}}{2\hat{\lambda}^{2}} \Rightarrow  \hat{\tau} = \sqrt{\frac{2\hat{\kappa}_{2}(Y_{t}(x))\hat{\lambda}^{2}}{\hat{c}}}.
\end{align*}
\end{Eg}

\subsection{A least-squares adaptation} \label{sec:LSadapt}

The MM method introduced in the previous subsection involved using the normalised variograms at single lags to find $\hat{\lambda}$ and $\hat{c}$. Instead, we can compute (\ref{eqn:Svario}) and (\ref{eqn:Tvario}) at several lags and use least-squares (LS) to fit the values to the theoretical curves. By fitting the empirical temporal variogram to its theoretical form in (\ref{eqn:Tvario}), we can find $\hat{\lambda}$. With this, we can find $\hat{c}$ by fitting the empirical spatial variogram to the expression in (\ref{eqn:Svario}). We can then proceed as in the MM method to find the L\'evy basis parameters. This approach is useful for the cases where we have more than two parameters in the normalised variograms such as case (ii) in Example \ref{eg:Scorr}.
\\
The LS methodology can be considered as a variant of the Generalised Method of Moments (GMM). These two methods are similar in that they minimise error functions involving ``moment'' conditions. However, there is a key difference. In the LS method, we used normalised variograms to separate the effects of the L\'evy basis parameters and the autocorrelation parameters ($\lambda$ and $c$). We first estimated $\lambda$ and $c$ before plugging our estimates into the expressions for the cumulants of our process to estimate the basis parameters. In a GMM framework, however, we cannot use normalised variograms or cumulants as we have to work with ergodic averages of functions computed from our data. This means that we cannot separate the effects of the basis and autocorrelation parameters. Instead, we have to combine all the ``moment" conditions involving variograms and moments of our data into one optimisation criterion and find all the estimates simultaneously. High computational effort is required to navigate through the high dimensional surface of this optimisation criterion. More work is required to see how a GMM approach can be applied to $\mathrm{OU}_{\wedge}$ processes.

\begin{figure}[tbp]
\centering
\caption{Theoretical and empirical curves of: (i) the normalised temporal variogram; (ii) and the normalised spatial variogram. The black curves denote the theoretical curves, while the red and blue curves denote the moments-matching fit and the least-squares fit using $15$ lags respectively. Each lag corresponds to $0.05$ units.}
\label{fig:NVLS}
\includegraphics[width = 4in, height = 2.3in, trim = 0.1in 0.5in 0.1in 0.1in]{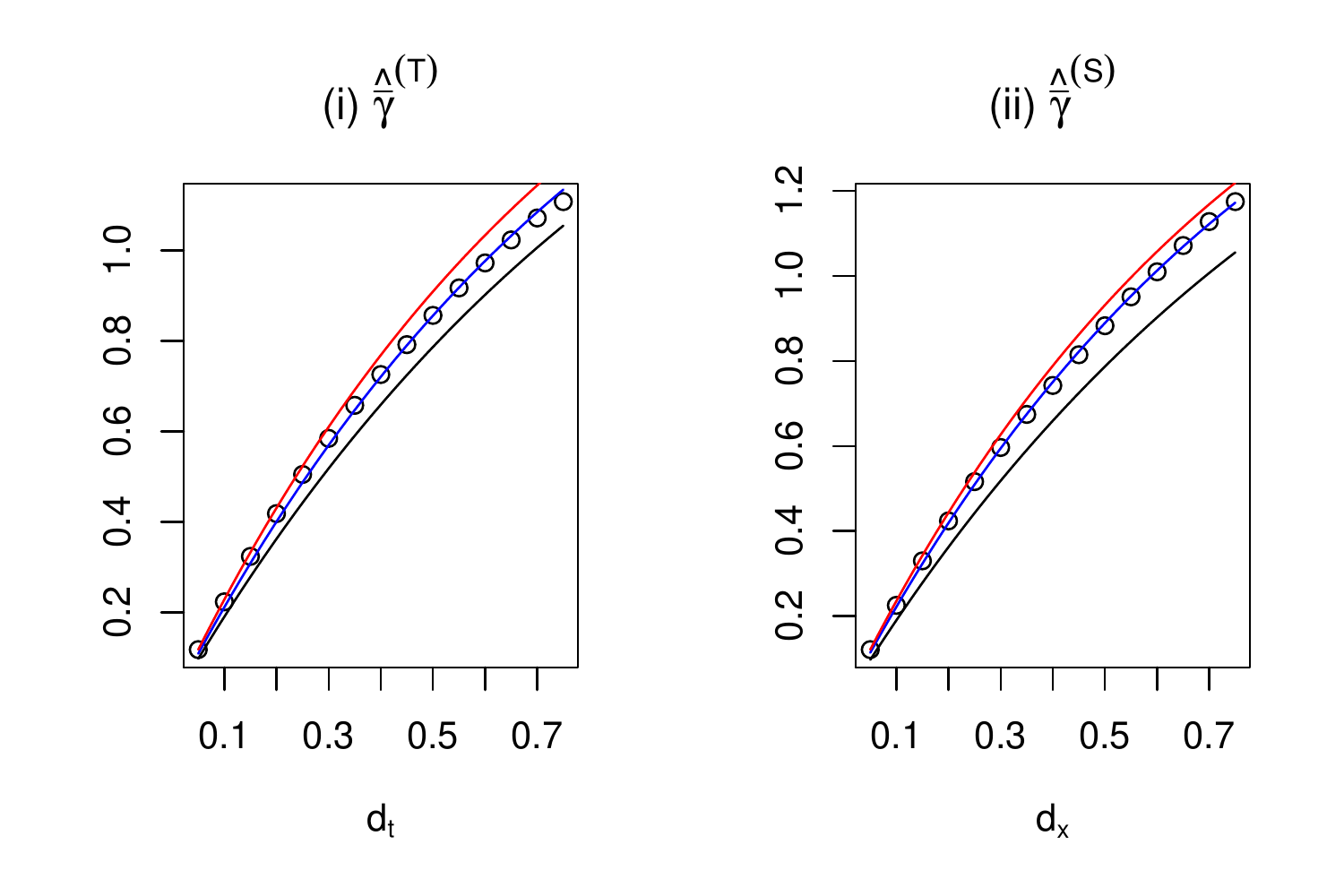}
\end{figure}

\subsection{Properties of our estimators}

\begin{thm} \label{thm:consist}
Suppose that we have lattice data, $\{Y_{\mathbf{v}_{i}} :\mathbf{v}_{i} = (x_{j}, t_{k}), k = 1, \dots, m, j = 1, \dots, n\}$, for a canonical $\mathrm{OU}_{\wedge}$ process. Then, the MM and LS estimators for the model parameters are consistent under increasing domain asymptotics. 
\end{thm}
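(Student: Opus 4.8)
The plan is to treat both estimators as smooth functions of finitely many empirical moment-type quantities and to reduce consistency to a weak law of large numbers (WLLN) for spatio-temporal averages, combined with an identifiability argument. First I would observe that the MM estimators $\hat\lambda$ and $\hat c$, and the subsequent L\'evy-basis-parameter estimators, are explicit continuous functions of the empirical normalised variograms $\hat{\bar\gamma}^{(T)}(d_t)$, $\hat{\bar\gamma}^{(S)}(d_x)$ at the chosen lags and of the empirical cumulants $\hat\kappa_1,\dots,\hat\kappa_4$; the LS estimators are minimisers of an objective that depends continuously on the same empirical variograms across several lags. Each empirical cumulant is a polynomial in the raw power-sum averages $D^{-1}S_k = D^{-1}\sum_i Y_{\mathbf{v}_i}^{k}$, and each empirical variogram is a sample average of $(Y_{\mathbf{v}_i}-Y_{\mathbf{v}_j})^2$ over the index pairs at a fixed lag. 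Hence, by the continuous mapping theorem, it suffices to show that each of these basic spatio-temporal averages converges in probability to its theoretical value, which by Theorem \ref{thm:cgfform} (for the cumulants) and by (\ref{eqn:Svario})--(\ref{eqn:Tvario}) (for the variograms) is exactly the quantity determined by the true parameters.

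The core step is a WLLN for averages of the stationary field over a lattice whose domain grows in both coordinates. I would establish it through an $L^2$ argument. Spatio-temporal stationarity (Theorem \ref{thm:tsstation}) gives asymptotic unbiasedness of every such sample average; in particular, for a fixed lag all terms $(Y_{\mathbf{v}_i}-Y_{\mathbf{v}_j})^2$ share the same mean. To show the variances vanish, I would write the variance of a sample average as a normalised double sum of covariances of the summands at different sites and exploit the exponential decay of the spatio-temporal autocovariance of the canonical process, established in Section \ref{sec:autoc} via (\ref{eqn:STCorr}): decay in both spatial and temporal separation makes this double sum $o(D^2)$, so the variance of the average is $o(1)$ and $L^2$ convergence follows. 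For the higher power sums $S_3$ and $S_4$ one controls the analogous variances through the higher-order joint cumulants of the field, which also decay in the site separation, reducing the statement to a summability estimate over the growing index set; this requires the L\'evy seed to possess sufficiently many finite moments. Alternatively, one could lift the one-dimensional temporal and spatial ergodicity of Theorem \ref{thm:ergodicity} to a multiparameter ergodic theorem and obtain almost-sure convergence directly.

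Finally I would close with identifiability and the extremum-estimator step. For the MM estimators, the maps $\lambda\mapsto 2(1-e^{-\lambda d_t})$ and $c\mapsto 2(1-e^{-\lambda d_x/c})$ are strictly monotone, so $\lambda$ and $c$ are uniquely recovered with continuous inverses at the true values, after which the basis parameters follow from the continuous cumulant relations of Theorem \ref{thm:cgfform}. For the LS estimators I would invoke the standard consistency result for M-estimators: uniform convergence over a compact parameter set of the sample least-squares objective to a limiting objective follows from the WLLN above together with continuity of the theoretical variograms in $(\lambda,c)$, and the limiting objective is uniquely minimised at the true parameters because (\ref{eqn:Svario})--(\ref{eqn:Tvario}) determine $(\lambda,c)$ injectively; uniform convergence plus the unique minimum yields consistency.

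The hard part will be the WLLN of the second paragraph. The earlier results give ergodicity only one-dimensionally (along time for fixed space, and along space for fixed time), whereas the lattice estimators average over both directions simultaneously; the genuine work is either to justify passing to a multiparameter ergodic theorem for the non-Gaussian field, or, in the $L^2$ approach, to derive the decay and summability bounds on the higher-order joint cumulants of the field that are needed to control the variances of $S_3$ and $S_4$ under increasing-domain asymptotics.
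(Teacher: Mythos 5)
Your proposal is correct in its overall architecture and would work, but at the crucial step it takes a genuinely different (and more demanding) route than the paper. The skeleton is the same: reduce consistency of the MM and LS estimators to a weak law for the empirical cumulants and normalised variograms via the continuous mapping theorem, then close with identifiability (your monotonicity argument for $\hat\lambda,\hat c$) and, for LS, an extremum-estimator argument — the paper does the latter by citing and lightly editing Theorem 3.1 of \cite{LLC2002} (identity weight matrix, normalised variograms), which is in substance exactly your uniform-convergence-plus-unique-minimum step. Where you diverge is the WLLN itself: you propose either an $L^2$ argument using the covariance decay in (\ref{eqn:STCorr}) together with decay of higher-order joint cumulants of the field, or a lift to a multiparameter ergodic theorem, and you correctly flag this as the hard part. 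The paper avoids both of these. Its trick is to iterate the one-dimensional results: by Theorem 36.4 of \cite{Billing1995}, any measurable functional of an ergodic stationary process is again ergodic, so Theorem \ref{thm:ergodicity} already gives, for each \emph{fixed} $x$, consistency of the time-averaged powers $Y_{t}(x)^{k}$ and of the time-averaged squared increments $\hat{\gamma}_{x}(d_t)$; the full lattice average is then the average over the $n$ spatial sites of these identically distributed (by Theorem \ref{thm:tsstation}) consistent time averages, and this spatial averaging preserves convergence (cleanly so, since the ergodic theorem gives $L^1$ convergence) without requiring any joint space-time mixing. What each approach buys: the paper's route needs only integrability of the functionals involved (e.g.\ finite fourth moments of $Y$ for $\hat\kappa_4$) and reuses machinery already established, whereas your $L^2$ route needs strictly stronger moment assumptions (finiteness of eighth moments for the variance of $\hat\kappa_4$ to exist) plus summability estimates on joint cumulants — these do hold for the canonical process, since joint cumulants are integrals of exponential kernels over intersections of ambit sets, but proving them is real extra work; in exchange your argument is more quantitative, yielding explicit $O(D^{-1})$ variance rates that the ergodic argument does not provide. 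In short: nothing in your plan would fail, but the "genuine work" you identified in your final paragraph can be dissolved entirely by the paper's iterated one-dimensional ergodicity argument rather than solved head-on.
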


The proof of this asymptotic property can be found in the Appendix. To investigate properties of our estimators under finite samples, we can conduct simulation experiments. For illustrative purposes, we assume a Gaussian basis and fix $\lambda = c = 1$. We use the RG algorithm with $\triangle_{t} = \triangle_{x} = 0.05$ and $p = q = 300$ to generate $500$ data sets covering the $[0, 10] \times [0, 10]$ space-time region. Similarly, we use the DG algorithm (with diamond width $2\triangle_{t}$) to generate $500$ data sets. For both cases, each dataset took about $27$ seconds to create on a PC with characteristics: Intel$^{\circledR}$ Core\texttrademark i7-3770 CPU Processor @ 3.40GHz; 8GB of RAM; Windows 8.1 64-bit . 
\\
We obtain the MM and LS estimates of the $500$ RG and DG datasets. For the LS procedure, we consider $15$ time and space lags for the normalised variograms, each lag being $0.05$ units. For each lag, these were calculated using (\ref{eqn:ENVxt}). Figure \ref{fig:NVLS} shows the LS fit of the theoretical forms, (\ref{eqn:Svario}) and (\ref{eqn:Tvario}), to the empirical values in blue for one RG dataset. From the fitted normalised temporal variogram, we obtain $\hat{\lambda} = 1.037$; from the fitted normalised spatial variogram, we obtain $\hat{\lambda}/\hat{c} = 1.090$ so that $\hat{c} = 0.951$. 
\\
The results for the $500$ RG datasets are shown in the first two columns of Figure \ref{fig:BPGau1}. The red lines in the plots denote the true values of the parameters. We notice that while the MM estimates of $c$ in Plot (f) always lie below the true value $1$, the LS estimates of $c$ in Plot (e) are centred around $1$. This can be explained as follows: unlike the MM method which only fits the normalised variogram at the first lag, the LS method fits $15$ lags. This has the effect of pulling the estimated variograms closer to the true curves in most cases such as that in Figure \ref{fig:NVLS}, where the blue curve lies closer to the black than the red.
\\
This increase in accuracy, however, comes at the expense of higher standard deviations. While the MM estimates for $c$ in Plot (f) have a range of about $0.03$, the LS estimates have a range of about $0.5$. Such an increase in standard deviation is also seen for the DG data sets whose results are shown in the last two columns of Figure \ref{fig:BPGau1}. While the MM estimates for $c$ in Plot (h) have a range of about $0.07$, the LS estimates in Plot (g) have a range of about $0.9$. As a result of using the estimated normalised variogram at multiple lags, there are 
more sources of variation in the LS method. In addition, the estimated normalised variograms at higher lags incur larger standard deviations due to the smaller proportion of data used for their calculation.
\\
We have seen that despite their inherent errors, our DC algorithms are useful for revealing properties of our MM and LS estimators. In the next subsection, we use our inference methods to tell us more about our simulation algorithms.
 
\begin{figure}[tbp]
\centering
\caption{Box plots of the parameter estimates from the LS and MM methods under the RG and DG algorithms in the case $c = 1$ and $L$ being a Gaussian basis. The true values are denoted by the red lines.}
\label{fig:BPGau1}
\includegraphics[width = 4.5in, height = 7in, trim = 0.4in 0.5in 0.4in 0in]{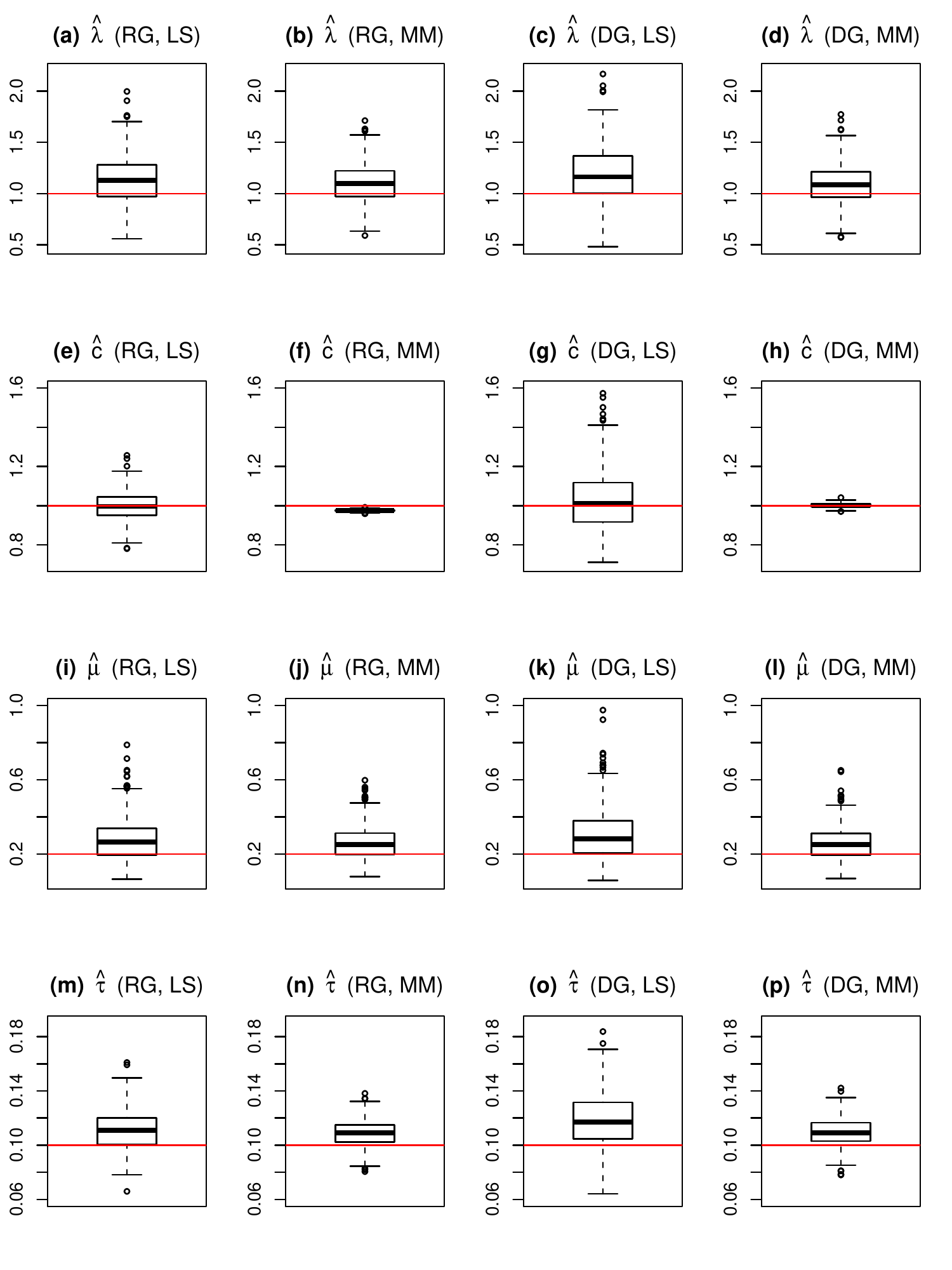}
\end{figure}

\subsection{Investigating ambit set approximations in our algorithms}

For fixed $\triangle_{t}$, the DG algorithm (with diamond width $2\triangle$) exhibits larger kernel discretisation error than the RG algorithm. However, by construction, it approximates the ambit sets of the canonical $\mathrm{OU}_{\wedge}$ processes better. This results in a better representation of the autocorrelations and variograms of the process. We use the MM results from the $500$ RG and DG data sets to illustrate this.
\\
From the box plots of the MM estimates of the RG data in the second column of Figure \ref{fig:BPGau1}, we see that reasonable results are obtained for all parameters except $c$. There seems to be a consistent underestimation of $c$. In comparison, reasonable results are obtained for all parameters including $c$ in the last column of Figure \ref{fig:BPGau1} where the MM method is applied to the DG data. As $\hat{c}$ is obtained directly from the estimated spatial normalised variogram, this implies that mimicking the edges of the triangular ambit sets of the canonical $\mathrm{OU}_{\wedge}$ process through DGs instead of approximating it by blocks through RGs, allows us to better capture the spatio-temporal dependencies of our process. 
\\
The better ambit set approximation of the DG algorithm, as well as the bias-variance trade-off of the LS method mentioned in the previous subsection, were also observed in cases where $c$ was set to $2$ or $0.5$ and $L$ to IG, NIG or Gamma bases. The associated box plots can be found in Section 4 of the supplementary material provided in \cite{NV2015}. 

%
%
%
%

\section{Application to radiation anomaly data} \label{sec:app}

Space-time ARMAs have been used for satellite ozone data and suggested for other global coverage data including those on cloudiness (see for example, \cite{NS1992}). Since $\mathrm{OU}_{\wedge}$ processes can be viewed as continuous extensions of space-time ARMAs and cloud cover can be inferred from outgoing longwave radiation (OLR), we apply $\mathrm{OU}_{\wedge}$ processes to the radiation anomaly data provided by the \cite{website:NOAA}.

\subsection{Fitting the canonical model} \label{sec:empill}

The spatial component $X$ of our data is longitude in degrees east ($^{\circ}$E) with $\mathcal{X} = \{1.25, 3.75, \dots, 356.25, 358.75\}$. The temporal component $T$ consists of sets of five days starting from 1-5 January 2014 to 18-22 September 2014 so that $\mathcal{T} = \{1, 2, \dots, 52, 53\}$. In total, we have $144 \times 53 = 7632$ observations of five-day average OLR anomalies averaged over the latitudes $5$ degrees south to $5$ degrees north, measured in watts per square metre. These anomalies are calculated with respect to the base period 1979-1995 and capture the changes in precipitation patterns over the time. Specifically, negative anomalies imply increased cloudiness and an enhanced likelihood of precipitation, and positive anomalies imply decreased cloudiness and a lower chance of precipitation. As this relationship between OLR anomalies and precipitation is most reliable in tropical regions, our data only covers the latitudes near the equator. A heat map of our data can be found in Figures \ref{fig:GFittedHeat}(i) and \ref{fig:NIGFittedHeat}(i). Since no deterministic trend or seasonal component was found, we proceed to fit the canonical model to the data. 
\\
Autocorrelation structures are a key feature in space-time ARMAs and $\mathrm{OU}_{\wedge}$ processes. Figures \ref{fig:LWRadTSt} and \ref{fig:LWRadTSs} show the time series and ACF plots of our data at three spatial points, and the spatial series and ACF plots at three temporal points respectively. In the ACF plots, the red curves represent the fitted ACF from the MM estimation while the blue curves represent the fitted ACF from the LS adaptation using $15$ lags. The fits seem rather similar for the temporal series in Figure \ref{fig:LWRadTSt}. Indeed, the estimates obtained for the rate parameter $\lambda$ from the MM and LS methods are quite close, $1.081$ and $1.033$ respectively. In comparison, the LS fit in Figure \ref{fig:LWRadTSs} seems to be much better than that of the MM fit. By considering more lags for the normalised spatial variogram, we have $\hat{c} = 14.819$ instead of $31.504$, a substantial difference. The LS approach may be preferred in this setting as we are aiming to fit the empirical autocorrelations well. These values of $\lambda$ and $c$ are important as they tell us how much weight to give to the information on neighbouring locations and which neighbouring space-time locations to consider.

\begin{figure}[tbp]
\centering
\caption{(i) Empirical densities of the radiation anomaly data (in black) and the simulated data sets combined (in red for the Gaussian basis and blue for the NIG basis); (ii) log densities of the same data and simulated data sets combined.}
\label{fig:DenCompare}
\includegraphics[width = 5in, height = 3.2in, trim = 0.2in 0.5in 0.2in 0in]{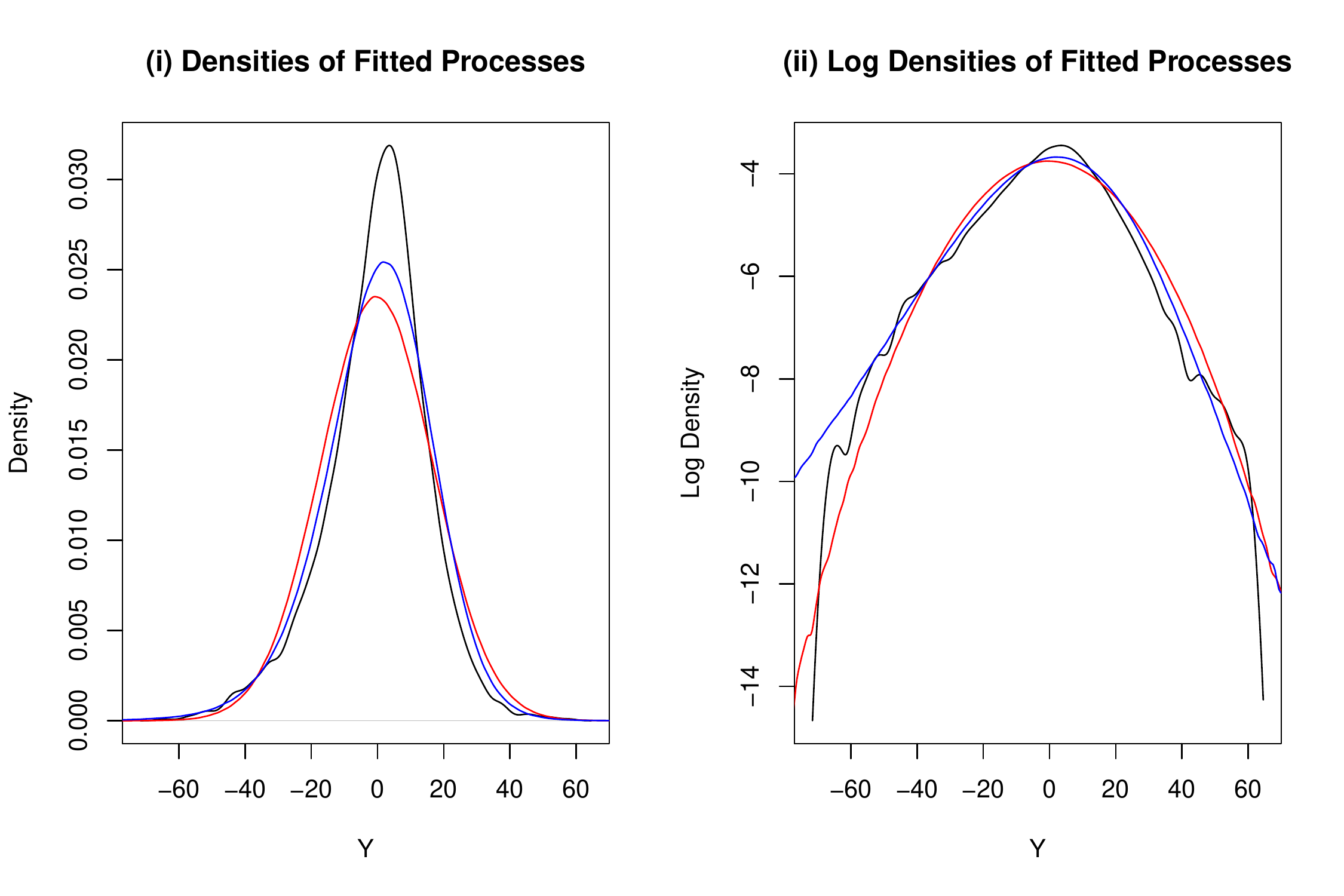}
\end{figure}

\begin{figure}[here]
\centering
\caption{Heat plot of: (i) the radiation anomaly data; (ii) the Gaussian simulated data with random seed 1.} \label{fig:GFittedHeat}
\includegraphics[width = 3in, height = 2.8in, trim = 0.4in 0.5in 0in 0in]{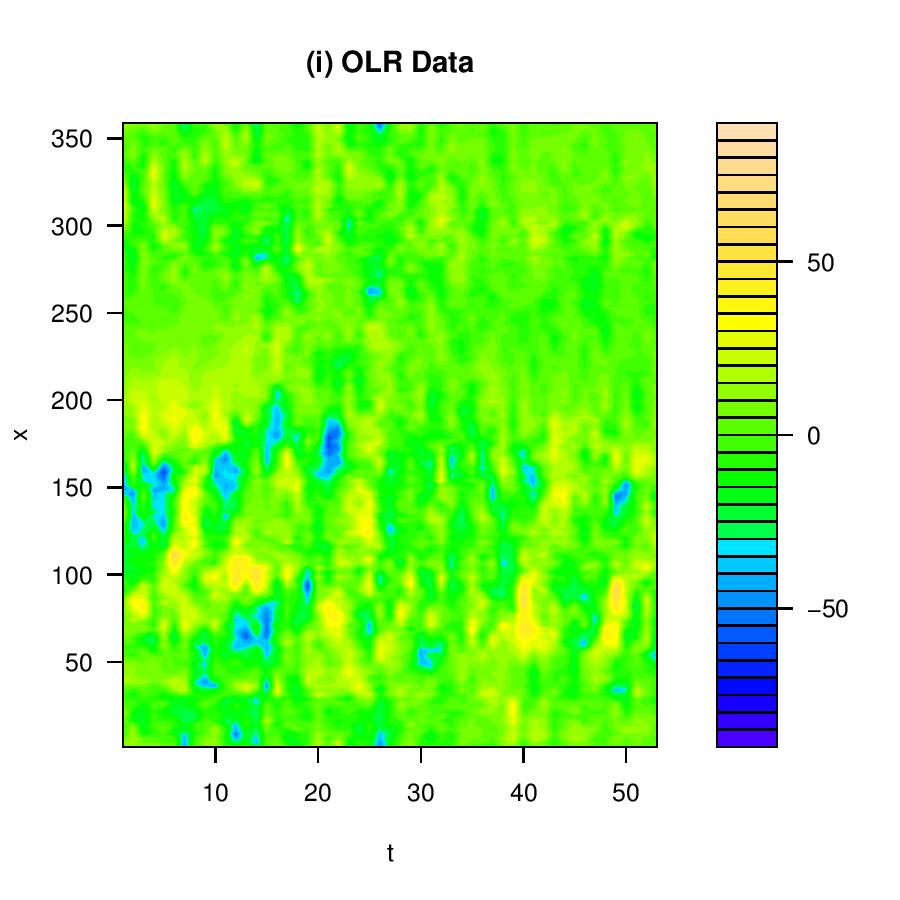}
\includegraphics[width = 3.1in, height = 2.8in, trim = 0.3in 0.5in 0in 0in]{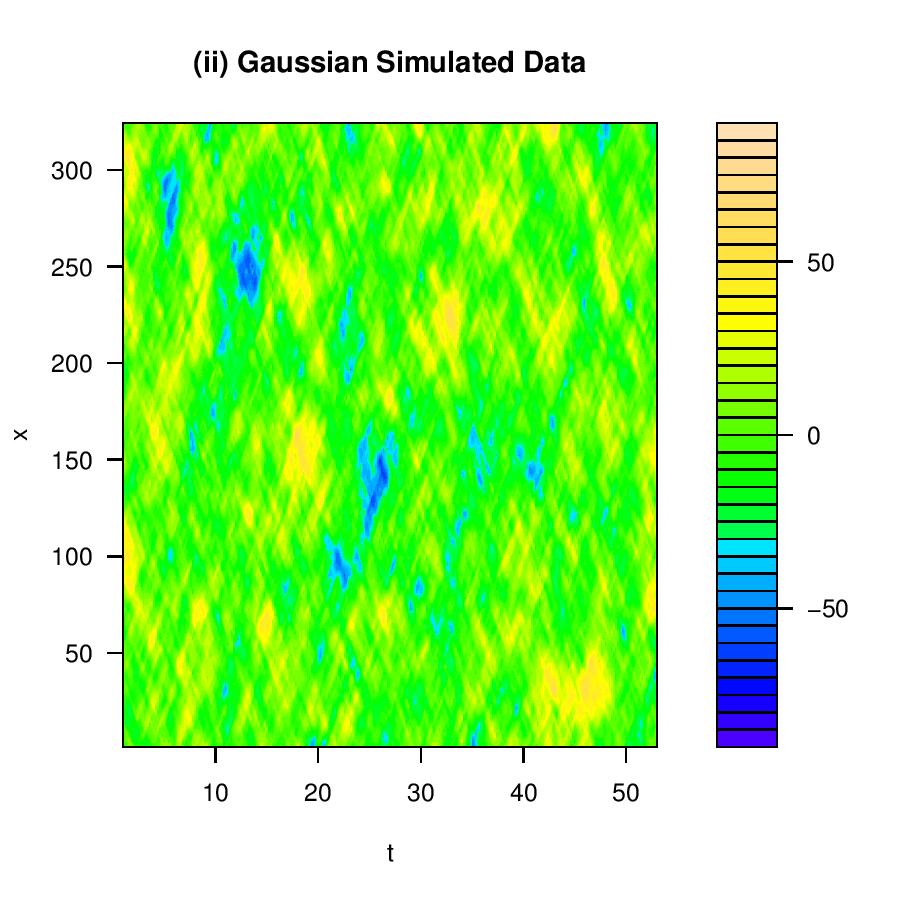}
\end{figure}

\begin{figure}[here]
\centering
\caption{Heat plot of: (i) the radiation anomaly data; (ii) the NIG simulated data with random seed 1.} \label{fig:NIGFittedHeat}
\includegraphics[width = 3in, height = 2.8in, trim = 0.4in 0.5in 0in 0in]{LWRadDataHeat.jpg}
\includegraphics[width = 3.1in, height = 2.8in, trim = 0.3in 0.5in 0in 0in]{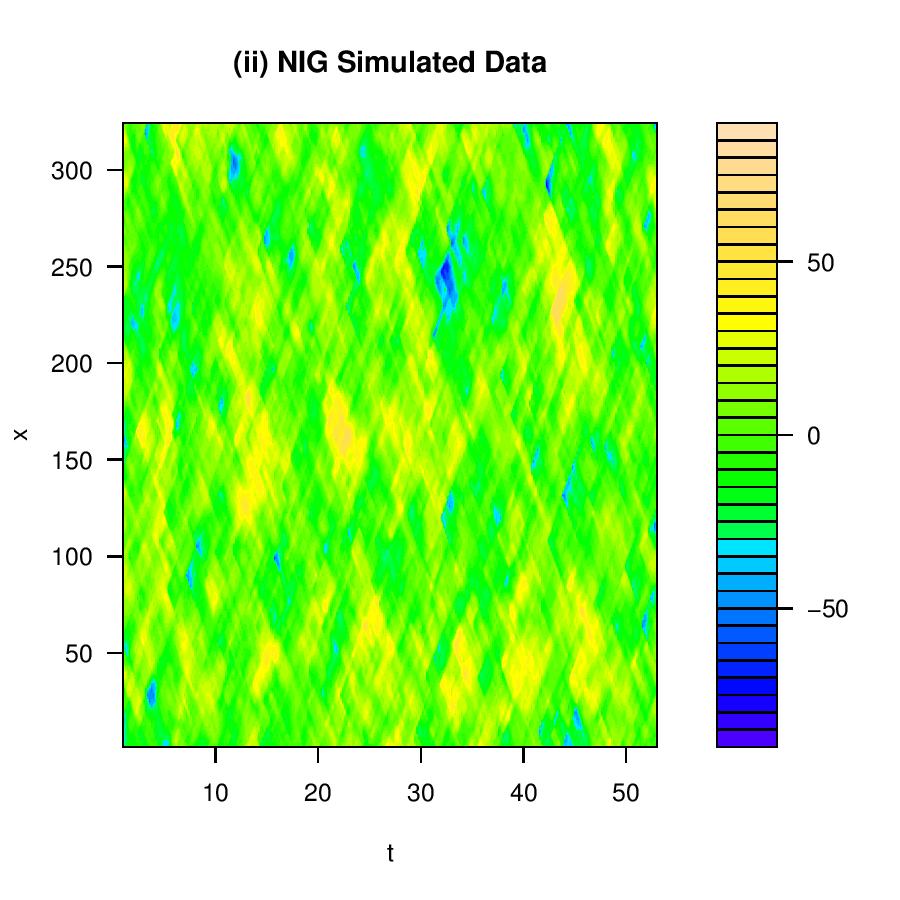}
\end{figure}

\begin{figure}[tbp]
\centering
\caption{Properties in time for fixed spatial locations of the radiation anomaly data. Time series and ACF plots for: (a)-(b) $Y_{t}(1.25)$; (c)-(d) $Y_{t}(118.75)$; and (e)-(f) $Y_{t}(238.75)$. From our definition of $T$, the time lags are the index differences between the pentads considered. In the ACF plots, the black vertical bars denote the empirical autocorrelation values, while the red and blue curves represent the ACFs arising from the moments-matching estimation and the least-squares approach respectively.}
\label{fig:LWRadTSt}
\includegraphics[width = 5.2in, height = 8.8in, trim = 0.2in 0.2in 0.2in 0in]{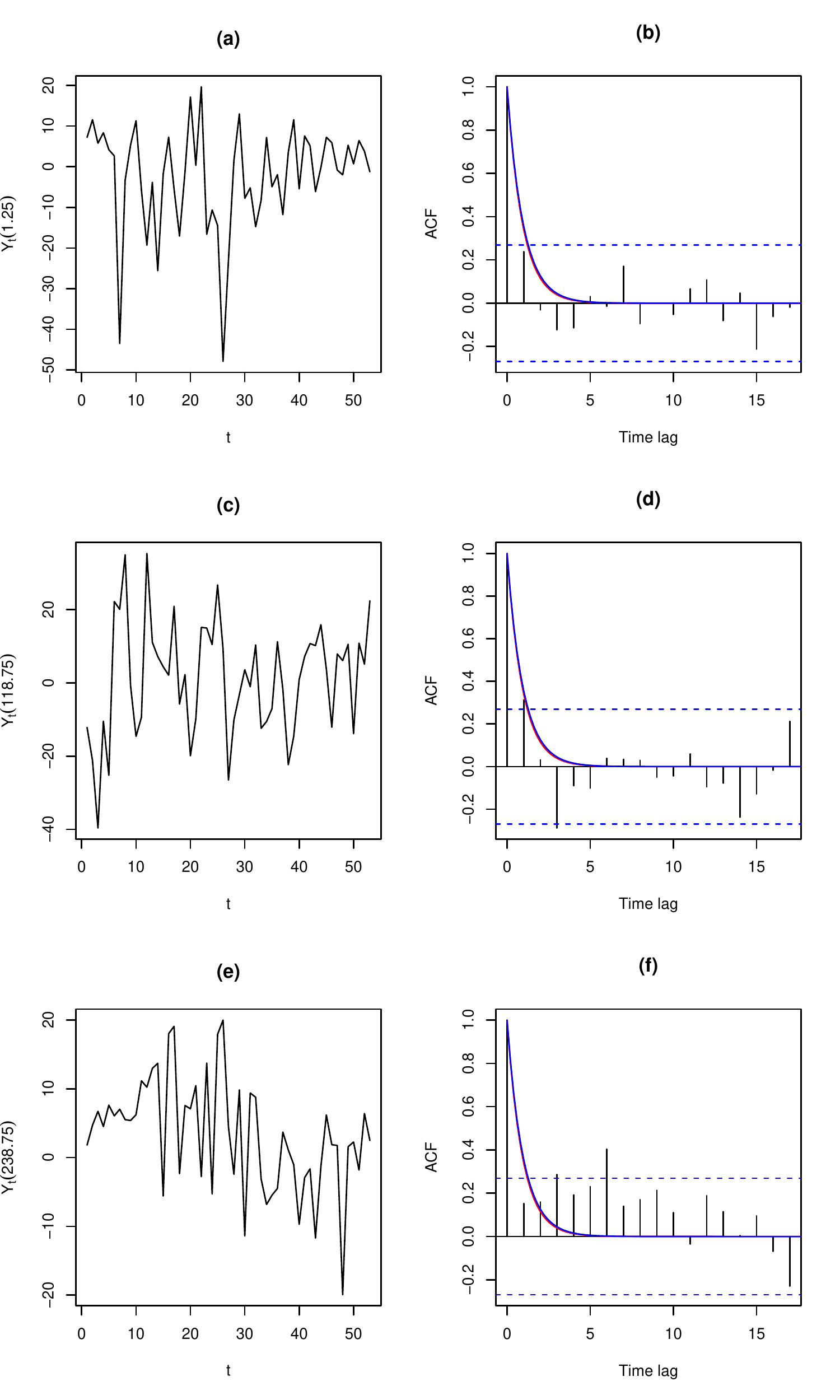}
\end{figure}

\begin{figure}[tbp]
\centering
\caption{Properties in space for fixed temporal locations of the radiation anomaly data. Spatial series and ACF plots for: (a)-(b) $Y_{1}(x)$; (c)-(d) $Y_{26}(x)$; (e)-(f) $Y_{53}(x)$. Each spatial lag is of $2.5^{\circ}$E.  In the ACF plots, the black vertical bars denote the empirical autocorrelation values, while the red and blue curves represent the ACFs arising from the moments-matching estimation and the least-squares approach respectively.}
\label{fig:LWRadTSs}
\includegraphics[width = 5.2in, height = 8.8in, trim = 0.2in 0.2in 0.2in 0in]{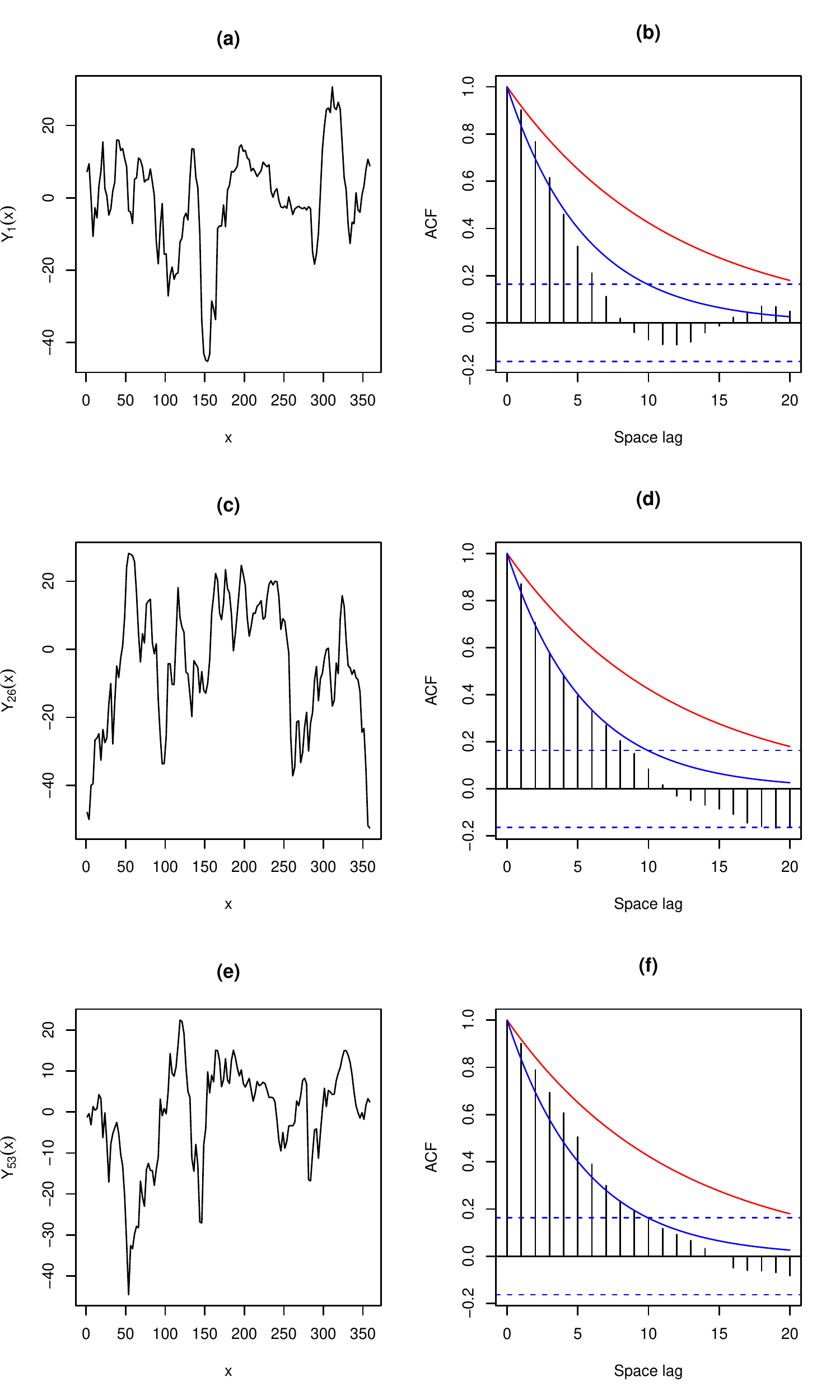}
\end{figure}

\clearpage

We continue with the LS estimates for $\lambda$ and $c$, and calculate the empirical cumulants of the L\'evy seed, $\hat{\kappa}_{l}(L')$. Since we have both positive and negative anomalies, we consider the Gaussian and NIG models. For the Gaussian basis, we have $\hat{\mu} = -0.00598$ and $\hat{\tau} = 5.827$; for the NIG basis, we have $\hat{\alpha} = 0.0765$, $\hat{\beta} = -0.0260$, $\hat{\delta} = 2.161$ and $\hat{\mu} = 0.775$. 
\\
In order to assess the goodness-of-fit of the marginal distribution, we simulate from the fitted models using the DG algorithm. We set $\triangle_{t} = 0.1$, $\triangle_{x} = \hat{c}\triangle_{t}$, $n = 219$, $m = 521$, $p = 300$ and $q = 30$. Figure \ref{fig:DenCompare} shows the densities and the logarithms of the densities computed from the data (in black), $100$ simulations of the fitted Gaussian model (in red) and $100$ simulations of the fitted NIG model (in blue). From Figure \ref{fig:DenCompare}, the model with the NIG basis seems to do a better job at capturing the peak and tail behaviour of the process. Note that comparing the extreme tail behaviour is not very informative as our OLR data does not contain many observations with extreme values.
\\
Using the DG algorithm, we can simulate from our fitted models to compare the data generated to the original. The heat plots of the simulated data from one run are placed alongside the heat plots of the OLR data in Figures \ref{fig:GFittedHeat} and \ref{fig:NIGFittedHeat}. Note that alternate coordinates of the simulated data were used to create these plots so as to avoid the missing values. Although we cannot judge the appropriateness of the fitted models based on individual runs, insight can be obtained from the comparison. As diagonal lines seem to outline diamond-like structures in the simulated data, one can say that the fitted canonical $\mathrm{OU}_{\wedge}$ processes give linear approximations to the spatio-temporal dependencies observed in the OLR phenomena. The simulated data also exhibit clusters of high and low values, a key feature of the original data. The heat plots from nine other simulation runs can be found in Section 5 of the supplementary material, \cite{NV2015}.

\subsection{Prediction under the Gaussian assumption}

In the case of the Gaussian canonical $\mathrm{OU}_{\wedge}$ process, predictive distributions at new space-time locations can be computed:
\\
\begin{thm}
Let $Y_{t}(x)$ be a Gaussian canonical $\mathrm{OU}_{\wedge}$ process with rate parameter $\lambda$, shape parameter $c$, and whose L\'evy seed $L'$ has mean and standard deviation $\mu$ and $\tau$. Suppose that we have $n$ observations at different space-time locations $\{Y_{i} = Y_{t_{i}}(x_{i}): i = 1, \dots, n\}$, and we want to predict $Y_{0} = Y_{t_{0}}(x_{0})$, the field value at a new space-time location. \\
Let $\mathbf{Y} = \left(Y_{1}, \dots, Y_{n}\right)$ and $v = (x, t)$. We define $r_{Y}(v_{0}, \mathbf{v})$ to be the $1\times m$ correlation vector between $Y(v_{0})$ and $\mathbf{Y}$ and let $R_{Y}$ to be the $m\times m$ correlation matrix for $\mathbf{Y}$ such that the covariance matrix of $(Y_{0}, \mathbf{Y})$:
\begin{equation*}
\Sigma = \frac{c\tau^{2}}{2\lambda^{2}}\begin{pmatrix} 1 & r_{Y}(v_{0}, \mathbf{v}) \\
r_{Y}(v_{0}, \mathbf{v})^{T} & R_{Y} \end{pmatrix}.
\end{equation*}
With $\mathbf{1}_{n}$ denoting the $n$-dimensional column vector of ones, the predictive distribution of $Y_{0}|\mathbf{Y}$ is $N(\mu^{*}, \Sigma^{*})$ where:
\begin{equation*}
\mu^{*} = \mathbb{E}\left[Y_{0}|\mathbf{Y}\right] = \frac{2c\mu}{\lambda^{2}} + r_{Y}(v_{0}, \mathbf{v})R_{Y}^{-1}\left(\mathbf{Y} - \frac{2c\mu}{\lambda^{2}}\mathbf{1}_{n}\right) \text{, and } \Sigma^{*} = \Var\left[Y_{0}|\mathbf{Y}\right] = \frac{c\tau^{2}}{2\lambda^{2}}\left(1 - r_{Y}(v_{0}, \mathbf{v})R_{Y}^{-1}r_{Y}(v_{0}, \mathbf{v})^{T}\right).
\end{equation*}
$\mu^{*}$ and $\Sigma^{*}$ can be seen as the best predictor and its prediction error variance respectively. 
\end{thm}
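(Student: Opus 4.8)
The plan is to reduce the statement to the classical formula for conditioning one coordinate of a multivariate Gaussian vector on the remaining coordinates. The only genuine content is therefore to verify that the augmented vector $(Y_0, \mathbf{Y}) = (Y_{t_0}(x_0), Y_{t_1}(x_1), \dots, Y_{t_n}(x_n))$ is jointly Gaussian with exactly the mean vector $\frac{2c\mu}{\lambda^2}\mathbf{1}_{n+1}$ and the covariance matrix $\Sigma$ displayed in the statement; once this is in place, the predictive mean and variance drop out of a direct block-matrix computation.

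First I would establish joint normality together with the first two moments in a single step by invoking the joint cumulant generating function computed in Example \ref{eg:GJCGF}. Applying that result to the $n+1$ sites $(x_0, t_0), \dots, (x_n, t_n)$ and reading the JCGF at the evaluation point $\theta = 1$ yields
\[
\log\mathbb{E}\!\left[\exp\!\left(i\sum_{k=0}^n \theta_k Y_{t_k}(x_k)\right)\right] = i\frac{2c\mu}{\lambda^2}\sum_{k=0}^n \theta_k - \frac{1}{2}\,\frac{c\tau^2}{2\lambda^2}\sum_{j,k=0}^n \theta_j\theta_k \,\Corr\!\left[Y_{t_j}(x_j), Y_{t_k}(x_k)\right],
\]
where the correlation entries are the $\min$-expressions of (\ref{eqn:STCorr}). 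This is precisely the log characteristic function of a multivariate normal law with mean $\frac{2c\mu}{\lambda^2}\mathbf{1}_{n+1}$ and covariance $\frac{c\tau^2}{2\lambda^2}$ times the correlation matrix with unit diagonal, which in block form is the $\Sigma$ of the statement (the variance $\frac{c\tau^2}{2\lambda^2}$ being the one supplied by Example \ref{eg:GMean}). As an alternative route I could argue that any linear combination $\sum_k \theta_k Y_{t_k}(x_k)$ is itself a single stochastic integral of a deterministic kernel against the Gaussian basis, hence univariate Gaussian by Theorem \ref{thm:cgfform}; joint Gaussianity then follows by the Cram\'er--Wold device, with the moments read off from Example \ref{eg:GMean} and the correlation formula (\ref{eqn:GCorr}).

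With joint normality secured, I would apply the standard conditioning formula. Partitioning the mean as $(\frac{2c\mu}{\lambda^2}, \frac{2c\mu}{\lambda^2}\mathbf{1}_n)$ and the covariance into the blocks $\Sigma_{00} = \frac{c\tau^2}{2\lambda^2}$, $\Sigma_{0\mathbf{v}} = \frac{c\tau^2}{2\lambda^2}\, r_Y(v_0, \mathbf{v})$ and $\Sigma_{\mathbf{v}\mathbf{v}} = \frac{c\tau^2}{2\lambda^2}\, R_Y$, the conditional law of $Y_0$ given $\mathbf{Y}$ is normal with mean $\frac{2c\mu}{\lambda^2} + \Sigma_{0\mathbf{v}}\Sigma_{\mathbf{v}\mathbf{v}}^{-1}\bigl(\mathbf{Y} - \frac{2c\mu}{\lambda^2}\mathbf{1}_n\bigr)$ and variance $\Sigma_{00} - \Sigma_{0\mathbf{v}}\Sigma_{\mathbf{v}\mathbf{v}}^{-1}\Sigma_{\mathbf{v}0}$. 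The single point demanding care is the bookkeeping of the common scalar $\frac{c\tau^2}{2\lambda^2}$: in the mean it appears in both $\Sigma_{0\mathbf{v}}$ and $\Sigma_{\mathbf{v}\mathbf{v}}^{-1}$ and therefore cancels, leaving $r_Y(v_0, \mathbf{v})R_Y^{-1}$, whereas in the variance it survives as an overall prefactor, producing $\frac{c\tau^2}{2\lambda^2}\bigl(1 - r_Y(v_0, \mathbf{v})R_Y^{-1}r_Y(v_0, \mathbf{v})^{T}\bigr)$. I do not expect a substantive obstacle: the whole argument rests on identifying the joint law as Gaussian, which Example \ref{eg:GJCGF} already furnishes, and on the implicit assumption that $R_Y$ is invertible (equivalently, that the correlation matrix at distinct sites is positive definite), which makes the conditioning step well defined.
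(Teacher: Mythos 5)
Your proposal is correct and follows essentially the same route as the paper: both identify the joint law of $(Y_{0}, \mathbf{Y})$ as multivariate normal with mean $\frac{2c\mu}{\lambda^{2}}\mathbf{1}_{n+1}$ and covariance $\frac{c\tau^{2}}{2\lambda^{2}}$ times the $\min$-correlation matrix by invoking the explicit JCGF of Example \ref{eg:GJCGF}, and then apply the standard Gaussian conditioning formula. Your write-up merely makes explicit two points the paper leaves implicit (the cancellation of the scalar prefactor in the conditional mean and the invertibility of $R_{Y}$), which is a reasonable elaboration rather than a different argument.
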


\begin{proof}
 From Example \ref{eg:GJCGF}, we have an explicit expression for the JCGF of $\left(Y_{0}, \mathbf{Y}\right)$: 
\begin{equation*}
C\{\theta \ddagger \left(Y_{0}, \mathbf{Y}\right)\}= i \frac{2c\mu}{\lambda^{2}}\sum_{i = 0}^{n}\theta_{i}   -\sum_{i, j = 0}^{n}\frac{1}{2}\tau^{2}\theta_{i}\theta_{j}\frac{c}{2\lambda^{2}} \min\left(\exp\left(-\lambda |t_{i} - t_{j}|\right), \exp\left(-\frac{\lambda|x_{i} - x_{j}|}{c}\right)\right).
\end{equation*}
This means that $\left(Y_{0}, \mathbf{Y}\right) \sim N(\tilde{\mu}, \Sigma)$ where $\tilde{\mu} = 2c\mu\lambda^{-2}\mathbf{1}_{n+1}$ and $\Sigma$ is a positive-semidefinite and symmetric covariance matrix with $\Sigma_{ij} = c\tau^{2}\left(2\lambda^{2}\right)^{-1}\min\left(\exp\left(-\lambda |t_{i} - t_{j}|\right), \exp\left(-\lambda|x_{i} - x_{j}|/c\right)\right)$. By using conditioning arguments on a multivariate normal distribution, we obtain the desired result. 
\end{proof}

In the preceding analysis, the $\mathrm{OU}_{\wedge}$ process was used directly as a spatio-temporal model. Alternatively, in line with its origins, it can be used as a model for spatio-temporal stochastic volatility in other modelling frameworks. For example, in the setting of a dynamic Cox process, it can be used, in place of the spatio-temporal $\mathrm{OU}$ process in \cite{BD2001}, to model the intensity process. Slightly modified versions of $\mathrm{OU}_{\wedge}$ processes have also been used in multiplicative models for turbulent casades \cite[]{BN2004, SBE2005, HS2013}. These exponential spatio-temporal OU models have been shown to reproduce the scaling behaviour of energy dissipation.

\section{Conclusions and Outlook}
\label{sec:outlook}

The spatio-temporal Ornstein-Uhlenbeck ($\mathrm{OU}$) process which we introduced and referred to as the $\mathrm{OU}_{\wedge}$ process, is an example of how we can extend models in time to space-time and yet retain desirable features of the original model. In Section \ref{sec:prop}, we have shown that the properties of stationarity, exponentially decaying ACFs and ergodicity transfer from the original temporal $\mathrm{OU}$ process to the $\mathrm{OU}_{\wedge}$ process. $\mathrm{OU}$ and $\mathrm{OU}_{\wedge}$ processes can be further linked through their equality in law when the spatial parameters are fixed. Time-changing, which has been done in the study of $\mathrm{OU}$ processes to remove the distributional dependence on the rate parameter $\lambda$, can also be applied to $\mathrm{OU}_{\wedge}$ processes. The additional spatial dimension of the $\mathrm{OU}_{\wedge}$ process, however, makes it more flexible than the $\mathrm{OU}$ process. Thus, there is more than one way to establish a time change. The greater flexibility of $\mathrm{OU}_{\wedge}$ processes is also seen in our ability to obtain different spatial autocorrelations by choosing different ambit sets.
\\
In the second part of our study, we tackled the problems of simulation and inference. We created two simulation algorithms for the canonical $\mathrm{OU}_{\wedge}$ process: one generating values on a rectangular grid (RG) and the other generating values on a diamond grid (DG). Concurrently, using the property of stationarity and explicit forms for the autocorrelations, we developed a moments-matching (MM) estimation method for $\mathrm{OU}_{\wedge}$ processes. A least-squares (LS) extension where the normalised variograms were evaluated at more than one lag to estimate the rate and shape parameters ($\lambda$ and $c$) was also considered. 
\\
In our analysis, we showed that the mean squared errors (MSEs) of our simulation algorithms decreased with grid size as expected. However, the errors only converge to zero if the kernel truncation range simultaneously increases to infinity. It was also seen that the DG algorithm consistently incurs less MSE than the RG algorithm if we define the grid size to be the distance to the nearest spatial or temporal neighbour. The superiority of the DG algorithm in simulating the canonical process is also seen through its ability to reflect the spatio-temporal dependencies more accurately. Reasonable results for $c$, which determines the spatial autocorrelation structure, were obtained from both inference methods for the data generated on a DG. On the other hand, there was consistent underestimation of $c$ when we applied the MM method to data generated on the RG. In our study, we chose the DG algorithm to mimic the edges of the triangular ambit sets. It may be trickier to find an appropriate grid shape for more general $\mathrm{OU}_{\wedge}$ processes. 
\\
The MM and LS inference methods, which we developed, have been proven to result in consistent parameter estimates. In our simulation study, it was also shown that the LS adaptation exhibits a bias-variance trade-off. This is obvious in the case of the RG data as the $c$ estimates obtained were centred about their true values but with higher standard deviations.  
\\
After evaluating our simulation and inference techniques, we looked at how $\mathrm{OU}_{\wedge}$ processes can be used in practice. As an empirical illustration, we applied our estimation methods to radiation anomaly data from the International Research Institute for Climate and Society. Good fits of the ACFs were obtained using the LS approach. This suggests that despite the bias-variance trade-off, the LS adaptation of the MM method may be preferred over the original in an empirical context. On another note, the good fits of the ACFs highlighted the relevance of our class of $\mathrm{OU}_{\wedge}$ processes for empirical work and the use of such processes as modelling tools. Explicit formulae for the best predictor and the prediction error variance in the case of a Gaussian canonical $\mathrm{OU}_{\wedge}$ process are also provided.
\\
In this paper, we studied inference methods based on moments of the $\mathrm{OU}_{\wedge}$ process. One topic for future research would be to design other inference techniques. In Example \ref{eg:GJCGF}, we worked out the explicit form of the joint cumulant generating function for points of our canonical $\mathrm{OU}_{\wedge}$ process when the L\'evy basis is Gaussian. This can be used to identify a multivariate normal distribution and its likelihood. Thus, we can conduct maximum likelihood estimation. If no further considerations are made, the optimisation procedure would involve inverting high dimensional matrices and would be computationally expensive. It would be interesting to develop ways to bypass these operations through, for example, pair-wise likelihood or composite likelihoods. Pseudo-likelihood methods can also be considered for non-Gaussian bases.
\\
Another way forward from our research would be to study the theory, simulation and inference techniques for $\mathrm{supOU}_{\wedge}$ processes. These processes are obtained by randomising the memory parameter in the $\mathrm{OU}_{\wedge}$ process, and would allow us to model long memory instead of exponentially decaying autocorrelation.  

\paragraph{Supplementary Material}

Supplementary material for this paper can be found in \cite{NV2015}. This contains a summary of the integration theory in \cite{RR1989}, the details behind Examples \ref{eg:STcorr} and \ref{eg:MSE}, as well as additional results from our simulation and empirical studies. 

\section*{Acknowledgements}

We would like to thank the editor, the associate editor and two anonymous referees for their constructive feedback. Much gratitude also goes to Ole E. Barndorff-Nielsen for his helpful comments and Emil Hedevang for helpful discussion regarding discrete convolution codes. M. Nguyen would like to express her appreciation to Imperial College for her PhD scholarship which supported this research. A.E.D. Veraart acknowledges financial support by a Marie Curie FP7 Integration Grant within the 7th European Union Framework Programme.

\section*{Appendix: Proofs of Theorems}

\begin{proof}[Proof of Theorem \ref{thm:OUh-OU}]
We first show that the given conditions make $Y_{t}(\mathbf{x}) \stackrel{d}{=} Z_{t}$ for fixed $t$ and $\mathbf{x}$. From Theorem \ref{thm:cgfform}, we know that the LK triplet of $Y_{t}(x)$ is $(a_{Y}, b_{Y}, \nu_{Y})$ such that $a_{Y} =  \int_{A} e^{\lambda w}  \left[a- \int_{\mathbb{R}}z\mathbf{1}_{1<|z|\leq e^{-\lambda w}} \nu(\mathrm{d}z)\right]\mathrm{d}\mathbf{u}\mathrm{d}w$, $b_{Y} =  \int_{A} b e^{2\lambda w} \mathrm{d}\mathbf{u}\mathrm{d}w$ and $\nu_{Y}(\mathrm{d}y) = \int_{A}\nu(e^{-\lambda w}\mathrm{d} y)\mathrm{d}\mathbf{u}\mathrm{d}w$. Similarly, one can show that the LK triplet of $Z_{t}$ is $(a_{Z}, b_{Z}, \nu_{Z})$ where $a_{Z} =  \tilde{a}\lambda^{-1} - \int_{-\infty}^{0} e^{\lambda w} \int_{\mathbb{R}}z\mathbf{1}_{1<|z|\leq e^{-\lambda w}} \tilde{\nu}(\mathrm{d}z)\mathrm{d}w$, $b_{Z} =  \tilde{b}(2\lambda)^{-1}$ and $\nu_{Z}(\mathrm{d}y) = \int_{-\infty}^{0}\tilde{\nu}(e^{-\lambda w}\mathrm{d}y)\mathrm{d}w$. The given conditions on $\tilde{a}$, $\tilde{b}$ and $\tilde{\nu}$ imply that $a_{Y} = a_{Z}$, $b_{Y} = b_{Z}$ and $\nu_{Y} = \nu_{Z}$. Thus, $Y_{t}(\mathbf{x}) \stackrel{d}{=} Z_{t}$.
\\
Next, we prove that the chosen $\tilde{\nu}$ is a valid L\'evy measure, i.e. $\int_{\mathbb{R}}\min(1, z^{2}) \tilde{\nu}(\mathrm{d}z) < \infty$. Since $Y_{t}(\mathbf{x})$ is a well-defined $\mathrm{OU}_{\wedge}$ process:
\begin{align*}
\infty &> 2\lambda\int_{\mathbb{R}} \min(1, z^{2})\nu_{Y}(\mathrm{d}z) = 2\lambda\int_{-\infty}^{0}\int_{\mathbb{R}} \min(1, z^{2}) \tilde{\nu}(e^{-\lambda w}\mathrm{d}z) \mathrm{d}w \\
&> 2\lambda\int_{-\infty}^{0}\int_{\mathbb{R}}  \min(\exp(2\lambda w), z^{2}) \tilde{\nu}(e^{-\lambda w}\mathrm{d}z) \mathrm{d}w = \int_{\mathbb{R}} \min(1,y^{2}) \tilde{\nu}(\mathrm{d}y),
\end{align*}
where $y = \exp(-\lambda w)z$ and the first equality holds via Fubini's theorem.
\\
We now show that $Y_{t}(\mathbf{x}) \stackrel{d}{=} Z_{t}$ as a process in $t$ for fixed $\mathbf{x}$. Let $n\in\mathbb{N}$ and let $t_{1} < \dots < t_{n}$ denote arbitrary time points. Since both processes $\{Y_{t}(\mathbf{x})\}_{t\in\mathbb{R}}$ and $\{Z_{t}\}_{t\in\mathbb{R}}$ are Markov, their finite dimensional distributions can be written as:
\begin{align}
f\left(Y_{t_{1}}(\mathbf{x}), \dots, Y_{t_{n}}(\mathbf{x})\right) &= \prod\limits_{i = 2}^{n} f\left(Y_{t_{i}}(\mathbf{x})|Y_{t_{i-1}}(\mathbf{x})\right) \times f\left(Y_{t_{1}}(\mathbf{x})\right), \label{eqn:jointY}\\
\text{and }f\left(Z_{t_{1}}, \dots, Z_{t_{n}}\right) &= \prod\limits_{i = 2}^{n} f\left(Z_{t_{i}}|Z_{t_{i-1}}\right) \times f\left(Z_{t_{1}}\right). \label{eqn:jointZ}
\end{align}
Here, $f\left(Y_{t_{i}}|Y_{t_{i}}\right)$ denotes the conditional distribution of $Y_{t_{i}}$ given $Y_{t_{i-1}}$ and $f\left(Y_{t_{1}}\right)$ denotes the marginal distribution of $Y_{t_{1}}$. Analogous notation has been used for $Z_{t}$. We already know that $Y_{t_{1}}(\mathbf{x}) \stackrel{d}{=} Z_{t_{1}}$. So, to show that the functions (\ref{eqn:jointY}) and (\ref{eqn:jointZ}) are the same, we need to prove that the conditional distributions $Y_{t_{i}}|Y_{t_{i-1}}$ and  $Z_{t_{i}}|Z_{t_{i-1}}$ are the same for arbitrary $i \in \{2, \dots, n\}$.
\\
To motivate this, we work with $Y_{t_{2}}(\mathbf{x})$ and $Y_{t_{1}}(\mathbf{x})$. From Lemma \ref{Lem:Yt1t2}, $Y_{t_{2}}(\mathbf{x}) = e^{-\lambda(t_{2}-t_{1})} Y_{t_{1}}(\mathbf{x}) + \widetilde{U}(t_{1}, t_{2}, \mathbf{x})$ where  $\widetilde{U}(t_{1}, t_{2}, \mathbf{x}) = \int_{A_{t_{2}}(\mathbf{x})\backslash A_{t_{1}}(\mathbf{x})} e^{-\lambda(t_{2}-s)} L(\mathrm{d}\xi, \mathrm{d}s)$ is independent of $Y_{t_{1}}(\mathbf{x})$. Similarly, we find that $Z_{t_{2}}= e^{-\lambda(t_{2}-t_{1})} Z_{t_{1}}(\mathbf{x}) + \widetilde{V}(t_{1}, t_{2})$ where  $\widetilde{V}(t_{1}, t_{2}) = \int_{t_{1}}^{t_{2}} e^{-\lambda(t_{2}-s)} \tilde{L}(\mathrm{d}s)$ is independent of $Z_{t_{1}}$. Since $Y_{t_{2}}(\mathbf{x}) \stackrel{d}{=} Z_{t_{2}}$ and $Y_{t_{1}}(\mathbf{x}) \stackrel{d}{=} Z_{t_{1}}$, it follows that $\widetilde{U}(t_{1}, t_{2}, \mathbf{x}) \stackrel{d}{=} \widetilde{V}(t_{1}, t_{2})$ and $Y_{t_{2}}|Y_{t_{1}} \stackrel{d}{=} Z_{t_{2}}|Z_{t_{1}}$. The proof is completed by repeating the same argument for the remaining conditional distributions. 
\end{proof}

\begin{proof}[Proof of Theorem \ref{thm:tcform1}]
From Theorem \ref{thm:cgfform}, we have:
\begin{align}
C\{\theta \ddagger Y_{t}(x)\} &= \int_{-\infty}^{t}\int_{x - c|t-s|^{q}}^{x + c|t-s|^{q}} C\{\theta\exp(-\lambda(t-s)) \ddagger L'\} \lambda^{q+1} \mathrm{d}\xi \mathrm{d}s \nonumber \\
&= \int_{0}^{\infty} \frac{2cw^{q}}{\lambda^{q}} C\{\theta\exp(-w) \ddagger L'\} \lambda^{q+1} \left(\frac{1}{\lambda}\right) \mathrm{d}w \text{, by using } w = \lambda(t-s), \nonumber \\
&= \int_{0}^{\infty} 2cw^{q} C\{\theta\exp(-w) \ddagger L'\} \mathrm{d}w, \text{ which does not depend on } \lambda. \label{eqn:TTCOUcc}
\end{align}
\end{proof}

\begin{proof}[Proof of Theorem \ref{thm:tcform2}]
We have:
\begin{align}
C\{\theta \ddagger Y_{t}(x)\} &= \int_{-\infty}^{t}\int_{x - g(\lambda|t-s|)}^{x + g(\lambda|t-s|)} C\{\theta\exp(-\lambda(t-s)) \ddagger L'\} \lambda \mathrm{d}\xi \mathrm{d}s \nonumber \\
&= \int_{0}^{\infty} 2g(w) C\{\theta\exp(-w) \ddagger L'\} \lambda \left(\frac{1}{\lambda}\right) \mathrm{d}w \text{, by using } w = \lambda(t-s), \nonumber \\
&= \int_{0}^{\infty} 2g(w) C\{\theta\exp(-w) \ddagger L'\} \mathrm{d}w, \text{ which does not depend on } \lambda. \label{eqn:TTCOUcc2}
\end{align}
\end{proof}

\begin{proof}[Proof of Theorem \ref{thm:tcourelation}]
This proof is similar to that of Theorem \ref{thm:OUh-OU}. From the proofs of Theorems \ref{thm:tcform1} and \ref{thm:tcform2}, $C\{\theta \ddagger Y^{1}_{t}(x)\} = \int_{0}^{\infty}2cw^{q}C\{\theta\exp\left(-w\right) \ddagger L_{1}'\}\mathrm{d}w$ and $C\{\theta \ddagger Y^{2}_{t}(x)\} = \int_{0}^{\infty}2g(w)C\{\theta\exp\left(-w\right) \ddagger L_{2}'\}\mathrm{d}w$ respectively. The conditions for $(\tilde{a}_{1}, \tilde{b}_{1}, \tilde{\nu}_{1})$ and $(\tilde{a}_{2}, \tilde{b}_{2}, \tilde{\nu}_{2})$ are found by equating the LK triplets of $Y^{1}_{t}(x)$ and $Z^{1}_{t}$, and those of $Y^{2}_{t}(x)$ and $Z^{2}_{t}$. In both cases, the arguments in the proof of Theorem \ref{thm:OUh-OU} can be used to establish the validity of the chosen L\'evy measure and extend the equality in law for fixed $t$ to equality in law as processes. 
\end{proof} 

\begin{proof}[Proof of Theorem \ref{thm:ergodicity}]
A one-dimensional adaptation of Theorem 3.5 in \cite{FS2013} says that the mixed moving average $X_{t} := \int_{\mathbb{R}}\int_{S} f(\zeta, t-s) L(\mathrm{d}s, \mathrm{d}\zeta)$ is mixing. Here, $L$ is a real-valued L\'evy basis on $S \times \mathbb{R}$, $\zeta \in S$ is a varying parameter and $f:S\times\mathbb{R} \rightarrow \mathbb{R}$ is a measurable function. Using $f = f_{x}$ for fixed $x$:
\begin{equation*}
Y_{t}(x) = \int_{\mathbb{R}}\int_{\mathbb{R}} \mathbf{1}_{A_{t}(x)}(\xi, s)\exp(-\lambda(t-s)) L(\mathrm{d}s, \mathrm{d}\xi) = \int_{\mathbb{R}}\int_{S} f_{x}(\zeta, t-s) L(\mathrm{d}s, \mathrm{d}\zeta), 
\end{equation*}
where $S = \mathbb{R}, \text{ } \zeta = x - \xi, \text{ and } f_{x}(\zeta, t-s) = \mathbf{1}_{t-s>0}\mathbf{1}_{-g(t-s)\leq \zeta \leq g(t-s)}\exp(-\lambda(t-s))$. So, $\{Y_{t}(x)\}_{t \in \mathbb{R}}$ is mixing. As mixing implies ergodicity, the process is ergodic \cite[]{FS2013}.
\\
Since $x$ can act in place of $t$, we can also use $f = f_{t}$ for fixed $t$ to obtain:
\begin{align*}
Y_{t}(x) &= \int_{\mathbb{R}}\int_{\mathbb{R}} \mathbf{1}_{A_{t}(x)}(\xi, s)\exp(-\lambda(t-s)) L(\mathrm{d}s, \mathrm{d}\xi) = \int_{\mathbb{R}}\int_{S} f_{t}(\zeta, x-\xi) L(\mathrm{d}\xi, \mathrm{d}\zeta),
\end{align*} 
is mixing. Now, $S = \mathbb{R}$, $\zeta = t - s$, and $f_{t}(\zeta, x - \xi) = \mathbf{1}_{\zeta>0}\mathbf{1}_{-g(\zeta)\leq x-\xi \leq g(\zeta)}\exp(-\lambda \zeta)$.
\end{proof}

\begin{proof}[Proof of Theorems \ref{thm:simerror} and \ref{thm:simerrorDG}]
Using the DC algorithm, simulating from $Z$ is the same as simulating from \\
$\int_{\mathbb{R}\times\mathbb{R}} h(x, t, \xi, s) L(\mathrm{d}\xi, \mathrm{d}s)$ since $L$ is a homogeneous L\'evy basis. We write $f = k - h$. From the \cite{RR1989} theory (via the CGF), if the stochastic integral of $f$ over $\mathbb{R}\times\mathbb{R}$ is well-defined, then: \begin{equation*}
\mathbb{E}\left[\left(\int_{\mathbb{R}\times\mathbb{R}}f\mathrm{d}L\right)^{2}\right] = \left(\int_{\mathbb{R}\times\mathbb{R}}f\mathbb{E}\left[L'\right]\mathrm{d}\xi\mathrm{d}s\right)^{2} + \int_{\mathbb{R}\times\mathbb{R}}f^{2}\Var\left[L'\right]\mathrm{d}\xi\mathrm{d}s.
\end{equation*}
\end{proof}

\begin{proof}[Proof of Theorem \ref{thm:consist}]
From the ergodicity of $\{Y_{t}(x)\}_{t\in\mathbb{R}}$ and the stationarity of the process, $\hat{\kappa}_{1}(Y_{t}(x)) = D^{-1}\sum_{i = 1}^{D}Y_{\mathbf{v}_{i}}\stackrel{p}{\rightarrow} \mathbb{E}\left[Y_{t}(x)\right]= \kappa_{1}(Y_{t}(x))$ as $D\rightarrow \infty$. Similarly, we can show that all the empirical cumulants are consistent.
\\
Now, let $\{Z_{t}\}_{t\in\mathbb{R}} = g(\{Y_{t}(x), Y_{t-d_{t}}(x), \dots\})$ where $g:\mathbb{R}^{\infty}\rightarrow \mathbb{R}$ is a measurable function. It can be shown that $\{Z_{t}\}_{t\in\mathbb{R}}$ is an ergodic process (see for example, Theorem 36.4 on page 495 of \cite{Billing1995}). Since $\hat{\gamma}_{x}(d_{t}) = (m-1)^{-1}\sum_{k = 1}^{m-1}\left(Y_{t_{k + 1}}(x) - Y_{t_{k}}(x)\right)^{2}$ is a measurable function of $\{Y_{t_{m}}(x), Y_{t_{m-1}}(x), \dots\}$, and $Y_{t}(x)$ is stationary, $\hat{\gamma}_{x}(d_{t}) \stackrel{p}{\rightarrow} \gamma^{(T)}(d_{t})$ as $m\rightarrow\infty$. Thus, $\hat{\gamma}^{(T)}(d_{t}) = n^{-1}\sum_{j = 1}^{n} \hat{\gamma}_{x_{j}}(d_{t}) \stackrel{p}{\rightarrow} \gamma^{(T)}(d_{t})$. Likewise, $\hat{\gamma}^{(S)}(d_{x}) \stackrel{p}{\rightarrow} \gamma^{(S)}(d_{x})$ as $n \rightarrow \infty$. Under increasing domain asymptotics where the number of spatial and temporal pairs increase, $\hat{\gamma}^{(T)}(d_{t}) \stackrel{p}{\rightarrow} \gamma^{(T)}(d_{t})$ and $\hat{\gamma}^{(S)}(d_{x}) \stackrel{p}{\rightarrow} \gamma^{(S)}(d_{x})$.
\\
Using Slutsky's theorem and the consistency of $\hat{\kappa}_{2}$, the empirical normalised variograms are consistent. We deduce that the MM estimators of $\lambda$, $c$, and the distributional parameters are consistent via the Continuous Mapping Theorem.
\\
Once we have the consistency of our empirical normalised variograms, we use an edited version of Theorem 3.1 on page 70 of \cite{LLC2002} to deduce that all our LS estimators are consistent. Specifically, we replace the empirical and theoretical variograms in the Theorem by their normalised forms. Since Condition (C.1) obviously holds while Conditions (C.2) and (C.3) hold from the exponential structure of our normalised variograms and the choice of the identity matrix of the weight matrix, all the arguments in the proof apply.
\end{proof}

\bibliographystyle{agsm} 
\bibliography{refs}

\end{document}